\newtheorem{thm}{Theorem}[section]
\newtheorem{lemma}[thm]{Lemma}
\newtheorem{cor}[thm]{Corollary}
\newtheorem{remark}[thm]{Remark}
\newtheorem{example}[thm]{Example}
\newcommand{\bmb}{\left( \begin{array}{rr}}
\newcommand{\enm}{\end{array}\right)}
\newcommand{\cQ}{\mathcal Q}
\newcommand{\cT}{\mathcal T}
\newcommand{\Z}{{\mathbb Z}}
\newcommand{\bu}{{\mathbf u}}
\newcommand{\bv}{{\mathbf v}}
\newcommand{\al}{{\alpha}}
\numberwithin{equation}{section}
\begin{document}

\title[Arctic curves of the Reflecting Boundary 6V and of the 20V models]{Arctic curves of the Reflecting Boundary Six Vertex and of the Twenty Vertex models}
\author{Philippe Di Francesco} 
\address{$\!\!\!\!\!\!\!\!\!\!\!\!$ Department of Mathematics, University of Illinois, Urbana, IL 61821, U.S.A. 
and \break
Institut de physique th\'eorique, Universit\'e Paris Saclay, 
CEA, CNRS, F-91191 Gif-sur-Yvette, FRANCE\hfill
\break  e-mail: philippe@illinois.edu
}

\begin{abstract}
We apply the Tangent Method of Colomo and Sportiello to predict the arctic curves of the Six Vertex model with reflecting (U-turn) boundary and of the related Twenty Vertex model with suitable domain wall boundary conditions on a quadrangle, both in their Disordered phase. 
\end{abstract}

\maketitle
\date{\today}
\tableofcontents

\section{Introduction}
\medskip
\subsection{Arctic phenomenon}

Geometrically constrained two-dimensional statistical models are known to display the so-called arctic phenomenon in the presence of suitable boundary conditions. This includes ``free fermion" dimer models, where typically dimers choose a preferred crystalline orientation near boundaries while they tend to be disordered (liquid-like) away from the boundaries: the arctic phenomenon is the formation of a sharp phase boundary as the domain is scaled by a large overall factor, the so-called arctic curve separating frozen crystalline from disordered liquid phases. The first observed instance of this phenomenon is the celebrated arctic circle arising in the domino tilings of the Aztec diamond \cite{JPS}, and a general theory was developed for dimers \cite{KO2,KOS}. The free fermion character of these models can be visualized in their formulation in terms of non-intersecting lattice paths, i.e. families of paths with fixed ends, subject to the condition that they share no vertex (i.e. avoid each-other), and can consequently be expressed in terms of free lattice fermions. A manifestation of the free fermion models is that their arctic curves are always analytic, and usually algebraic at ``rational" values of interaction parameters, such as in the uniformly weighted cases.

Beyond free fermions, the archetypical model for paths allowed to interact by ``kissing" i.e. sharing a vertex at which they bounce against each-other, is the Six Vertex (6V) model. The families of paths describing the model are called osculating paths.
With so-called Domain Wall Boundary Conditions (DWBC)
the 6V model exhibits an arctic phenomenon in its disordered phase, which was predicted via non-rigorous methods\cite{CP2010,CNP}, the latest of which being the Tangent Method introduced by Colomo and Sportiello \cite{COSPO}. The new feature arising from these studies is that the arctic curves are generically {\it no longer analytic}, but rather {\it piecewise analytic}. For instance, the arctic curve for large Alternating Sign Matrices (uniformly weighted 6V-DWBC) is made of four pieces of different ellipses as predicted in \cite{CP2010} and later proved in \cite{Aggar}. 

The Tangent Method was validated recently in a number of situations, mostly in free fermion situations \cite{CPS,DFLAP,PRarctic,DFGUI,DFG2,DFG3,CorKeat}. However, a simple transformation using the integrability of the models allowed to deduce from the 6V results the arctic curves for another model of osculating paths: the Twenty Vertex (20V) model with DWBC1,2 \cite{BDFG}. The 20V model is the triangular lattice version of the 6V model: in one formulation the configurations of the model are orientation assignments of all edges of the lattice, in such a way that the {\it ice rule} is obeyed at each vertex, namely that there be an equal number of edges pointing towards and outwards (2+2 for 6V, 3+3 for 20V). In \cite{DFGUI}, four possible variations around DWBC were considered for the 20V model, denoted DWBC1,2,3,4.
In the present paper, we will concentrate on the 20V-DWBC3 model on a quadrangle, which was recently shown to have the same number of configurations as domino tilings of the Aztec Triangle of suitable size \cite{DF20V}. The proof uses again the integrability of the model to relate its partition function to that of the 6V model with another type of DWBC, called U-turn, considered by Kuperberg in \cite{kuperberg2002symmetry}, and whose partition function has a nice determinantal form \cite{tsu,kuperberg2002symmetry}.

In this paper, we set the task of deriving the arctic curves for the U-turn 6V model, and as by-products, those of the 20V-DWBC3, and of the Domino Tiling of the Aztec Triangle.

%
%
%

\subsection{Arctic curves and the Tangent Method}

The systems we are considering in this note are all described in terms of osculating or non-intersecting paths, and are expected to display an arctic curve phenomenon.
The rough idea behind the Tangent Method is as follows. The $n$ paths describing the model's configurations have fixed starting and endpoints, and form a ``soup" whose boundary tends to a subset of the arctic curve. Indeed, this boundary is a solid/liquid separation between an empty phase and one with disordered path configurations. Consider the outermost path forming that boundary: if we displace the endpoint of this path to a point say $L$ outside of the original domain, the path will have to detach itself from the soup, and continue to its endpoint within a mostly empty space, once it gets away from the soup formed by the other paths,
where it is most likely to follow a geodesic (a line in all cases of this paper, due to a general argument of \cite{DFLAP}).
This geodesic is expected to be {\rm tangent} to the arctic curve in the large $n$ limit. The corresponding path is therefore  used as a probe into the arctic curve: the geodesic is determined by the point $L$  and the point $K$ at which it exits the original domain\footnote{Both points $K$ and $L$ scale linearly with the size $n$ so that a thermodynamic limit can be reached.}. 

The partition function $\Sigma_{n,L}$ of the new model is now a sum over the possible positions of $K$ of the product of two partition functions: (1) $Z_{n,K}$ the partition function of the $n$ osculating/non-intersecting paths on the original domain, in which the outer path is conditioned to end at point $K$ instead of its original endpoint. (2) $Y_{K,L}$ the partition function of a single path subject to the same weighting, in some empty space from the point $K$ to the new endpoint $L$.  The quantity $\Sigma_{n,L}=\sum_K Z_{n,K}\, Y_{K,L}$ is dominated at large $n$ by contributions from the most likely exit point $K(n,L)$. The arctic curve
is then recovered as the envelope of the family geodesics through $L$ and $K(n,L)$ for varying $L$ (in rescaled coordinates).

We see that the crucial ingredient in this method is the refined partition function $Z_{n,K}$ in which the outer path is conditioned to exit the domain at point $K$,
or rather its normalized version, the ``refined one-point function" $H_{n,K}=Z_{n,K}/Z_n$ where we have divided by the original partition function $Z_n$.  
Computing exactly the leading large $n,K,L$ asymptotics of $H_{n,K}$
and $Y_{K,L}$ leads to the determination of $K(n,L)$ by solving a steepest descent problem, and eventually to the arctic curve.

After revisiting the case of the 6V model for pedagogical purposes in Section \ref{6vsec}, 
we will apply the Tangent Method in Section \ref{6vpsec} to the case of the 6V' model on the $(2n-1)\times n$ rectangular grid (a simplified version of the U-turn 6V model), in Section \ref{20vsec}
to the case of the 20V model with DWBC3
on the quadrangle $\cQ_n$, and finally in Section \ref{DTsec} to the domino tilings of the Aztec triangle $\cT_n$.
Note that the Tangent Method was previously applied in \cite{PRarctic} to a particular ``free fermion" case of the U-turn 6V model, where the arctic curve is a half-circle: the results of Section \ref{6vsec} extend this to arbitrary values of the parameters.

\subsection{Outline of the paper and main results}

The paper is organized as follows.
In Section \ref{secmodels} we define the four models studied in this paper. These include: the 6V model with 
Domain Wall Boundary Conditions (DWBC), the 6V model with U-turn Boundary Conditions and the related 6V' model,
the 20V model with DWBC3 of Ref.~\cite{DF20V}, and finally the Domino Tiling problem of the Aztec Triangle introduced and studied in Refs.~\cite{DFGUI,DF20V}.  
We show in particular that all models are described by families of weighted osculating/non-intersecting paths. 
In Section \ref{sectan}, we 
describe the Tangent Method in general and how it applies to the determination of the arctic curves of our models.

The next sections are all organized in a similar way, and treat the various models. For each case, we first derive compact relations obeyed 
by the partition function and one-point function of the model, allowing for extracting asymptotic results. The latter are used to apply 
the Tangent Method, and finally obtain the arctic curves of the model.
While Section \ref{6vsec} revisits the 
known case of the 6V model with DWBC, as a pedagogical warmup, the remaining Sections provide new results:  
Section \ref{6vpsec} is about the 6V' model, Section \ref{20vsec} the 20V model with DWBC3, and 
Section \ref{DTsec} the Domino Tilings of the Aztec Triangle. We obtain arctic curves in all cases: Theorems \ref{6VNEthm},\ref{6VpNEthm},\ref{20VNEthm} and \ref{DTthm} cover respectively the cases of 6V,6V',20V and Domino Tilings.


We gather a few concluding remarks in Section \ref{seconc}.

\medskip
\noindent{\bf Acknowledgments.} We are thankful to G.A.P. Ribeiro for bringing 
Refs.~\cite{RIBKOR,PRarctic} to our attention.
We acknowledge support from the Morris and Gertrude Fine endowment, the NSF grant DMS18-02044, and the NSF RTG Grant DMS19-37241.
\bigskip

\section{Models, Paths and the Tangent Method}\label{modelsec}

\subsection{The models}\label{secmodels}

In this paper we consider 4 different models: three vertex models with particular Domain-Wall type boundary conditions (6 Vertex on a $n\times n$ square grid, 6 Vertex with U-turn boundaries on a $2n-1\times n$ rectangular grid,
and 20 Vertex on the quadrangle $\cQ_n$), and one model of domino tilings of the Aztec triangle $\cT_n$.

\subsubsection{6V model with DWBC}

\begin{figure}
\begin{center}
\includegraphics[width=14cm]{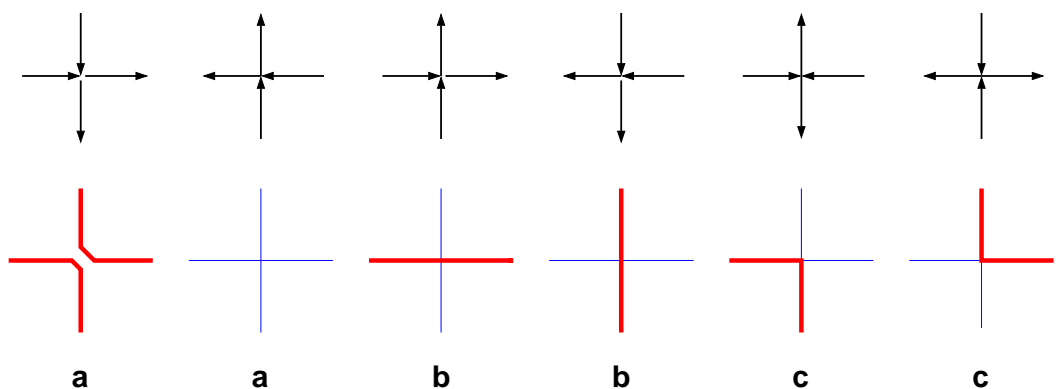}
\end{center}
\caption{\small The 6 vertex environments obeying the ice rule on the square lattice (top) and their osculating path reformulation (bottom). We have indicated the corresponding types a,b,c.}
\label{fig:six}
\end{figure}

\begin{figure}
\begin{center}
\includegraphics[width=14cm]{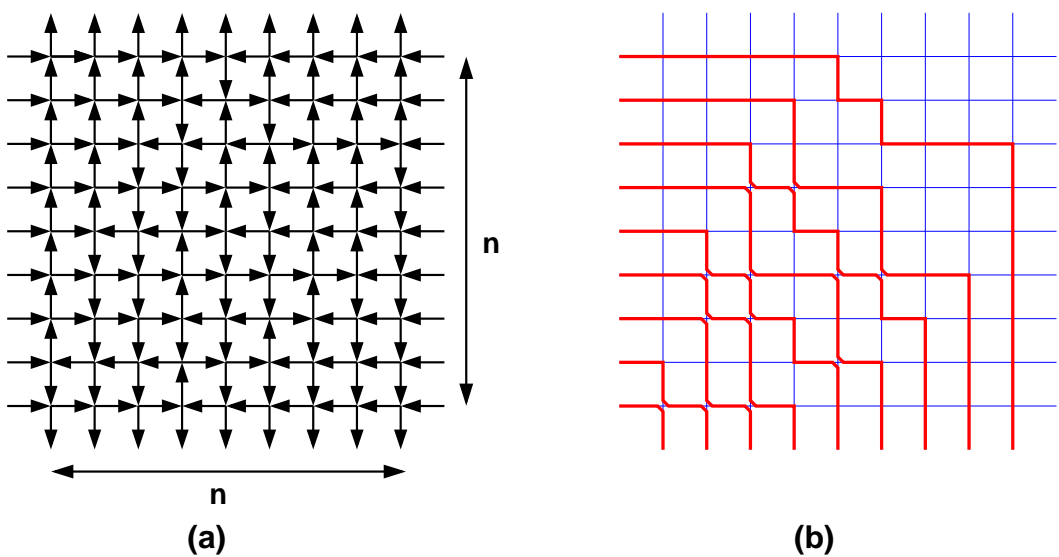}
\end{center}
\caption{\small (a) A sample configuration of the 6V model with DWBC (for $n=9$) and (b) its reformulation in terms of osculating paths.}
\label{fig:oscsixv}
\end{figure}

The 6V model is the archetype of integrable ice-type model on the two-dimensional square lattice.  Its configurations are
obtained by orienting the edges of the lattice (with arrows) in such a way that each vertex has exactly two entering and two out-going
arrows (the so-called ``ice rule"). This gives rise to the ${4\choose 2}=6$ local environments of Fig.~\ref{fig:six} (top row), traditionally called a,b,c types.
Here we consider the 6V model on an $n\times n$ square grid with fixed Domain Wall Boundary Conditions (DWBC), i.e. the $2n$ horizontal boundary arrows 
($n$ on the West (W) and $n$ on the East (E) boundaries) pointing towards the
square domain and the $2n$ vertical ones ($n$ on the North (N) and $n$ on the South (S) boundaries) outwards (see Fig.~\ref{fig:oscsixv} (a) for an illustration). 
Finally, the configurations are weighted by the product of local vertex weights over the domain\footnote{We restrict throughout the paper to the Disordered regime, in which all weights are trigonometric.}, parameterized by real
``spectral parameters" $u,v$ attached to the horizontal and vertical line that intersect at the vertex, taking the following values in the so-called Disordered regime, which we consider in this paper:
\begin{equation}\label{6vweights}
a=\rho\sin(u-v+\eta) ,\qquad  b=\rho\sin(u-v-\eta),\qquad  c=\rho\sin(2\eta)\end{equation}
for a,b,c type vertices respectively. 
The overall fixed factor $\rho>0$ emphasizes the projective nature of the weights and the homogeneity of the partition function (weighted sum over configurations), from which $\rho^{n^2}$
factors out.
Positivity of the weights imposes the condition:
\begin{equation}\label{domain6v} \eta<u-v < \pi-\eta , \qquad 0<\eta<\frac{\pi}{2} .\end{equation}

In the following we shall consider the homogeneous partition function $Z_n^{6V}[u,v]\equiv Z_n^{6V}[u-v]$ in which all horizontal spectral parameters at taking the value $u$ and all vertical ones the value $v$, so that weights are uniformly defined by \eqref{6vweights}, and where we note that both the weights and the partition function only depend on the quantity $u-v$. As stressed in \cite{CP2009}, the partition function enjoys the following crucial symmetry property:
\begin{equation}\label{sym6v}
Z_n^{6V}[\pi-(u-v)]=Z_n^{6V}[u-v]
\end{equation}
This is a consequence of the reflection symmetry of the weights: indeed, the DWBC are unchanged if we reflect the domain w.r.t. say a horizontal line. However, such a reflection interchanges vertices of types  $a \leftrightarrow b$ while keeping $c$-type environments unchanged. The same result is independently obtained by keeping the original setting, but applying the transformation $(u-v)\to \pi-(u-v)$ which leaves the domain \eqref{domain6v} invariant, and under which the weights $a$ and $b$ are interchanged, while $c$ remains invariant, and \eqref{sym6v} follows.

This model was extensively studied \cite{Lieb,DWBC,IKdet}, and turned out to play a crucial role in Kuperberg's proof of the Alternating Sign Matrix (ASM) conjecture \cite{kuperberg1996another,Bressoud}. 
The enumeration of ASM is realized at the ``ASM point" where all weights are equal to $1$, namely:
\begin{equation}\label{asmpoint} \eta=\frac{\pi}{6},\qquad u-v=\frac{\pi}{2},\qquad \rho=\frac{1}{\cos(\eta)} \end{equation}
while the refined enumeration (with a factor $\tau$ per entry $-1$ in the ASM) is provided by picking
\begin{equation}\label{tauasmpoint}u-v=\frac{\pi}{2}, \qquad  \rho=\frac{1}{\cos(\eta)},  \qquad \tau=4\sin^2(\eta),  \qquad 0<\eta<\frac{\pi}{2} \end{equation}
namely $(a,b,c)=(1,1,\sqrt{\tau})$,
with the particular cases of $1,2,3$-enumeration, for the choices $\eta=\frac{\pi}{6},\frac{\pi}{4},\frac{\pi}{3}$ respectively.
The ``20V-DWBC1,2 point" is another interesting combinatorial point, which corresponds to the identification of the number of 20V DWBC1,2 
configurations in terms of 6V DWBC \cite{DFGUI}, with the choice:
\begin{equation}\label{20vpoint}\eta=\frac{\pi}{8},\qquad u-v=\frac{5\pi}{8}, \qquad \rho=\sqrt{2} \end{equation}
corresponding to weights $(a,b,c)=(1,\sqrt{2},1)$.

More recently the thermodynamic free energy of the model was obtained in \cite{KORZIN,PZ6V,BleFok}, and the arctic curves were derived using various semi-rigorous methods, such as the Tangent Method in \cite{COSPO,CPS}, and further used in \cite{BDFG} to determine the arctic curves of the 20V DWBC1,2 models.

The configurations of the model can be rephrased in terms of families of osculating paths as follows. Pick a base orientation of arrows, say to the left and down, and mark
all the edges of any given configurations that respect the base orientation. Note that all the W and S boundary edges are marked, while the N and E ones are not. The marked edges
can be combined into paths say starting at the W boundary and ending at the S one, with right and down steps only, that are non-intersecting but may kiss/osculate by pairs at fully marked vertices: the corresponding six local configurations are depicted on the second row of Fig.~\ref{fig:six}. The osculating path formulation is well adapted to the Tangent Method as we shall see below.
For illustration, we have represented in Fig.~\ref{fig:oscsixv} a sample 6V DWBC configuration both in the arrow (a) and osculating path (b) formulations.

\subsubsection{6V model with U-turn boundary and 6V' model}\label{gen6vpsec}

\begin{figure}
\begin{center}
\includegraphics[width=16cm]{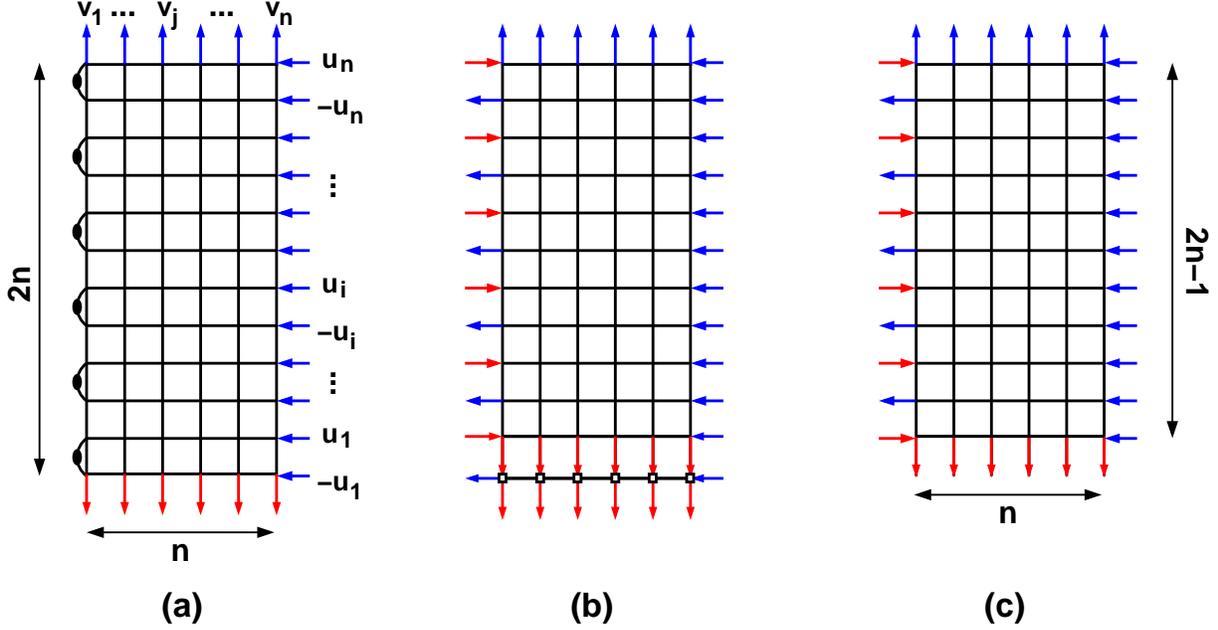}
\end{center}
\caption{\small (a) U-turn boundary 6V model: each U-turn (marked by a black dot along the W boundary) transmits the arrow orientation through the dot. (b) When all $u_i=-\theta-\eta$, the weights $y_d=0$, hence all arrows go up through the U-turns, which may be cut out as shown. The bottom row becomes trivially fixed to the same b-type vertex. (c) the 6V' model is finally obtained by cutting out the trivial b-type vertices.}
\label{fig:6vUtop}
\end{figure}

Kuperberg considered different symmetry classes of ASM, which in turn correspond to different variations around the 6V-DWBC model \cite{kuperberg2002symmetry}. In particular he found a remarkable connection between Vertically Symmetric ASMs (VSASM) and the 6V model with so-called U-turn boundary conditions (6V-U), also considered independently by Tsuchya \cite{tsu}. The 6V-U model is defined on a rectangular grid of square lattice of size $2n\times n$ with the usual DWBC along the N,S boundaries (each with $n$ outgoing vertical arrows) and E boundary (with $2n$ entering horizontal arrows), while the W boundary has U-turns connecting the $2n$ horizontal boundary edges (which we label $0,1,2,...,2n-1$ from bottom to top) by $n$ consecutive pairs $(2i,2i+1)$, $i=0,1,2,...,n-1$
(see Fig.~\ref{fig:6vUtop} (a) for an illustration). Each U-turn transmits the arrow orientation through the marked dot.  
The horizontal lines connected by a U-turn receive horizontal spectral parameters $-u_i$ (even label $2i$)  and $u_i$ (odd label $2i+1$), while vertical spectral parameters are denoted by $v_i$, $i=1,2,...,n$ from left to right.
As before, we consider the Disordered regime, with trigonometric weights depending on horizontal and vertical spectral parameters as in the case of the 6V-DWBC model.
The local weights, say at the intersection of a horizontal line with spectral parameter $u$ and vertical line with spectral parameter $v$ are:
\begin{equation}\label{6voddweights}
a_o=\rho_o\sin(u-v+\eta) ,\ b_o=\rho_o\sin(u-v-\eta),\ c_o=\rho_o\sin(2\eta)\end{equation}
on odd rows, while we must apply the transformation $u\to -u$ on even rows, resulting in:
\begin{equation}\label{6vevenweights}
a_e=\rho_e\sin(\eta-u-v) ,\ b_e=\rho_e\sin(-u-v-\eta),\ c_e=\rho_e\sin(2\eta)\end{equation}
and where the overall constant factors $\rho_o,\rho_e>0$ emphasize the projective character of the weights. 
Finally, U-turns receive weights:
\begin{equation}\label{uweights}
y_u(u)=-\sin(u-\theta+\eta) \qquad y_d(u)= \sin(u+\theta+\eta) \end{equation}
according to whether the transmitted arrow goes up or down. 
We must further constrain $u,v,\theta$ so that the weights of configurations of the 6V-U model are positive. A natural choice is $0<\theta<\frac{\pi}{2}$ and the following domain
for the $u,v,\eta$ parameters:
\begin{equation}\label{domain6}
\eta<u-v<\pi-\eta,\quad  \eta-\pi <u+v<-\eta, \quad 0<\eta<\frac{\pi}{2}
\end{equation}

For the purpose of this paper, we will consider the uniform case, where all horizontal odd spectral parameters are equal with value $u_i=u$ for all $i$, and all vertical ones are equal,
with value $v_j=v$ for all $j$, so that weights of odd/even rows are given by \eqref{6voddweights} and \eqref{6vevenweights} respectively.  Moreover, we pick $\theta=-u-\eta$, thus enforcing that at each U-turn the arrows go up\footnote{This choice simplifies the model by fixing the orientations of all arrows along the W boundary. However, we argue that the thermodynamics of the model are insensitive to that choice. For instance, the thermodynamic free energy, a bulk quantity, is  independent of the choice of $\theta$ (see Remark \ref{thetarem} below). So is the one-point function (see Remark \ref{thetaonerem} below). As a consequence, the arctic curves of the U-turn 6V and of the 6V' models are expected to be identical.}. Dividing each U-turn into two horizontal edges, we now obtain
arrows that alternate in/out along the W boundary (as shown in Fig.~\ref{fig:6vUtop} (b)). Note that the bottom row of vertices has all its edge orientations fixed by the ice rule. Upon dividing by the corresponding product of local even b-type weights, we may safely remove the $n$ vertices of the bottom line. After dividing by the weights of the removed vertices and U-turns, we are left with the 6V model on a rectangular grid of square lattice with size $2n-1\times n$, and
with usual DWBC along the N,E,S boundaries, while arrows alternate in/out from bottom to top along the W boundary (as depicted in Fig.~\ref{fig:6vUtop} (c)). Note that the rows are now labelled $1,2,...,2n-1$ from bottom to top.
By lack of a better name, we shall refer to this model as the 6V' model, and denote by $Z_n^{6V'}[u,v]$ the corresponding homogeneous partition function. 
Similarly to the 6V-DWBC case, this partition function enjoys a reflection symmetry property:
\begin{equation}\label{6vpsym}
Z_n^{6V'}[-u,-\pi-v]=Z_n^{6V'}[u,v]
\end{equation}
Indeed, like in the 6V case, applying a reflection w.r.t. a horizontal line to the rectangular domain interchanges vertices of type $a_o\leftrightarrow b_o$ and $a_e \leftrightarrow b_e$ while $c$-type vertices are unchanged. The same result is obtained in the original setting by applying the transformation $(u,v)\to (-u,-\pi-v)$, which leaves the domain \eqref{domain6} invariant, and \eqref{6vpsym} follows.

We now examine a few ``combinatorial points" in parameter space, where the partition function of the 6V' model has some known combinatorial interpretations.
Similarly to the 6V-DWBC case, the enumeration of Vertically Symmetric ASM (VSASM) is realized \cite{kuperberg2002symmetry} at the ``VSASM point" of the 6V' model, where all weights are equal to $1$, namely:
\begin{equation}\label{vsasmpoint}
\eta=\frac{\pi}{6},\qquad u=0,\qquad v=-\frac{\pi}{2},\qquad \rho_o=\rho_e=\frac{1}{\cos(\eta)}
\end{equation}
while the refined enumeration corresponds to \cite{kuperberg2002symmetry}:
\begin{equation}\label{tauvsasmpoint}
u=0,\qquad v=-\frac{\pi}{2},\qquad \rho_o=\rho_e=\frac{1}{\cos(\eta)}, \qquad \tau=4\sin^2(\eta)
\end{equation}
with even and odd weights $(a_i,b_i,c_i)=(1,1,\sqrt{\tau})$, with the particular cases of $1,2,3$-enumeration of VSASM corresponding respectively 
to $\eta=\frac{\pi}{6},\frac{\pi}{4},\frac{\pi}{3}$.
The ``20V-DWBC3 point" is another interesting combinatorial point, which corresponds to the identification of the number of 20V DWBC3 
configurations on $\cQ_n$ in terms of 6V' DWBC \cite{DF20V}, with the choice:
\begin{equation}\label{20v3point}
\eta=\frac{\pi}{8},\qquad u=\frac{\pi}{8},\qquad v=-\frac{\pi}{2},\qquad \rho_o=\rho_e=\sqrt{2}
\end{equation}

\begin{figure}
\begin{center}
\includegraphics[width=9cm]{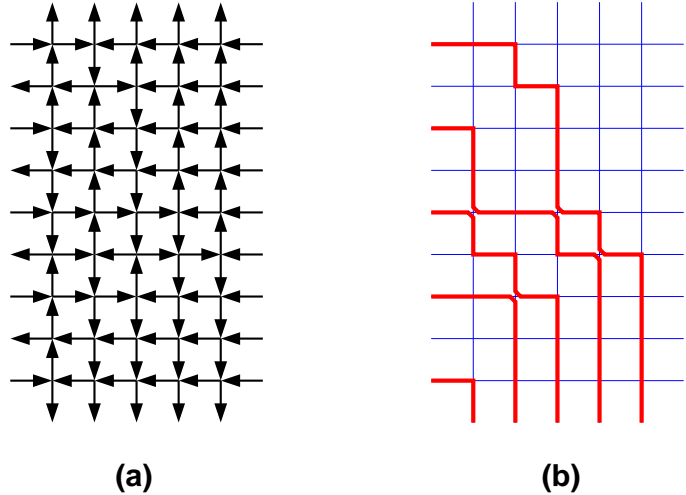}
\end{center}
\caption{\small (a) A sample configuration of the 6V' model (for $n=5$) and (b) its reformulation in terms of osculating paths.}
\label{fig:oscu6vp}
\end{figure}

Like in the 6V-DWBC case, the configurations of the model may be rephrased in terms of osculating paths. Using the same recipe, we see that configurations are in bijection
with families of $n$ osculating paths, starting at odd horizontal edges along the W boundary, and ending at all vertical edges along the S boundary.
For illustration, we have represented in Fig.~\ref{fig:oscu6vp} a sample 6V'  configuration both in the arrow (a) and osculating path (b) formulations.

The U-turn 6V/6V' model thermodynamic free energy was derived in Ref.~\cite{RIBKOR}, and arctic curves were
derived in the VSASM case \cite{DFLAP} and in the particular free fermion case corresponding to $\eta=\frac{\pi}{4}$, $u=0$, $v=-\frac{\pi}{2}$ \cite{PRarctic}.

\subsubsection{20V model with DWBC3}\label{20vpresec}

\begin{figure}
\begin{center}
\includegraphics[width=14cm]{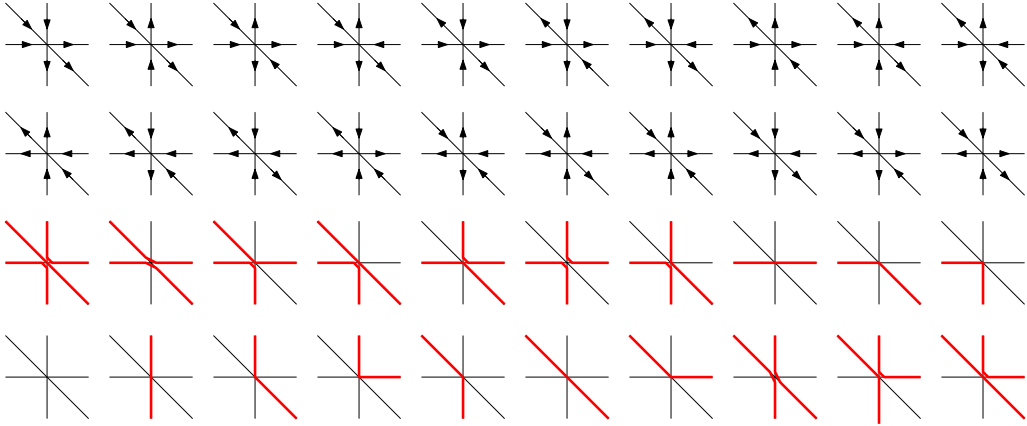}
\end{center}
\caption{\small The twenty vertex environments obeying the ice rule on the triangular lattice (top two rows) together with their osculating Schr\"oder path reformulation (bottom two rows).}
\label{fig:vingt}
\end{figure}

The 20V model is a two-dimensional ice-type model defined on the triangular lattice. As in the 6V case, edges are oriented in such a way that at each vertex there are exactly three incoming and three outgoing arrows. This gives rise to the ${6\choose 3}=20$ local vertex configurations depicted in Fig.~\ref{fig:vingt}. Recently this model was considered with special boundary conditions \cite{DFGUI} emulating DWBC on some particular domains. For simplicity, the triangular lattice is represented with vertices in $\Z^2$, and edges of the square lattice are supplemented by the second diagonal of each square face. Edges are accordingly called horizontal, vertical and diagonal. 

In Ref. \cite{DFGUI}, four types of boundary conditions (DWBC1,2,3,4) were considered on a $n\times n$ square grid in this representation, with remarkable combinatorial properties. 
The DWBC1,2 are closest to the 6V-DWBC, and correspond to arrows entering the domain on the W and E boundaries, and exiting on the N and S boundaries, with a particular choice of the 
NW and SE corner diagonal edges as belonging to the W and S boundaries respectively (DWBC1) or to the N and E (DWBC2). The DWBC3 is a more relaxed version of DWBC, where  only the horizontal arrows point toward the domain on the W boundary, and only vertical arrows point outward on the S boundary, while all other arrows point outward on W and N, and inward on S and E. In \cite{DFGUI}, a family of pentagonal extensions of the grid was considered, and the corresponding 20V configurations were conjectured to correspond to the domino tilings of special domains, viewed as truncations of the Aztec Triangle.  The conjecture was proved in \cite{DF20V} for the maximal extension, namely the 20V model with DWBC3 on the quadrangle $\cQ_n$ of shape $n\times n\times (2n-1)\times n$ (see Fig.~\ref{fig:osc20v} (a) for an illustration), whose partition function was shown to be identical to that of domino tilings of the Aztec triangle $\cT_n$ (see Fig.~\ref{fig:dt0} (a)).  In the present paper, we shall concentrate on this model.

\begin{figure}
\begin{center}
\includegraphics[width=14cm]{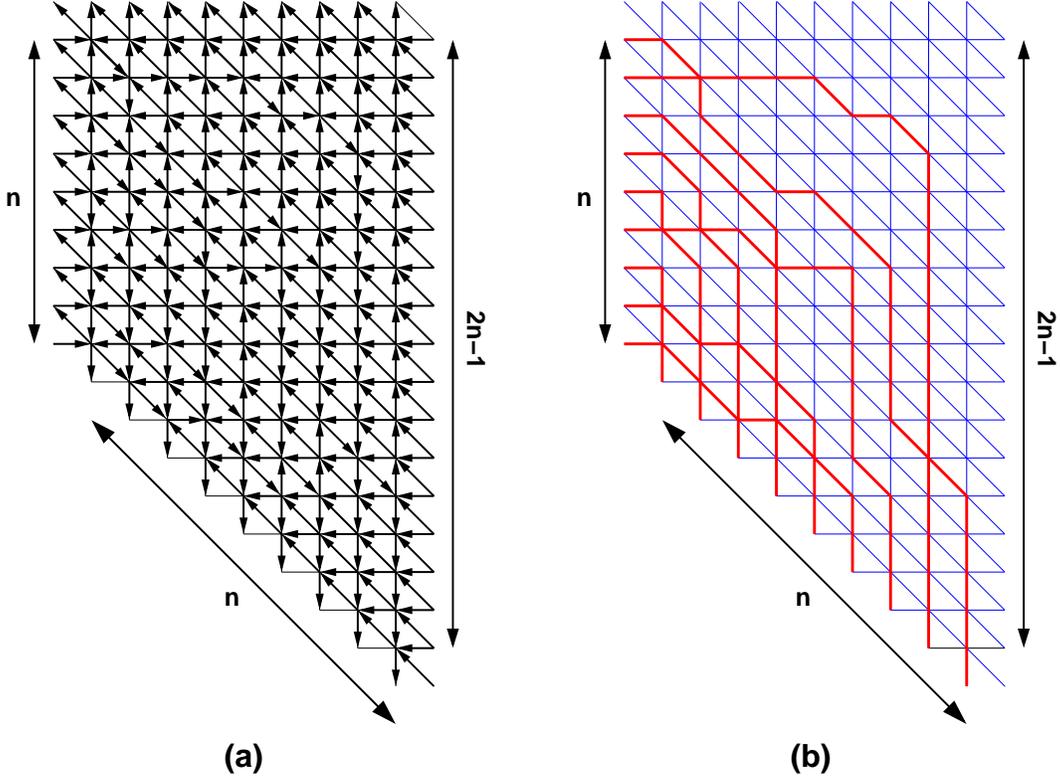}
\end{center}
\caption{\small (a) A sample configuration of the 20V model with DWBC3 on the quadrangle $\cQ_n$ (with $n=9$ here) and (b) its osculating Schr\"oder path reformulation.}
\label{fig:osc20v}
\end{figure}

\begin{figure}
\begin{center}
\includegraphics[width=14cm]{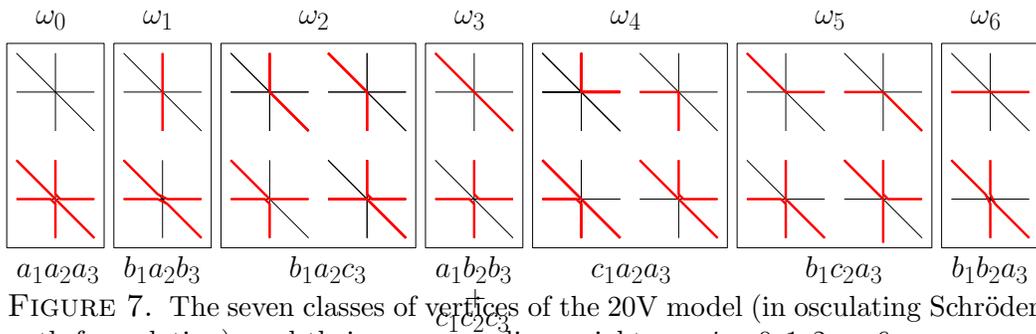}
\end{center}
\caption{\small The seven classes of vertices of the 20V model (in osculating Schr\"oder path formulation), and their corresponding weights $\omega_i$, $i=0,1,2,...,6$.}
\label{fig:weight20v}
\end{figure}

Like in the 6V case, we may rephrase the arrow configurations of the 20V model in terms of osculating paths with horizontal, vertical and diagonal steps along the corresponding edges of the lattice (these are usually called Schr\"oder paths). This is done similarly by picking a base orientation (right, down, and diagonal down and right) of all the edges of the lattice, and marking only those edges of a given configuration of the 20V model that agree with it. The selected edges are assembled again into non-intersecting, but possibly kissing paths travelling to the right and down. We have represented in Fig.~\ref{fig:vingt} (bottom two rows) the 20 local path configurations at a vertex corresponding to the 20 arrow configurations
(top two rows).
In the osculating Schr\"oder path formulation, the DWBC3 on $\cQ_n$ gives rise to families of $n$ osculating paths starting at the $n$ odd horizontal edges along the W boundary, and ending at the $n$ vertical edges of the diagonal SW boundary, as displayed in Fig.~\ref{fig:osc20v} (b).

As detailed in \cite{DFGUI,Kel}, the model receives integrable weights inherited from those of the 6V model upon resolving the triple intersections of spectral parameter lines at each vertex into three simple intersections corresponding to three 6V models on three distinct lattices. Integrability was used in \cite{DF20V} to transform the partition function of the 20V DWBC3 model into that of a  6V' model, for a particular normalization of spectral parameters of the 20V model. With this normalization, the seven local vertex weights corresponding to the dictionary of Fig.~\ref{fig:weight20v} read respectively:
\begin{eqnarray}\label{weights20V}
\omega_0&=&\nu\, \sin(u-v+\eta)\, \sin(\eta-u-v)\,\sin(2u+2\eta)\nonumber \\
\omega_1&=&\nu\, \sin(u-v-\eta)\, \sin(-u-v-\eta)\,\sin(2u+2\eta)\nonumber \\
\omega_2&=&\nu\,  \sin(u-v-\eta)\, \,\sin(2u+2\eta)\,\sin(2\eta)\nonumber \\
\omega_3&=&\nu\, \{ \sin^3(2\eta)+\sin(u-v+\eta)\, \sin(-u-v-\eta)\,\sin(2u) \}\nonumber \\
\omega_4&=&\nu\,\sin(2u+2\eta) \, \sin(\eta-u-v)\,\sin(2\eta) \nonumber \\
\omega_5&=&\nu\, \sin(u-v-\eta)\,\sin(\eta-u-v)\,\sin(2\eta) \nonumber \\
\omega_6&=&\nu\, \sin(u-v-\eta)\,\sin(\eta-u-v)\,\sin(2u) 
\end{eqnarray}
where again the fixed overall factor $\nu>0$ emphazises the projective nature of the weights. Note that each vertex
is the intersection of three lines (horizontal, vertical, diagonal) each of which carries a spectral parameter ($\eta+u,\ v,\ -u$
respectively).
The domain of parameters ensuring positivity of the weights is :
\begin{equation}\label{domain20}
0<u<\frac{\pi}{2}-\eta \qquad  \eta <u-v<\pi-\eta \qquad \eta<-u-v<\pi-\eta \qquad 0<\eta<\frac{\pi}{2} 
\end{equation}
(Note the similarity with the domain \eqref{domain6} for the 6V' model, the only extra condition being that $u>0$.).

Note the existence of a combinatorial point where the weights are uniform and all equal to $1$: 
\begin{equation}\label{combipoint20v}\eta=\frac{\pi}{8},\qquad u=\eta=\frac{\pi}{8},\qquad v=-4\eta=-\frac{\pi}{2}, \qquad  \nu=\sqrt{2} . \end{equation}
identical to the 20V-DWBC3 point of the 6V' model, where the partition functions of both models are related \cite{DF20V}.

\subsubsection{Domino Tilings of the Aztec Triangle}\label{dtrefsec}

\begin{figure}
\begin{center}
\includegraphics[width=16cm]{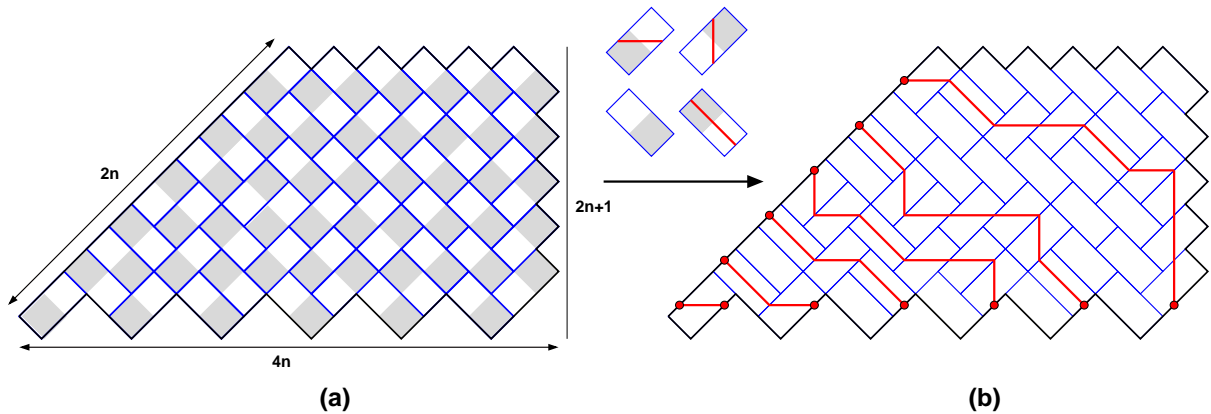}
\end{center}
\caption{\small (a) A sample domino tiling of the Aztec triangle $\cT_n$ (here for $n=6$) and (b) its non-intersecting Schr\"oder path formulation. The dictionary for the path steps (horizontal, vertical, empty, diagonal) is indicated for each of the four (bicolored) domino configurations. }
\label{fig:dt0}
\end{figure}
Our fourth class of objects is the tiling configurations by means of $2\times 1$ dominos of the ``Aztec Triangle" of order $n$ \cite{DFGUI,DF20V}, denoted $\cT_n$, depicted in Fig.~\ref{fig:dt0} (a). 
The identity between the number of 20V-DWBC3 configurations on $\cQ_n$ and the number of domino tilings of the Aztec triangle $\cT_n$  was conjectured in \cite{DFGUI} and proved in \cite{DF20V}. In Sect.~\ref{DTsec} below, we will make use of this correspondence to determine the limit shape of typical domino tilings of $\cT_n$ for large $n$.

It proves useful to rephrase the domino tiling problem in terms of non-intersecting lattice paths, as indicated in Fig.~\ref{fig:dt0} (b), where the indicated dictionary between bi-colored dominos and path steps has been used to reexpress bijectively the tiling configuration into a family of non-intersecting lattice paths with fixed ends on the diagonal NW and S boundaries of the domain. As indicated, the paths may have horizontal, vertical and diagonal steps and are therefore non-intersecting Schr\"oder paths.

\subsection{Tangent Method: combining one-point functions and paths}\label{sectan}

This section details how the Tangent Method of \cite{COSPO} works and how we are going to apply it to the four models studied in this paper: the 6V-DWBC, 6V', 20V-DWBC3 and finally the Domino Tiling of the Aztec Triangle, all expressed in the (possibly osculating) path formulation.

\subsubsection{The Tangent Method}\label{sectanbis}

\begin{figure}
\begin{center}
\includegraphics[width=16cm]{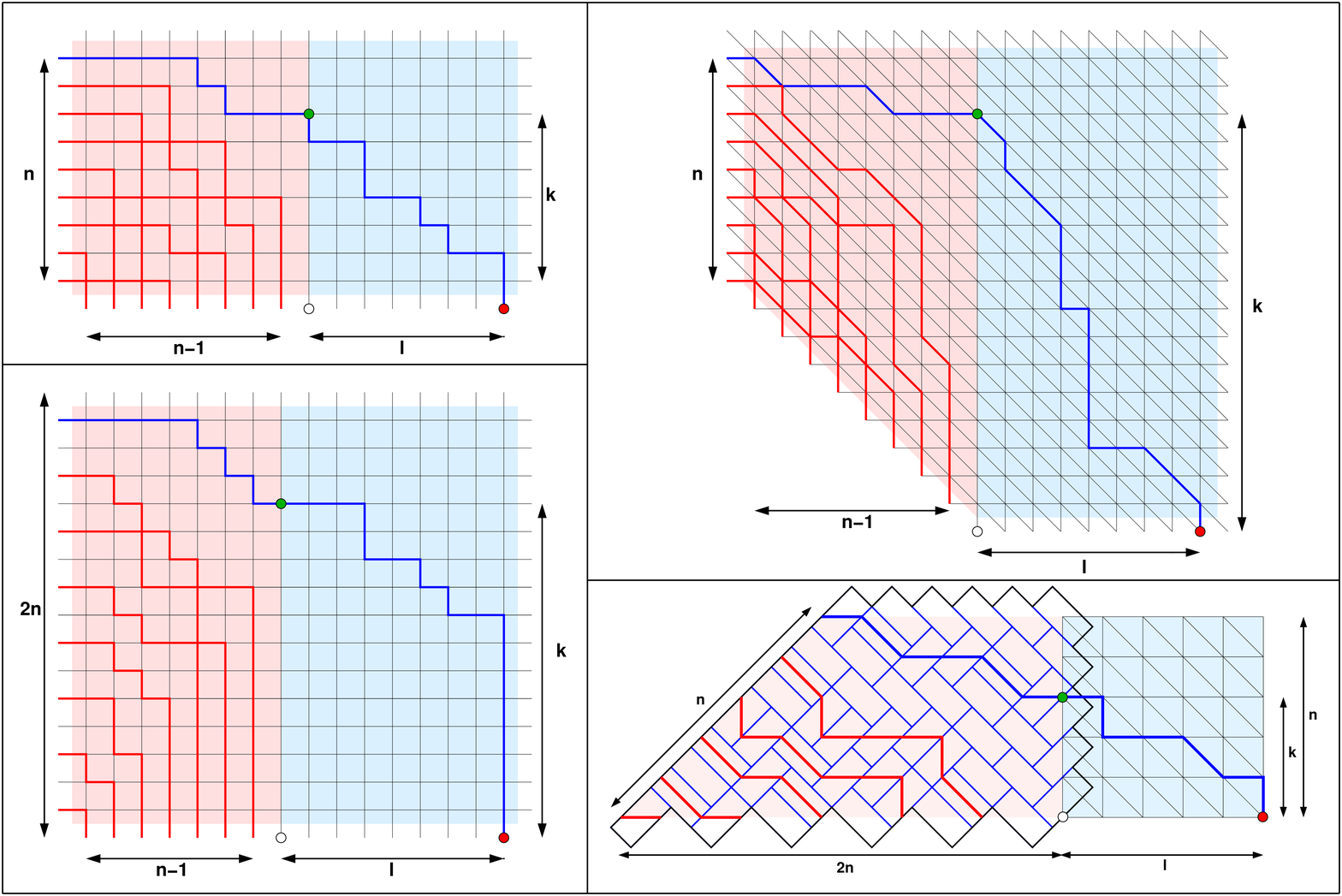}
\end{center}
\caption{\small Application of the Tangent Method to determine the NE branch of the arctic curve, illustrated for the four models studied in this paper: 6V-DWBC (top left), 6V' (bottom left), 20V-DWBC3 (top right) and Domino Tilings of the Aztec Triangle (bottom right). In all cases, the endpoint of the topmost path is displaced from its original position (white dot) to the right, at some distance $\ell$ (red dot). The partition function splits into the modified partition function of the model $Z_{n,k}$ with exit point at position $k$ (pink domain) and that, $Y_{k,\ell}$, of a single path from the exit point to the displaced endpoint with the same ambient weights (light blue domain).
The Tangent Method uses the most likely position $k=k(\ell)$ namely the one giving the largest contribution to the total partition function. The relevant portion of arctic curve is given by the envelope of the family of lines through $(\ell,0)$ and $(0,k(\ell))$ (the green and red points), in rescaled coordinates with the origin at the SE corner of the original domain. All dimensions are expressed in units of the underlying lattice grid.}
\label{fig:alltgt}
\end{figure}

As explained in the Introduction, the Tangent Method consists in finding the most likely exit point from the original domain of the topmost path, given that its end has
been displaced away from the domain. To determine this point, we consider the full partition function $\Sigma_{n,\ell}$ of the model, which is made of two pieces (corresponding respectively to the pink and light blue domains in Fig.~\ref{fig:alltgt}):
\begin{itemize}
\item{} The modified partition function $Z_{n,k}$ for the set of $n$ weighted paths in the original domain, with the (topmost) $n$-th path constrained to exit the domain along the E border at a fixed height $k$ (green dot in Fig.~\ref{fig:alltgt}), normalized into the ``refined one-point function" $H_{n,k}=Z_{n,k}/Z_n$.
\item{} The partition function $Y_{k,\ell}$ of a single weighted path constrained to start at the previous exit point and end at a fixed endpoint 
(displaced at distance $\ell$ from its original position, second green dot in Fig.~\ref{fig:alltgt}).
\end{itemize}
The full partition function reads:  $\Sigma_{n,\ell}= \sum_{k=1}^{\mu n} H_{n,k}\, Y_{k,\ell}$, where $\mu=1$ for the 6V-DWBC and the Domino Tiling models, and $\mu=2$ in the other
models. Note that all weights (including those of the single path) are those of the underlying vertex model; 
in particular, the vertices not visited by the single path (in the light blue zones of Fig.~\ref{fig:alltgt}) receive the weight of the empty vertex configuration.

Next we go to the large $n=N$ scaling limit and use large $N$ estimates of both partition functions to find the leading contribution to the sum in $\Sigma_{n,\ell}$.
More precisely, setting $n=N$, $\ell=\lambda N$, the limiting solution of the saddle-point approximation to the sum, in the form of some 
function $\kappa(\lambda)$ where $k(\ell)=\mu N \kappa(\lambda)$ is the most likely position of the exit point.  The (rescaled) arctic curve is then obtained as the envelope of the family of lines through the most likely exit point and the fixed endpoint, both functions of the parameter $\lambda$.  More precisely, we must estimate the large $N$ behavior of
the total partition function $\Sigma_{n,\ell}$:
$$\Sigma_{N,\lambda N} \simeq \mu N\int_{0}^{1} d\kappa H_{N,\mu \kappa N} Y_{\mu\kappa N,\lambda N}$$
where we have replaced the summation by an integral over the rescaled variable $\kappa=k/(\mu N)$. In Sections \ref{6vsec},\ref{6vpsec},\ref{20vsec} and \ref{DTsec} below, we work out the explicit asymptotics of both
functions in the integrand, in the form $H_{N,\mu\kappa N}\simeq e^{NS_H(\kappa)}$ and $Y_{\mu\kappa N,\lambda N}\simeq e^{NS_Y(\kappa)}$. The leading contribution 
to the integral comes from the
solution $\kappa=\kappa(\lambda)$ to the saddle-point equation $\partial_\kappa S_H+\partial_\kappa S_Y=0$. This gives the most likely exit point $\big(0,\mu\kappa(\lambda)\big)$ (in rescaled variables).
The tangent line in rescaled variables is the line through $\big(0,\mu\kappa(\lambda)\big)$ and $(\lambda,0)$, with equation 
\begin{equation}\label{tgteq}
y+A x-B=0, \qquad A=\frac{\mu\kappa(\lambda)}{\lambda}, \quad B=\mu \kappa(\lambda)
\end{equation}
As we shall see, the family of tangent lines is best described in terms of the parameter $\xi$ (the deviation from uniform vertical spectral parameter in the last (E-most) column $v_n=v+\xi$). In particular the relationship between $\lambda$ and $\kappa(\lambda)$ takes the parametric form: $(\kappa(\lambda),\lambda)=(\kappa[\xi],\lambda[\xi])$, for $\xi\in I$, 
$I$ an interval determined by the conditions that $\lambda[\xi]>0$ and $\kappa[\xi]\in [0,1]$.
The envelope of the family of lines 
\begin{equation}\label{tgtfamily}
F_\xi(x,y):=y+A[\xi]x-B[\xi]=0\end{equation}
is determined as the solution of the linear system $F_\xi(x,y)=\partial_\xi F(x,y)=0$, and gives rise
to the parametric equations for the arctic curve:
\begin{equation}
\label{acurve}
x=X[\xi]:=\frac{B'[\xi]}{A'[\xi]},\qquad y=Y[\xi]:=B[\xi]-\frac{A[\xi]}{A'[\xi]}B'[\xi] , \qquad (\xi\in I)
\end{equation}

By the geometry of the problem, only a portion of the arctic curve can be obtained in this way: moving the exit point to the right along the line through all other exit points covers a portion of arctic curve between a point of tangency to the E vertical border of the original domain (when the endpoint tends to its original position) and a point of tangency to the horizontal N border of the domain (when the endpoint tends to infinity on the right along the line), or equivalently corresponding to the slope $A[\xi]\in [0,\infty)$.
This condition was used to restrict the domain of the variable $\xi\in I$. This portion of arctic curve is on the NE corner of the domain, and we shall refer to it as the NE branch of the arctic curve.

The case of the Domino Tiling of the Aztec Triangle is simpler: as a (free fermion) dimer model, it is expected on general grounds \cite{KOS} to have an analytic arctic curve, equal to the analytic continuation of its NE branch. As we shall see, the cases of 6V-DWBC, 6V' and 20V-DWBC3 are more involved, and lead in general to non-analytic arctic curves.

\subsubsection{Other branches}\label{obsec}

To reach other portions of the arctic curve, we will have to resort to various tricks, all based on the same principle: we switch to a different interpretation of the configurations of the original model, to express them  in terms of different families of paths, to which the Tangent Method can be applied again. 

\noindent{\bf 6V-DWBC and 6V' cases.}

\begin{figure}
\begin{center}
\includegraphics[width=14cm]{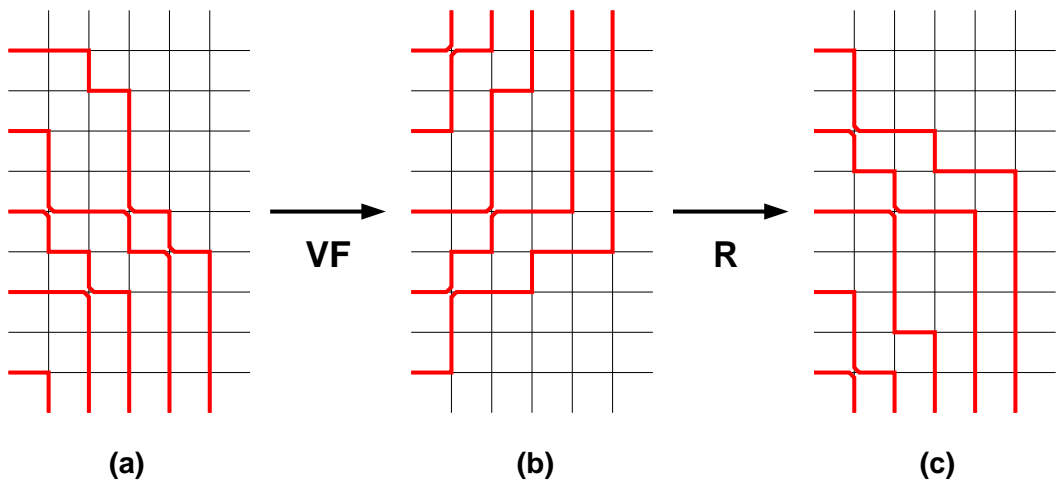}
\end{center}
\caption{\small From NE to SE branch in the 6V' model: (a) configuration of the 6V' model (b) after application of the Vertical Flip (VF) (c) after application of the Reflection (R), leading to another 6V' configuration, with weights of $a_i$ and $b_i$ types interchanged for $i=e,o$.}
\label{fig:flip6vp}
\end{figure}

In the case of the 6V-DWBC/6V' model arctic curves, we have access to the SE branch by reinterpreting the 6V configurations in terms of paths with the same starting points (every point/every other point along the W vertical border) but with endpoints along the N border (see Fig.~\ref{fig:flip6vp} (a-b) for an illustration in the 6V' case). This corresponds to redefining the base orientation of edges (and direction of travel of the paths) to be to the right and up: we call this transformation on the paths Vertical Flip ($\bf VF$). The osculation at each vertex must be {\it redefined} so that all paths now go horizontally right and vertically up. The SE branch of the arctic curve is obtained by applying the Tangent Method to this new family of paths. The easiest way to do so is to reflect the picture w.r.t. a horizontal line, so that the setting is that of the 6V' model again: we call this transformation Reflection ($\bf R$) as shown in Fig.~\ref{fig:flip6vp} (c). 
The net effect of the composition of $\bf VF$ followed by $\bf R$ on the model is simply to interchange the a and b type weights, a transformation also implemented by the involution $*$ acting on the spectral parameters as follows: 
\begin{eqnarray}
&&u\to u^*=\pi-u \qquad\qquad\qquad ({\rm 6V-DWBC})\label{6vstar}\\ 
&&(u,v)\to (u^*=-u,v^*=-\pi-v)\qquad ({\rm 6V'})\label{6vpstar}
\end{eqnarray} 
This gives rise\footnote{Here and in the following the superscript $*$ indicates that the 
corresponding quantity is obtained by changing $u\to u^*$ (6V-DWBC) or $(u,v)\to (u^*,v^*)$ (6V').}
 to the new weights $a^*=b,b^*=a,c^*=c$ (6V-DWBC) or $a_i^*=b_i,b_i^*=a_i, c_i^*=c_i$, $i=o,e$ (6V'). 
 
The ``upside-down" one-point function must also be reinterpreted as $H_{n,k}=H_{n,\mu n-k}^*$ where the latter is computed with the new transformed weights. Similarly, the path partition function $Y_{k,\ell}$
with starting point $(0,k)$ and endpoint $(\ell,0)$ is reinterpreted as the partition function $Y_{\mu n-k,\ell}^*$ with the new weights.  
More precisely setting the origin of the rescaled domain 
at the SE corner of the domain, the vertices of the rescaled domain are: SE:(0,0), NE:(0,$\mu$), NW:(-1,$\mu$), SW:(-1,0). The large $n=N$
optimization problem leading to the most likely exit point $(0,\mu\kappa(\lambda))$ leads now to the most likely exit point $(0,\mu(1-\kappa^*(\lambda)))$ and the associated family of rescaled lines. 
The SE branch of arctic curve is obtained by reflecting back the envelope of this family, and effectively amounts to applying $*$ to the NE branch and reflecting it w.r.t. 
the line $y=\mu/2$, namely applying the transformation 
\begin{equation} \label{flip6v6vp} (x,y)\mapsto (x,\mu-y)
\end{equation}
with $\mu=1$ for the 6V-DWBC model, and $\mu=2$ for the 6V' model.

\begin{figure}
\begin{center}
\includegraphics[width=16cm]{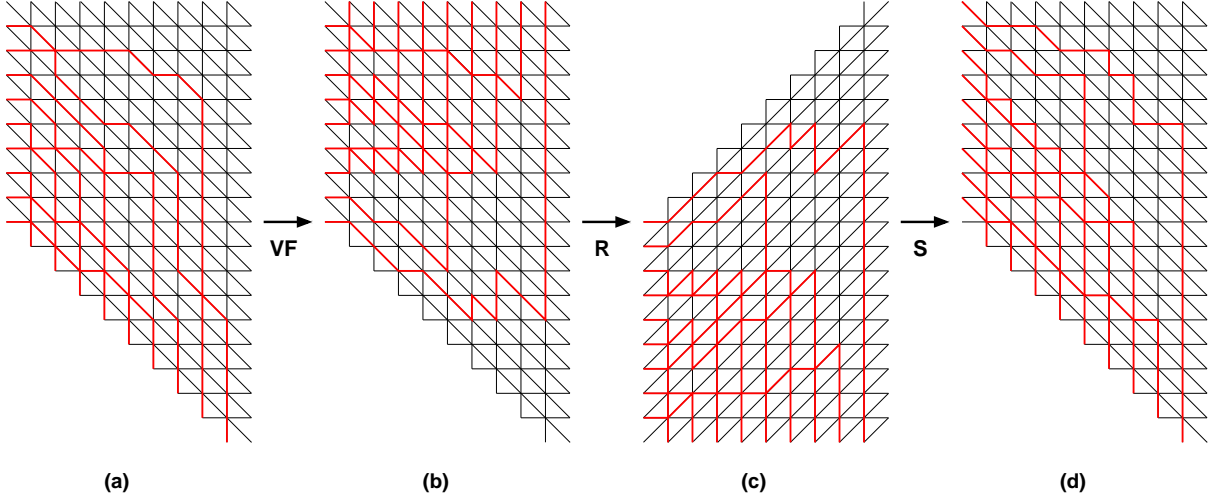}
\end{center}
\caption{\small The shear trick: (a) configuration of the 20V-DWBC3 (b) after application of the Vertical Flip (VF) (c) after application of the Reflection (R) (d) after the Shear transformation (S). }
\label{fig:shear}
\end{figure}

\noindent{\bf 20V-DWBC3 case.}
In the case of the 20V-DWBC3 model illustrated in Fig.~\ref{fig:shear} (a), the same idea leading to the SE branch must be adapted (by adapting the ``shear trick" devised in Ref.~\cite{BDFG}). More precisely, as in the 6V-DWBC and 6V' cases, we first redefine the base edge orientation so that horizontal/diagonal edges point right/down, but vertical edges point up. Alternatively, compared to the original base orientation, this simply interchanges vertical edges which are occupied by path steps with vertical edges which are empty and vice versa (like in the 6V,6V' cases, we call this $\bf VF$=Vertical Flip, see Fig.~\ref{fig:shear} (b)). The osculation at each vertex must be redefined so that all paths now go horizontally right, diagonally down and vertically up: in particular the paths now end on the $n$ vertical edges of the N boundary (see Fig.~\ref{fig:shear} (b)).
To match this with a 20V configuration of $\cQ_n$, we must reflect the configuration w.r.t. a horizontal line (we call this again $\bf R$=Reflection, see Fig.~\ref{fig:shear} (c)). However,  the quadrangular domain 
$\cQ_n$ is not invariant under horizontal reflection: to recover it, we apply a shear transformation as indicated in Fig.~\ref{fig:shear} (d) (we call this $\bf S$=Shear). More precisely setting the origin of the rescaled domain 
at the SE corner of the domain, the vertices of the rescaled quadrangle $\cQ_N/N$ are: SE:(0,0), NE:(0,2), NW:(-1,2), SW:(-1,1). 
Applying successively (in rescaled variables $(x,y)$) the reflection $(x,y)\to (x,-y)$, shear $(x,y)\to (x,y-x)$, and finally translation by $(0,2)$ leaves the domain invariant, 
but effectively flips the orientation of the vertical edges in the 20V-DWBC3 configuration. Note that the final configuration after application of $\bf VF,\ R, \ S$ (Fig.~\ref{fig:shear} (d))
is slightly different from a 20V-DWBC3 configuration, as all the starting steps along the W boundary are diagonal, as opposed to horizontal. This clearly makes no difference in the case of uniform weights, however for non-uniform weights this changes the weights along the W boundary. We argue nonetheless that this mild boundary effect does not affect the asymptotic behavior of bulk quantities as all other weights are the same in both situations. In particular, we expect the arctic curve to be the same in the original 20V-DWBC3 model with
horizontal starting steps along the W boundary, and in the modified one, where all starting steps are diagonal.

Let us now examine the fate of the local vertex environments of Fig.~\ref{fig:weight20v} under the sequence of transformations $\bf VF,\  R, \ S$. It is easy to see that under this transformation the types of vertices are mapped as follows: $\omega_0\leftrightarrow \omega_1$, $\omega_2\leftrightarrow \omega_4$, while all other types are preserved. For illustration, the top left vertex of type $\omega_2$ of Fig.~\ref{fig:weight20v} is successively transformed into the top left vertex of type $\omega_4$ as follows:
$$
\raisebox{-1.cm}{\hbox{\epsfxsize=14cm\epsfbox{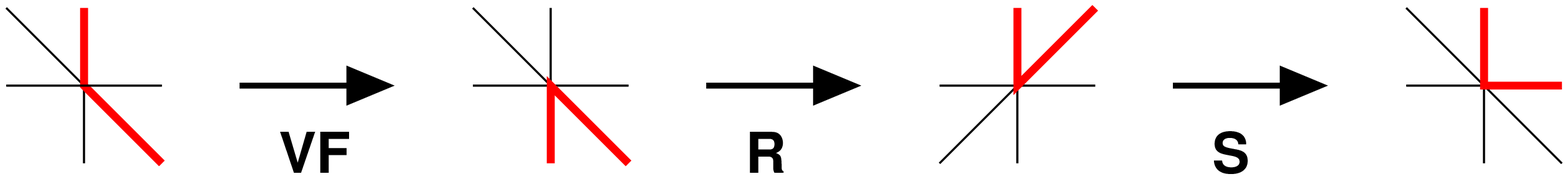}}}
$$
We finally note that the involution $*$ which maps the weights
$(\omega_0,\omega_1,\omega_2,\omega_3,\omega_4,\omega_5,\omega_6)\mapsto (\omega_1,\omega_0,\omega_4,\omega_3,\omega_2,\omega_5,\omega_6)$
is simply given by 
\begin{equation}\label{20vstar}(u,v)\mapsto (u^*,v^*)=(u,-v-\pi)\qquad ({\rm 20V-DWBC3}) .
\end{equation}
As before, we have to reinterpret $H_{n,k}=H_{n,2n-k}^*$ and $Y_{k,\ell}=Y_{2n-k,\ell}^*$ leading to the 
most likely exit point $2N(1-\kappa^*(\lambda))$ in rescaled variables $\kappa=k/(2N)$ and $\lambda=\ell/N$.
The SE branch is finally obtained by applying the reflection/shear/translation $(x,y)\mapsto (x,2-x-y)$ to the NE one after applying $*$, namely changing $v\to -v-\pi$.

This gives access to the SE branch in all 6V-DWBC, 6V' and 20V-DWBC3 cases.  Except in the free fermion cases, where arctic curves are expected to be analytic, 
we have no prediction for other portions of arctic curve when they exist.

\section{6V Model}\label{6vsec}

\subsection{Partition function and one-point function}

\subsubsection{Inhomogeneous partition function}

A general result \cite{IKdet} provides a determinant formula for the partition function $Z_n^{6V}[\bu,\bv]$ of the inhomogeneous 6V-DWBC model, with horizontal/vertical spectral parameters $\bu=u_1,u_2,...,u_n$/$\bv=v_1,v_2,...,v_n$.
\begin{thm}
Let
$$m(u,v):=m(u-v):=\frac{1}{\sin(u-v+\eta)\sin(u-v-\eta)}$$
The full inhomogeneous 6V-DWBC partition function reads:
\begin{equation}\label{sixinho}
Z_{n}^{6V}[\bu,\bv]=\rho^{n^2}\sin^n(\eta)\det_{1\leq i,j \leq n}\big(m(u_i,v_j)\big)
\times 
\frac{\prod_{i,j=1}^n 
\sin(u_i-v_j+\eta)\sin(u_i-v_j-\eta)}{\prod_{1\leq i<j \leq n}\sin(u_i-u_j)\,
\sin(v_j-v_i)}
\end{equation}
\end{thm}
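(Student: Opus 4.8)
The plan is to follow the classical route of Korepin and Izergin: isolate a short list of properties that pin down the inhomogeneous DWBC partition function uniquely, and then check that the right‑hand side of \eqref{sixinho} obeys the same list. Throughout, write $a(x)=\rho\sin(x+\eta)$, $b(x)=\rho\sin(x-\eta)$, $c=\rho\sin(2\eta)$ for the weights \eqref{6vweights} as functions of the rapidity difference $x=u_i-v_j$, and let $R_n[\bu,\bv]$ denote the right‑hand side of \eqref{sixinho}.

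\emph{Step 1: three defining properties of $Z_n^{6V}$.} (a) \emph{Symmetry.} By the usual ``train argument'' for the Yang--Baxter equation, $Z_n^{6V}[\bu,\bv]$ is separately symmetric in $u_1,\dots,u_n$ and in $v_1,\dots,v_n$: insert an $R$-matrix between two adjacent rows, drag it across the lattice using YBE, and observe that upon reaching the frozen DWBC arrows on the W and E boundaries it acts as a scalar (the same scalar on both sides), so the partition function is unchanged while the two horizontal rapidities are exchanged; the argument for columns is identical. (b) \emph{Analytic structure.} Regarded as a function of $u_1$ alone, $Z_n^{6V}$ is a trigonometric polynomial of controlled degree: one checks that the bottom row of any DWBC configuration contains exactly one $c$-type vertex, so each configuration contributes from that row one factor $c$ together with $n-1$ factors $\sin(u_1-v_j\pm\eta)$; hence $Z_n^{6V}\in\mathrm{span}\{e^{iku_1}:\ k\equiv n-1\!\!\pmod 2,\ |k|\le n-1\}$, an $n$-dimensional space, and so $Z_n^{6V}$ is determined by its values at any $n$ generic values of $u_1$. (c) \emph{Freezing recursion.} Setting $u_1-v_1=\eta$ makes $b(u_1-v_1)=0$, forcing the corner vertex $(1,1)$ to be a $c$-type; the ice rule then propagates and freezes the whole first row (one $c$-vertex followed by a string of identical $a$-vertices) and then the whole first column, leaving a sub-lattice that again carries DWBC with rapidities $u_2,\dots,u_n$ and $v_2,\dots,v_n$. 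This gives
\[
Z_n^{6V}[\bu,\bv]\,\big|_{\,u_1-v_1=\eta}
= c\,\Big(\prod_{j=2}^{n} a(u_1-v_j)\Big)\Big(\prod_{i=2}^{n} a(u_i-v_1)\Big)\,
Z_{n-1}^{6V}[u_2,\dots,u_n;\,v_2,\dots,v_n],
\]
and by (a) the analogue holds with the index $1$ replaced by any $i,j$. Together with the base case $Z_1^{6V}=c$, properties (a)--(c) determine $Z_n^{6V}$ by induction: for fixed $n$, (c) supplies the value of $u_1\mapsto Z_n^{6V}$ at the $n$ points $u_1=v_j+\eta$, and (b) says these $n$ values determine it.

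\emph{Step 2: the determinant obeys the same properties.} Symmetry of $R_n$ in the $u$'s and in the $v$'s is manifest, since $\det(m(u_i,v_j))$ and $\prod_{i<j}\sin(u_i-u_j)\sin(v_j-v_i)$ are each totally antisymmetric while $\prod_{i,j}\sin(u_i-v_j+\eta)\sin(u_i-v_j-\eta)$ is symmetric. For the analytic structure, pull the factors $\prod_k\sin(u_i-v_k+\eta)\sin(u_i-v_k-\eta)$ into the $i$-th row:
\[
\det_{i,j}\big(m(u_i,v_j)\big)\,\prod_{i,j}\sin(u_i-v_j+\eta)\sin(u_i-v_j-\eta)
=\det_{i,j}\Big(\prod_{k\ne j}\sin(u_i-v_k+\eta)\sin(u_i-v_k-\eta)\Big).
\]
The first row of the right-hand determinant is, in $u_1$, a product of $2(n-1)$ sine factors, so the determinant lies in $\mathrm{span}\{e^{2i\ell u_1}:|\ell|\le n-1\}$, and it vanishes whenever $u_1=u_m$ for some $m>1$ (two equal rows of $m$), hence is divisible by $\prod_{m>1}\sin(u_1-u_m)$; dividing, the $u_1$-dependence of $R_n$ lands in exactly the $n$-dimensional span of Step 1(b). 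For the recursion, specialize this determinant at $u_1-v_1=\eta$: every first-row entry with $j>1$ acquires a factor $\sin(u_1-v_1-\eta)=0$, so that row collapses to $(\ast,0,\dots,0)$ and the determinant reduces to the corresponding $(n-1)\times(n-1)$ minor times an explicit monomial in the surviving sines; collecting this with the remaining products and the change $\rho^{n^2}\to\rho^{(n-1)^2}$ reproduces precisely the prefactor $c\prod_{j\ge2}a(u_1-v_j)\prod_{i\ge2}a(u_i-v_1)$ and the $R_{n-1}$ of Step 1(c). The case $n=1$ is immediate and fixes the overall normalization constant. By the uniqueness of Step 1, $R_n=Z_n^{6V}$.

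\emph{Where the work is.} The genuinely delicate point is the degree matching between Steps 1(b) and 2: one must verify that $Z_n^{6V}$, as a function of $u_1$, lies in an \emph{exactly} $n$-dimensional trigonometric span (this is where the ``one $c$-vertex in the bottom row'' fact is used), that the quotient of the two trigonometric polynomials produced for $R_n$ lands in that same span, and that the $n$ evaluation functionals at $u_1=v_j+\eta$ are independent on it (true for generic $v_j$, the general case following by continuity). Everything else is routine once this is settled; the other mildly error-prone ingredient is the bookkeeping of the $\rho$-powers and of exactly which product factors are consumed in the freezing step, so as to recover the stated overall constant.
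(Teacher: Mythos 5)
The paper itself offers no proof of this statement: it is quoted verbatim from the Izergin--Korepin literature (\cite{IKdet}). Your proposal reconstructs exactly the classical argument behind that citation --- Korepin's characterizing properties (separate symmetry in the $u_i$ and $v_j$ via the Yang--Baxter train argument, membership of $u_1\mapsto Z_n^{6V}$ in the $n$-dimensional span $\{e^{iku_1}:|k|\le n-1,\ k\equiv n-1\ \mathrm{mod}\ 2\}$, and the freezing recursion at $u_1-v_1=\eta$), followed by Izergin's verification that the determinant expression satisfies the same list. The outline is sound: the degree count via the single $c$-vertex in the boundary row, the divisibility of the row-cleared determinant by $\prod_{m>1}\sin(u_1-u_m)$ (legitimate here because each cleared entry is a product of an even number of sines, hence $\pi$-periodic in $u_1$), and the collapse of the first row of the determinant at the freezing point are all correct and land in the right function space.

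One concrete point does not close as written. You assert that the case $n=1$ ``is immediate and fixes the overall normalization constant,'' but evaluating the right-hand side of \eqref{sixinho} at $n=1$ gives $\rho\sin(\eta)\,m(u_1,v_1)\sin(u_1-v_1+\eta)\sin(u_1-v_1-\eta)=\rho\sin(\eta)$, whereas $Z_1^{6V}=c=\rho\sin(2\eta)$; moreover your own recursion multiplies by $c=\rho\sin(2\eta)$ at each induction step, so the argument produces the prefactor $\sin^n(2\eta)$, not $\sin^n(\eta)$. The constant printed in the statement is in fact inconsistent with the paper's own homogeneous formula \eqref{6Vpart} (which normalizes $Z_n^{6V}$ by $\sin^n(2\eta)$) and is evidently a typo; your base-case check is precisely where a careful execution of the proof would detect and correct it, so do not wave it through. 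The only other caveat is conventional rather than substantive: whether the corner vertex at $u_1-v_1=\eta$ is forced to type $c$ by $b=0$ (rather than by $a=0$ at $u_1-v_1=-\eta$), and whether the frozen row and column then carry $a$- or $b$-weights, depends on which ``through'' pair the paper labels $a$ versus $b$; this only shifts bookkeeping, since the determinant obeys the analogous collapse at both degeneration points $u_i-v_j=\pm\eta$.
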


\subsubsection{Homogeneous limit}

The homogeneous limit $Z_n^{6V}[u-v]$ of the inhomogeneous partition function $Z_n^{6V}[\bu,\bv]$ in which $u_i\to u$ and $v_i\to v$ for all $i$ involves the quantity
\begin{equation}\label{deltahomo6v}
\Delta_n[u-v]:=\lim_{u_1,u_2,...,u_n\to u\atop v_1,v_2,...,v_n\to v} \frac{\det_{1\leq i,j \leq n}\big(m(u_i,v_j)\big)}{\prod_{1\leq i<j \leq n}(u_i-u_j)(v_j-v_i)}=: \frac{1}{\prod_{i=1}^{n-1} (i!)^2} D_n[u-v]
\end{equation}
Using Taylor expansion of rows and columns leads to the determinant
$$D_n[u]=\det_{0\leq i,j\leq n-1}\left(\partial_u^{i+j}m(u)\right)$$

This determinant obeys a simple quadratic relation as a consequence of Pl\"ucker/Desnanot-Jacobi relations (up to some permutation of rows and columns) 
relating a determinant to some of its minors of size 1 and 2 less, summarized in the following lemma. 

\begin{lemma}\label{desnajac}
Given an $n+1\times n+1$ square matrix $M$, its determinant $|M|$ and minors $|M|_{a}^{b}$ (with row $a$ and column $b$ removed), and $|M|_{a_1,a_2}^{b_1,b_2}$ (with rows $a_1,a_2$ and columns $b_1,b_2$ removed are related via:
$$ |M| \times |M|_{n,n+1}^{n,n+1} =|M|_n^n \times |M|_{n+1}^{n+1}-|M|_{n+1}^n \times |M|_n^{n+1} $$
\end{lemma}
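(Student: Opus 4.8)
The statement is the Desnanot--Jacobi (Dodgson condensation) identity, equivalently a Grassmann--Pl\"ucker relation, written here for the deletion of the last two rows and the last two columns rather than the more familiar ``first and last'' ones. The plan is therefore to re-derive it in exactly this arrangement of indices; I would do so through the adjugate matrix, since that keeps the sign bookkeeping to a minimum.

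Because both sides are polynomials in the entries of $M$, it suffices by Zariski density to treat the case $\det M\neq 0$. I would then set $A=\mathrm{adj}(M)=\det(M)\,M^{-1}$, so that $A_{ij}=(-1)^{i+j}\,|M|_j^i$, and evaluate in two ways the $2\times 2$ minor of $A$ supported on rows $n,n+1$ and columns $n,n+1$. Direct expansion gives
\[
A_{n,n}A_{n+1,n+1}-A_{n,n+1}A_{n+1,n}=|M|_n^n\,|M|_{n+1}^{n+1}-|M|_{n+1}^n\,|M|_n^{n+1},
\]
with the four factors $(-1)^{i+j}$ cancelling since the corner $\{n,n+1\}\times\{n,n+1\}$ is symmetric. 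On the other hand, Jacobi's theorem on the minors of the adjugate says that a $p\times p$ minor of $A$ on row set $I$ and column set $J$ equals $(-1)^{\sigma(I)+\sigma(J)}\det(M)^{\,p-1}$ times the complementary minor $\det\big(M[\bar J,\bar I]\big)$, with $\sigma$ the sum of the indices; for $p=2$, $I=J=\{n,n+1\}$ the prefactor is $\det M$, the complement is $\{1,\dots,n-1\}$ for both, and the right-hand side is $|M|\cdot|M|_{n,n+1}^{n,n+1}$. Equating the two evaluations yields the lemma.

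If one prefers to avoid invoking Jacobi's minor theorem, the alternative is to conjugate $M$ by the permutation that cyclically shifts $\{1,\dots,n\}$ and fixes $n+1$, applied simultaneously to rows and columns: this moves the pair $\{n,n+1\}$ into the first and last positions, leaves $\det M$ and the three ``square-corner'' determinants appearing in the identity unchanged, and multiplies each of $|M|_n^{n+1}$ and $|M|_{n+1}^n$ by $(-1)^{n-1}$ so that their product is unchanged as well; one is then reduced to the central Desnanot--Jacobi identity, which has the classical short proof by sweeping out the first and last columns (or may simply be cited). The only step needing care in either route is tracking the signs and the orderings of the surviving rows and columns; there is no analytic content. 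Once the lemma is in hand, it is applied to the Hankel matrix $\big(\partial_u^{i+j}m(u)\big)_{0\le i,j\le n}$ to produce the quadratic recursion for $D_n[u]$.
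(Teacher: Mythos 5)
Your proof is correct. Note that the paper itself offers no proof of this lemma: it is stated as the classical Desnanot--Jacobi/Pl\"ucker identity, ``up to some permutation of rows and columns,'' and simply invoked. Your adjugate-based derivation is a clean and standard way to actually establish it in the stated index arrangement: the $2\times 2$ minor of $\mathrm{adj}(M)$ on rows and columns $\{n,n+1\}$ expands to $|M|_n^n\,|M|_{n+1}^{n+1}-|M|_{n+1}^n\,|M|_n^{n+1}$ (the cofactor signs cancel since $(-1)^{(i+j)}$ over the symmetric corner contributes $(-1)^{2(2n+1)}=1$), while Jacobi's minor theorem identifies it with $|M|\cdot|M|_{n,n+1}^{n,n+1}$; the reduction to the nonsingular case by polynomial identity is the right way to avoid dividing by $\det M$. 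Your alternative route by simultaneous row/column conjugation with the cycle on $\{1,\dots,n\}$ also checks out: the two ``off-diagonal'' minors each pick up $(-1)^{n-1}$, so their product is unchanged, and one lands on the usual first-and-last-index form of the identity. Either argument is complete; the only substantive difference from the paper is that you prove what it cites.
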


Applying this to the matrix $M=\big(\partial_u^{i+j}m(u)\big)_{0\leq i,j\leq n}$, we easily get
$$ D_{n+1}[u]\, D_{n-1}[u]= D_{n}[u] \, \partial_u^2 D_{n}[u]- (\partial_u D_{n}[u])^2 =D_n[u]^2 \partial_u^2 {\rm Log}(D_n[u]) $$
As a direct consequence,  we have
\begin{thm}\label{6vpartthm}
The quantity $\Delta_n[u]$ obeys the following recursion relation for all $n\geq 1$:
\begin{equation}\label{6vdelta} \frac{\Delta_{n+1}[u] \, \Delta_{n-1}[u] }{\Delta_n[u]^2} = \frac{1}{n^2} \partial_u^2 {\rm Log}(\Delta_n[u]) \end{equation}
with the convention that $\Delta_0[u]=1$.
\end{thm}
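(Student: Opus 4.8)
The plan is to read off the recursion \eqref{6vdelta} directly from the quadratic identity $D_{n+1}[u]\,D_{n-1}[u] = D_n[u]^2\,\partial_u^2{\rm Log}(D_n[u])$ already obtained above via the Desnanot--Jacobi/Pl\"ucker relation of Lemma \ref{desnajac}; essentially all that remains is to convert from $D_n$ to $\Delta_n$ and to settle the small-$n$ cases.

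First I would record, from the definition in \eqref{deltahomo6v}, that $\Delta_n[u] = c_n\,D_n[u]$ with the \emph{$u$-independent} constant $c_n := \prod_{i=1}^{n-1}(i!)^{-2}$ (an empty product, hence equal to $1$, when $n=0$ or $n=1$), together with $D_0[u]=1$ and $D_1[u]=m(u)$. Because $c_n$ does not depend on $u$, one has $\partial_u^2{\rm Log}(\Delta_n[u]) = \partial_u^2{\rm Log}(D_n[u])$, where $\partial_u^2{\rm Log}(f)$ is shorthand for $(ff''-(f')^2)/f^2$. Dividing the $D$-identity by $\Delta_n[u]^2 = c_n^2\,D_n[u]^2$ and substituting $D_{n\pm1} = \Delta_{n\pm1}/c_{n\pm1}$ turns it into
\[
\frac{\Delta_{n+1}[u]\,\Delta_{n-1}[u]}{\Delta_n[u]^2} = \frac{c_{n+1}\,c_{n-1}}{c_n^2}\,\partial_u^2{\rm Log}(\Delta_n[u]),
\]
so that the theorem reduces to the purely combinatorial assertion $c_{n+1}c_{n-1}/c_n^2 = 1/n^2$.

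The second step is to check this last identity: from $c_{n+1} = c_n/(n!)^2$ and $c_{n-1} = c_n\bigl((n-1)!\bigr)^2$, both valid for all $n\ge 1$, one gets $c_{n+1}c_{n-1}/c_n^2 = \bigl((n-1)!/n!\bigr)^2 = 1/n^2$, which is exactly \eqref{6vdelta}. The third step is the base of the recursion: for $n=1$, Lemma \ref{desnajac} is applied to the $2\times2$ matrix $\bigl(\partial_u^{i+j}m(u)\bigr)_{0\le i,j\le 1}$ and its $0\times 0$ minor is interpreted as $D_0 = 1$; the same manipulation then gives \eqref{6vdelta} at $n=1$, where it reads $\Delta_2[u]/\Delta_1[u]^2 = \partial_u^2{\rm Log}(\Delta_1[u])$ and is directly verified from $\Delta_1[u] = m(u)$ and $\Delta_2[u] = m(u)m''(u)-m'(u)^2$. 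This also shows that the convention $\Delta_0[u]=1$ is the natural (indeed forced) one.

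I do not anticipate a real obstacle here: the analytic heart of the statement has already been packaged into the $D_n$-identity coming from Lemma \ref{desnajac}, and what is left is elementary. The only points demanding a little care are that the normalization constants conspire to produce \emph{exactly} the prefactor $1/n^2$ (rather than some other rational function of $n$), and that the degenerate low-$n$ instances---empty products and $0\times 0$ determinants---be handled consistently so that the recursion holds from $n=1$ onward with $\Delta_0[u]=1$.
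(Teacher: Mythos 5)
Your proposal is correct and follows exactly the paper's route: the paper derives $D_{n+1}D_{n-1}=D_n^2\,\partial_u^2{\rm Log}(D_n)$ from Lemma \ref{desnajac} and then states the theorem "as a direct consequence," the omitted step being precisely your verification that the $u$-independent normalization constants satisfy $c_{n+1}c_{n-1}/c_n^2=1/n^2$. Your treatment of the $n=1$ base case and the convention $\Delta_0[u]=1$ is a welcome explicit check of what the paper leaves implicit.
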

Note that this relation determines $\Delta_n[u]$ recursively, from the initial data $\Delta_0[u]=1$ and $\Delta_1[u]=m[u]$.
Finally the homogeneous partition function $Z_n^{6V}[u,v]$ is expressed as
\begin{equation}\label{6Vpart} \frac{Z_n^{6V}[u-v]}{\sin^n(2\eta)}= \Delta_n[u-v] \, \left(\rho \sin(u-v+\eta)\sin(u-v-\eta)\right)^{n^2} \end{equation}

\subsubsection{One-point function}

We now consider a slightly more general limit, in which we take $u_1,u_2,...,u_n\to u$ and $v_1,v_2,...,v_{n-1}\to v$ but the last vertical spectral parameter is kept arbitrary,
say $v_n=w=v+\xi$. The corresponding partition function $Z_n^{6V}[u-v;\xi]$ is again obtained as a limit of the inhomogeneous formula \eqref{sixinho}.
We have
\begin{eqnarray} \frac{Z_n^{6V}[u-v;\xi]}{\sin^n(2\eta)}&=& \Delta_n[u-v;\xi] \ \left(\rho \sin(u-v+\eta)\sin(u-v-\eta)\right)^{n(n-1)} \nonumber \\
&&\qquad\qquad \times \,\left(\rho \sin(u-v-\xi+\eta)\sin(u-v-\xi-\eta)\right)^{n} \label{ref6vZ}
\end{eqnarray}
in terms of a function
$$\Delta_n[u-v;\xi]:= \lim_{{u_1,u_2,...,u_n\to u\atop v_1,v_2,...,v_{n-1}\to v}\atop v_n\to v+\xi} \frac{\det_{1\leq i,j \leq n}\big(m(u_i-v_j)\big)}{\prod_{1\leq i<j \leq n}\sin(u_i-u_j)\sin(v_j-v_i)}=: \frac{(-1)^{n-1}\, (n-1)!}{\sin^{n-1}(\xi)\, \prod_{i=1}^{n-1} (i!)^2} D_n[u-v;\xi]$$
where
\begin{equation}\label{6vDN}
D_n[u;\xi]=\det\left( \{\partial_u^{i+j} m(u)\}_{0\leq i\leq n-1\atop 0\leq j\leq n-2} \Bigg\vert \{\partial_u^i m(u-\xi) \}_{0\leq i\leq n-1}\right) 
\end{equation}
identical to $D_n[u]$ except for its last column.

We define the ``one-point function" as the ratio
$$H_n^{6V}[u;\xi]:= \frac{Z_n^{6V}[u;\xi]}{Z_n^{6V}[u]}= \frac{\Delta_n[u;\xi]}{\Delta_n[u]} \, \left(\frac{\sin(u-\xi+\eta)\sin(u-\xi-\eta)}{\sin(u+\eta)\sin(u-\eta)}\right)^{n}
$$

Applying again Lemma \ref{desnajac} this time with the matrix $M$ in the expression \eqref{6vDN}  for $D_{n+1}[u;\xi]$, we find that
\begin{equation}\label{newdenjac6v} D_{n+1}[u;\xi]\, D_{n-1}[u] = D_n[u] \, \partial_u D_n[u;\xi]- D_n[u;\xi]\, \partial_u D_n[u] \end{equation}
We also introduce the reduced one-point function
\begin{equation}\label{red6v}
H_n[u;\xi]:=(-1)^{n-1}\,(n-1)! \frac{D_n[u;\xi]}{D_n[u]}= \sin^{n-1}(\xi)\, \frac{\Delta_n[u;\xi]}{\Delta_n[u]} 
\end{equation}
in terms of which
\begin{equation}
\label{oneptfromred6v}
H_n^{6V}[u;\xi]=\sin(\xi) \,H_n[u;\xi]\, \left(\frac{\sin(u-\xi+\eta)\sin(u-\xi-\eta)}{\sin(u+\eta)\sin(u-\eta)\sin(\xi)}\right)^{n}
\end{equation}

The reduced one-point function is determined by the following relation, as a direct consequence of \eqref{newdenjac6v}.
\begin{thm}\label{6v1ptthm}
The reduced one-point function $H_n[u;\xi]$ of the 6V-DWBC model satisfies the following relation:
\begin{equation}\label{6vonept} \frac{H_{n+1}[u;\xi]}{H_n[u;\xi]}\, \frac{\Delta_{n+1}[u]\,\Delta_{n-1}[u]}{\Delta_n[u]^2} +\frac{1}{n} \partial_u {\rm Log}(H_n[u;\xi])=0 \end{equation}
\end{thm}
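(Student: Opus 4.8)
The plan is to obtain \eqref{6vonept} directly from the Desnanot--Jacobi identity \eqref{newdenjac6v} by dividing through by $D_n[u]^2$ and then translating the resulting bilinear relation into the normalizations $H_n$ and $\Delta_n$ via \eqref{deltahomo6v} and \eqref{red6v}.

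First I would rewrite \eqref{newdenjac6v} as
$$\frac{D_{n+1}[u;\xi]\,D_{n-1}[u]}{D_n[u]^2}=\frac{\partial_u D_n[u;\xi]}{D_n[u]}-\frac{D_n[u;\xi]\,\partial_u D_n[u]}{D_n[u]^2}=\partial_u\!\left(\frac{D_n[u;\xi]}{D_n[u]}\right),$$
so that the right-hand side is the $u$-derivative of the ratio $D_n[u;\xi]/D_n[u]$. By \eqref{red6v} one has $D_n[u;\xi]/D_n[u]=(-1)^{n-1}H_n[u;\xi]/(n-1)!$, hence the right-hand side equals $\big((-1)^{n-1}/(n-1)!\big)\,\partial_u H_n[u;\xi]$.

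Next I would treat the left-hand side. Using \eqref{red6v} for index $n+1$, factor $D_{n+1}[u;\xi]=\big((-1)^{n}/n!\big)\,H_{n+1}[u;\xi]\,D_{n+1}[u]$, so the left-hand side becomes $\big((-1)^n/n!\big)\,H_{n+1}[u;\xi]\cdot D_{n+1}[u]D_{n-1}[u]/D_n[u]^2$. The remaining purely $\Delta$-type factor is evaluated using $D_m[u]=\big(\prod_{i=1}^{m-1}(i!)^2\big)\Delta_m[u]$ from \eqref{deltahomo6v}; the product of factorials telescopes to $(n!/(n-1)!)^2=n^2$, giving $D_{n+1}[u]D_{n-1}[u]/D_n[u]^2=n^2\,\Delta_{n+1}[u]\Delta_{n-1}[u]/\Delta_n[u]^2$. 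Collecting terms, the left-hand side is $(-1)^n\,\dfrac{n}{(n-1)!}\,H_{n+1}[u;\xi]\,\dfrac{\Delta_{n+1}[u]\Delta_{n-1}[u]}{\Delta_n[u]^2}$.

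Finally I would equate the two expressions, cancel the common factor $(-1)^{n-1}/(n-1)!$, and divide by $-n\,H_n[u;\xi]$, which yields exactly \eqref{6vonept}. I do not expect a genuine obstacle here: the only point requiring care is the bookkeeping of the $(i!)^2$ products and of the $(-1)^{n-1}(n-1)!$ normalizations in \eqref{red6v}, and I would sanity-check the conventions by verifying \eqref{newdenjac6v} at $n=1$ (with $D_0[u]=1$, $D_1[u]=m(u)$, $D_1[u;\xi]=m(u-\xi)$) and confirming that the telescoping factor is indeed $n^2$ rather than, say, $(n+1)^2$ or $1$.
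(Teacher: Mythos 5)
Your proposal is correct and follows exactly the route the paper intends: the paper presents Theorem \ref{6v1ptthm} as ``a direct consequence of \eqref{newdenjac6v},'' and your computation — dividing by $D_n[u]^2$ to recognize $\partial_u(D_n[u;\xi]/D_n[u])$, then converting via \eqref{red6v} and the telescoping factor $n^2$ from \eqref{deltahomo6v} — is precisely the omitted bookkeeping, with all signs and factorials coming out right.
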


\subsection{Large $n$ limit: free energy and one-point function asymptotics}

In the following sections, we use the fact that the 6V weights depend on the quantity $u-v$ only. Without loss of generality we shall set $v=0$ from now on.

\subsubsection{Free energy}
In this section, we reproduce an argument of \cite{KORZIN,CP2009} leading to the large $n$ asymptotics of the partition function of the 6V-DWBC model.

The free energy per site $f^{6V}[u]$ of the 6V-DWBC model is defined via the large $n=N$ limit 
$$ f^{6V}[u] =-\lim_{N\to \infty} \frac{1}{N^2} {\rm Log}(Z_N^{6V}[u]) $$
or equivalently as the leading asymptotics $Z_N^{6V}[u]\simeq_{N\to\infty}  e^{-N^2f^{6V}[u]}$. Substituting this into \eqref{6Vpart}, we get:
\begin{equation}\label{sixvf}
f^{6V}[u] =f[u] -{\rm Log}\left(\rho \sin(u+\eta)\sin(u-\eta)\right)  \end{equation}
in terms of the limit
$$f[u]:=-\lim_{N\to \infty} \frac{1}{N^2} {\rm Log}(\Delta_N[u]) $$
Finally, substituting the large $N$ asymptotics $\Delta_N[u]\simeq_{N\to\infty} e^{-N^2 f[u]}$ into \eqref{6vdelta} yields the following differential equation (1D Liouville equation)
$$e^{-2 f[u]} +\partial_u^2 f[u]=0 $$
To fix the solution, let us derive some symmetry and some limit of $\Delta_N[u]$. 

First note that the reflection symmetry $u\to \pi-u$ from \eqref{sym6v}, together with the relation \eqref{sixvf} imply that
$$f[\pi-u]=f[u] .$$
Next let us consider the limit $u \to \eta$ of $\Delta_n[u]$. Setting $u=\eta+\epsilon$, we may perform the homogeneous limit \eqref{deltahomo6v} by setting $u_i=u+\epsilon x_i$,
$v_i=\epsilon y_i$ (recall we have set $v=0$) and taking all $x_i,y_i\to 0$. We have for small $\epsilon$
$$ m(u_i-v_j)= \frac{1}{\epsilon \sin(2\eta)\, (1+x_i-y_j)} +O(1) $$
Using the Cauchy determinant formula, we find that
$$\frac{\det_{1\leq i,j \leq n}(m(u_i-v_j))}{\prod_{1\leq i<j\leq n} (u_i-u_j)(v_j-v_i)}\simeq_{\epsilon\to 0} \frac{1}{\epsilon^{n^2}} \frac{\det\left( \frac{1}{1+x_i-y_j} \right) }{\prod_{i<j} (x_i-x_j)(y_j-y_i)} =\frac{1}{\epsilon^{n^2}}\prod_{i,j=1}^n \frac{1}{1+x_i-y_j}\to_{x_i\to 0\atop y_j\to 0} \frac{1}{\epsilon^{n^2}}$$
We deduce that 
$$\left\{\lim_{N\to \infty} -\frac{1}{N^2} {\rm Log}(\Delta_N[\eta+\epsilon])\right\}\Bigg\vert_{\epsilon\to 0} \simeq   {\rm Log}(\epsilon) \ \Rightarrow \ f[\eta+\epsilon]\vert_{\epsilon\to 0}\simeq {\rm Log}(\epsilon)$$

Defining $W[u]:=e^{f[u]}$, we find that $W$ satisfies the following conditions:
\begin{eqnarray*}
W[u]\, \partial_u^2W[u]-(\partial_u W[u])^2&=&\left\vert \begin{matrix} W[u] & \partial_u W[u]\\
\partial_u W[u] & \partial_u^2 W[u] \end{matrix}\right\vert=1\\
W[\pi-u]=W[u], \quad 
W[\eta]&=&0 .
\end{eqnarray*}
The constant Wronskian condition implies that $W[u]$ obeys a second order linear differential equation, with general solution of the form $W[u]=\frac{\sin(\al u +\beta)}{\al}$.
The parameter $\beta$ is fixed by the vanishing condition $W[\eta]=0$, and $\al$ by the symmetry condition $W[\pi-u]=W[u]$. We find the solution 
\begin{equation}\label{W6v}W[u]=\frac{\sin(\al(u-\eta))}{\al}
\end{equation}
where $\al(\pi-u-\eta)=\pi-\al(u-\eta)$, and finally
\begin{equation}\label{fsol}f[u]={\rm Log}\left(\frac{\sin(\al(u-\eta))}{\al} \right), \qquad  \al=\frac{\pi}{\pi-2\eta} \end{equation}
Substituting this into \eqref{sixvf}, we finally get the thermoynamic free energy of the 6V-DWBC model in the Disordered regime:
\begin{equation}\label{6vfreeenergy}
 f^{6V}[u]= -{\rm Log}\left(\frac{\al \rho \sin(u+\eta)\sin(u-\eta)}{\sin(\al(u-\eta))} \right)  \end{equation}
with $\al$ as in \eqref{fsol}.

\subsubsection{One-point function}
We present now a simplified version of the argument given in \cite{CP2009} to derive the asymptotics of the one-point function $H_n[u;\xi]$.
By eq.\eqref{6vonept} we may infer the large $n=N$ leading behavior of $H_n[u;\xi]$ to be:  
\begin{equation}\label{asymtoH6v} H_N[u;\xi]\simeq_{N\to \infty} e^{-N \psi[u;\xi]} \end{equation}
Substituting this into \eqref{6vonept} yields the differential equation:
$$ e^{-\psi[u;\xi]-2 f[u]} -\partial_u \psi[u;\xi]=0  \ \Rightarrow \ \partial_u e^{\psi[u;\xi]} =e^{-2 f[u]} $$
Using the result \eqref{fsol} for $f[u]$, this is easily integrated into
$$e^{\psi[u;\xi]}= c[\xi] -\al\cot(\al(u-\eta)) $$
for some integration constant $c[\xi]$ independent of $u$, and $\al$ as in \eqref{fsol}.
To fix the integration constant, let us consider the limit when $u-\xi-\eta\to 0$, by setting $\xi=u-\eta+\epsilon$ for a small $\epsilon\to 0$.
Noting that
$$\left\{\partial_u^i  m[u-\xi]\right\}\Big\vert_{\xi=u-\eta+\epsilon}=-\frac{i!}{\sin(2\eta)\, \epsilon^{i+1}}+O(\epsilon^{-i}) $$
we see that the determinant for $D_N[u;\xi]$ \eqref{6vDN} is dominated by the term in the last row and column $i=j=N-1$, resulting in the leading behavior
\begin{eqnarray*}\frac{D_N[u;\xi]}{D_N[u]} &\simeq& -(N-1)!\,\frac{D_{N-1}[u]}{\sin(2\eta) \, \epsilon^{N}\, D_N[u]}\simeq  (N-1)!\,\frac{W^{2N}}{\epsilon^N}\\
&\Rightarrow& H_N[u;u-\eta+\epsilon]\simeq_{\epsilon\to 0}  
\left(\frac{W^2}{\epsilon\, \sin(u-\eta)} \right)^N
\end{eqnarray*}
where we have used the defining relation \eqref{red6v} for $H_N[u;\xi]$.
Sending $\epsilon\to 0$, we conclude that
$$\lim_{\xi \to u-\eta} e^{\psi[u;\xi]}=0$$
This immediately gives $c[\xi]=\al \cot(\al\,\xi)$, and finally
\begin{equation}\label{psi6v}
e^{\psi[u;\xi]}=\frac{\al \sin(\al(u-\xi-\eta))}{\sin (\al\,\xi)\sin(\al(u-\eta))} 
\end{equation}
Collecting all the above results, we finally get the asymptotics of the 6V one-point function.

\begin{thm}\label{oneptasy6v}
The 6V-DWBC one-point function $H_n^{6V}[u;\xi]$ has the following large $n=N$ behavior:
\begin{eqnarray*}H_N^{6V}[u;\xi]&\simeq& e^{-N \psi^{6V}[u;\xi]} \\
\psi^{6V}[u;\xi]&=& -{\rm Log}\left(
\frac{\sin (\al\,\xi)\sin(\al(u-\eta))\sin(u-\xi+\eta)\sin(u-\xi-\eta)}{\al \sin(\al(u-\xi-\eta))\sin(\xi)\sin(u+\eta)\sin(u-\eta)} \right) \end{eqnarray*}
with $\al$ as in \eqref{fsol}.
\end{thm}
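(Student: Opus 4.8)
The route I would take is the standard WKB-type extraction of a one-point function from its exact bilinear recursion, followed by a short algebraic conversion. The key exact input is Theorem~\ref{6v1ptthm}: the reduced one-point function satisfies
\[
\frac{H_{n+1}[u;\xi]}{H_n[u;\xi]}\,\frac{\Delta_{n+1}[u]\,\Delta_{n-1}[u]}{\Delta_n[u]^2}+\frac1n\,\partial_u{\rm Log}\big(H_n[u;\xi]\big)=0 ,
\]
while the free-energy analysis already gives $\Delta_N[u]\simeq e^{-N^2 f[u]}$ with $f[u]={\rm Log}\big(\sin(\al(u-\eta))/\al\big)$, hence $\Delta_{n+1}\Delta_{n-1}/\Delta_n^2\to e^{-2f[u]}$. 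Postulating $H_N[u;\xi]\simeq e^{-N\psi[u;\xi]}$ with a smooth exponent, the leading order of the recursion ($n=N\to\infty$) collapses to $e^{-\psi[u;\xi]-2f[u]}-\partial_u\psi[u;\xi]=0$, i.e. $\partial_u\big(e^{\psi[u;\xi]}\big)=e^{-2f[u]}=\al^2/\sin^2(\al(u-\eta))$, which integrates in $u$ to $e^{\psi[u;\xi]}=c[\xi]-\al\cot(\al(u-\eta))$ for some function $c$ of $\xi$ alone.

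To determine $c[\xi]$ I would use the boundary value of $H_N$ in the degenerate limit $\xi\to u-\eta$. Writing $\xi=u-\eta+\epsilon$, the entries $\partial_u^i m(u-\xi)$ in the last column of the determinant $D_N[u;\xi]$ of \eqref{6vDN} blow up like $-i!/(\sin(2\eta)\epsilon^{i+1})$, so $D_N[u;\xi]$ is dominated by its $(N-1,N-1)$ entry times the cofactor $D_{N-1}[u]$; feeding this into the defining relation \eqref{red6v} together with the $\Delta$-asymptotics gives $H_N[u;u-\eta+\epsilon]\simeq\big(W[u]^2/(\epsilon\sin(u-\eta))\big)^N$ with $W[u]=\sin(\al(u-\eta))/\al$, hence $e^{\psi[u;u-\eta+\epsilon]}\to 0$ as $\epsilon\to0$. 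Setting $\xi=u-\eta$ in $e^{\psi[u;\xi]}=c[\xi]-\al\cot(\al(u-\eta))$ then forces $c[u-\eta]=\al\cot(\al(u-\eta))$ for every $u$; since $c$ depends on $\xi$ only, $c[\xi]=\al\cot(\al\xi)$ identically, and the cotangent-difference identity turns this into $e^{\psi[u;\xi]}=\al\sin(\al(u-\xi-\eta))/\big(\sin(\al\xi)\sin(\al(u-\eta))\big)$, i.e. \eqref{psi6v}.

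With $\psi[u;\xi]={\rm Log}\,e^{\psi[u;\xi]}$ in hand, the theorem follows by converting back from the reduced to the genuine one-point function: taking $-\tfrac1N{\rm Log}$ of \eqref{oneptfromred6v} with $n=N$, the $O(1)$ factor $\sin(\xi)$ drops out of the exponential rate and one is left with $\psi^{6V}[u;\xi]=\psi[u;\xi]-{\rm Log}\big(\frac{\sin(u-\xi+\eta)\sin(u-\xi-\eta)}{\sin(u+\eta)\sin(u-\eta)\sin(\xi)}\big)$; merging the two logarithms gives precisely the displayed formula.

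The genuinely delicate point — the gap between this argument and a fully rigorous proof — is the passage from the exact recursion to the differential equation. It presupposes that $H_N[u;\xi]$ really does have a clean exponential rate with a $u$-differentiable exponent, that $H_{n+1}/H_n\simeq e^{-\psi}$ up to factors negligible against the leading balance, and that the $N\to\infty$ limit commutes with the boundary limit $\xi\to u-\eta$ used to fix $c[\xi]$. These are exactly the analytic subtleties that keep the Tangent Method at the ``semi-rigorous'' level; a rigorous treatment would instead require direct asymptotics of the determinant $D_N[u;\xi]$ (for instance via orthogonal polynomials or a Riemann--Hilbert analysis) in place of the recursion heuristic. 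Everything downstream of \eqref{psi6v} is purely formal manipulation.
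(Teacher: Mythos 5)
Your proposal is correct and follows essentially the same route as the paper: extract the differential equation $\partial_u e^{\psi}=e^{-2f}$ from the exact Desnanot--Jacobi recursion of Theorem \ref{6v1ptthm}, fix the integration constant $c[\xi]=\al\cot(\al\xi)$ via the degenerate limit $\xi\to u-\eta$ of the determinant $D_N[u;\xi]$, and convert back to $H_N^{6V}$ through \eqref{oneptfromred6v}. Your closing remarks on the heuristic status of the WKB ansatz and the interchange of limits accurately reflect the paper's own semi-rigorous stance.
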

\begin{proof}
By the relation \eqref{oneptfromred6v}, we immediately get
$$\psi^{6V}[u]=\psi[u]+{\rm Log}\left(\frac{\sin(\xi)\sin(u+\eta)\sin(u-\eta)}{\sin(u-\xi+\eta)\sin(u-\xi-\eta)} \right) $$
and the Theorem follows.
\end{proof}

\subsubsection{Refined partition functions and one-point functions}

To apply the Tangent Method, we need the large $n,k$ asymptotics of the refined partition functions $Z_{n,k}^{6V}[u]$, $k=1,2,...,n$, defined as follows.  
Given a configuration of $n$ osculating paths contributing to $Z_{n}^{6V}[u]$ (with all horizontal spectal parameters equal to $u$ and all vertical ones to $0$), let us focus on the topmost path: let us record the first visit of this path to the east-most vertical line, say at the intersection with the $k$-th horizontal line from the bottom. Note that the path accesses the last vertical via a horizontal step, and ends with $k$ vertical steps until the east-most endpoint. We define the refined partition functions $Z_{n,k}^{6V}[u]$ to be the sum of all contributions in which the topmost path has these $k+1$ last steps. 


The quantities $Z_{n,k}^{6V}[u]$ turn out to be generated by the semi-inhomogeneous partition function $Z_n^{6V}[u;\xi]$
\eqref{ref6vZ}, for which the last vertical spectral 
parameter is replaced by $\xi$. 
Introducing relative weights $(\bar a,\bar b,\bar c)$ for the last column, as the following ratios of weights at $v=\xi$ by those at $v=0$:
$${\bar a} =\frac{\sin(u-\xi+\eta)}{\sin(u+\eta)},\quad {\bar b}= \frac{\sin(u-\xi-\eta)}{\sin(u-\eta)},\quad {\bar c}=1\ , $$
we have the following decomposition:
$$Z_n^{6V}[u;\xi]=\sum_{k=1}^n 
Z_{n,k}^{6V}[u]\, {\bar b}^{k-1} \,{\bar c}\, {\bar a}^{n-k}={\bar a}^{n-1} \sum_{k=1}^n \tau^{k-1} Z_{n,k}^{6V}[u]$$
in terms of a parameter
\begin{equation}\label{tauparam}
\tau:= \frac{\bar b}{\bar a} =\frac{\sin(u-\xi-\eta)\,\sin(u+\eta)}{\sin(u-\xi+\eta)\,\sin(u-\eta)}\end{equation}
In applying the Tangent Method, we truncate the topmost path after its last horizontal step (see an illustration in the top left of Fig.~\ref{fig:alltgt} in the pink domain). The effect of removing the last $k$ vertical steps and replacing them by empty edges is an overall multiplication by a factor $(1/c)(a/b)^{k-1}$ (as we have cut\footnote{Here we choose not to attach any weight to the end vertex at height $k$, as it will be part of the partition function of the single path treated in next section.} 
the turning c-type vertex and replaced the $k-1$ b-type vertices by a-type ones). This suggests to 
define {\it refined one-point functions} as the ratios
$$ H_{n,k}[u]:=\frac{1}{c}\left(\frac{a}{b}\right)^{k-1}\frac{Z_{n,k}^{6V}[u]}{Z_{n}^{6V}[u]} $$
The above relation between refined partition functions turns into the following relation between one-point function and refined one-point functions:
\begin{equation}\label{relaoneptonept} c \frac{H_n^{6V}[u;\xi]}{{\bar a}^{n-1}}=\sum_{k=1}^n t^{k-1} H_{n,k}[u] \end{equation}
where we have used a new parameter
\begin{equation}\label{tparam}
t=\frac{b}{a} \tau=\frac{\sin(u-\xi-\eta)}{\sin(u-\xi+\eta)}=:t_{6V}[\xi] \ .\end{equation}

Let us now consider the large $n=N$ scaling limit of $H_{n,k}[u]$ in which the ratio $\kappa=k/N$ is kept finite. Using the relation \eqref{relaoneptonept}
and the asymptotics of the one-point function $H_N^{6V}[u;\xi]$ of Theorem \ref{oneptasy6v}, 
we have at leading order as $N\to \infty$:
\begin{equation} H_{N,\kappa N}[u]\simeq \oint \frac{dt}{2i\pi t} e^{-N S_0(\kappa,t)}, \quad S_0(\kappa,t)= \kappa\, {\rm Log}(t) +\varphi^{6V}[u;\xi]  
\end{equation}
where we have defined
$$\varphi^{6V}[u;\xi] := \psi^{6V}[u;\xi]+{\rm Log}(\bar a) =-{\rm Log}\left(
\frac{\sin (\al\,\xi)\sin(\al(u-\eta))\sin(u-\xi-\eta)}{\al \sin(\xi)\sin(u-\eta)\sin(\al(u-\xi-\eta))} \right) .$$
Note that we are using the variable $t$ as integration variable (to extract the coefficient $H_{n,k}[u]$ of $t^{k-1}$), and that $\xi$ is an implicit function of $t$ 
upon inverting the equation $t_{6V}[\xi]=t$. The integral is dominated by the solution of the saddle-point equation $\partial_t S_0(\kappa,t)=0$ or equivalently $\partial_\xi S_0(\kappa,t_{6V}[\xi])=0$ 
resulting in
\begin{eqnarray}\label{kappa6v} \kappa&=&\kappa_{6V}[\xi]:= -\frac{t_{6V}[\xi]}{\partial_\xi t_{6V}[\xi]} \partial_\xi \varphi^{6V}[u;\xi] \nonumber \\
&&\!\!\!\!\!\!\!\!\!\!\!\!=
\left\{ \cot(u-\xi-\eta)+\cot(\xi)-\al\cot(\al \xi)-\al\cot(\al(u-\xi-\eta)) \right\}\frac{\sin(u-\xi+\eta)\sin(u-\xi-\eta)}{\sin(2\eta)} \nonumber \\
\end{eqnarray}
with $\al$ as in \eqref{fsol}.

\subsection{Paths}
\subsubsection{Partition function}
The second ingredient of the Tangent Method is the partition function $Y_{k,\ell}$ for a single weighted path in empty space with the same weights as the 6V osculating paths (see an example of such a path in the light blue domain of Fig.~\ref{fig:alltgt} top left).
Note that the path starts where the topmost one in $H_{n,k}[u]$ stopped, namely with a preliminary horizontal step, and end with a vertical step at position $\ell$ (measured from the original position in $Z_n$) on the S boundary.
Note also that all empty vertices receive the weight $a$ of \eqref{6vweights}. We may therefore factor out an unimportant overall weight $a^{n\ell}$, and 
weight the path by the product of its relative vertex weights:
$$a_0=1,\qquad b_0=\frac{b}{a}=\frac{\sin(u-\eta)}{\sin(u+\eta)},\qquad c_0=\frac{c}{a}=\frac{\sin(2\eta)}{\sin(u+\eta)} \ ,$$
for respectively a,b and c type vertices.

Let us use a step-to-step transfer matrix formulation of the path, namely a matrix $T$
describing the transfer from a step to the next. Each step may be in either of two states: horizontal or vertical, and the matrix entry is the corresponding 6V weight at the vertex shared by the step and its successor, which we multiply by an extra weight $z, w$ if the next step is horizontal, vertical respectively. This gives the matrix
$$T_{6V}=\begin{pmatrix} b_0\, z & c_0\, z\\ c_0\, w & b_0\, w \end{pmatrix}$$
allowing to express the generating function $P(z,w):=\sum_{k,\ell\geq 0} Y_{k,\ell}\, z^\ell w^{k+1}$ as
$$P(z,w)=\begin{pmatrix} 0 & 1\end{pmatrix} \cdot ({\mathbb I}- T_{6V})^{-1} \begin{pmatrix} 1\\ 0\end{pmatrix}= \frac{c_0 w}{1-b_0 z- b_0 w(1+\frac{c_0^2-b_0^2}{b_0} z )}$$
Using the new weights
$$ \gamma_1:= b_0= \frac{\sin(u-\eta)}{\sin(u+\eta)}, \quad \gamma_2:=\frac{c_0^2-b_0^2}{b_0}=\frac{\sin(3\eta-u)}{\sin(u-\eta)} $$
We deduce that
\begin{equation}\label{path6v} Y_{k,\ell} = c_0 \gamma_1^k \frac{(1+\gamma_2\, z)^k}{(1-\gamma_1\, z)^{k+1}} \Bigg\vert_{z^\ell}=c_0
\sum_{{P_1\geq 0\atop 0\leq P_2\leq k}\atop P_1+P_2= \ell} {P_1+k\choose k} {k\choose P_2} \, \gamma_1^{k+P_1}\, \gamma_2^{P_2} \end{equation}

\subsubsection{Asymptotics}

We now consider the scaling limit of large $n=N$ and $\kappa=k/N, \lambda=\ell/N$ fixed. Replacing the summation in \eqref{path6v} with an integral over $p_2=P_2/N$
and using the Stirling formula, we find the leading large $N$ behavior of $Y_{k,\ell}$:
\begin{eqnarray*}
Y_{\kappa N,\lambda N} &\simeq& \int_0^\kappa dp_2 e^{-N S_1(\kappa,p_2)}\\
S_1(\kappa,p_2)&=& p_2 {\rm Log}(\frac{p_2}{\gamma_2})+(\kappa-p_2){\rm Log}(\kappa-p_2)+(\lambda-p_2){\rm Log}(\lambda-p_2)\\
&&\qquad -(\kappa+\lambda-p_2){\rm Log}(\gamma_1(\kappa+\lambda-p_2))
\end{eqnarray*}

\subsection{Arctic curves}

We now apply the Tangent Method. We must solve for the saddle-point equations for the total action $S(\kappa,\xi,p_2)=S_0(\kappa,t[\xi])+S_1(\kappa,p_2)$, namely
$\partial_\kappa S=0$ and $\partial_{p_2} S=0$, while the last equation $\partial_\xi S=0$ eventually allows us to solve for $\kappa=\kappa[\xi]$ by using 
the result \eqref{kappa6v}.
We get:
$$\frac{ t (\kappa-p_2)}{\gamma_1(\kappa+\lambda-p_2)} =1 ,\qquad \frac{\gamma_1}{\gamma_2} \frac{p_2(\kappa+\lambda-p_2)}{(\kappa-p_2)(\lambda-p_2)} =1 $$
with the following unique solution:
$$ \frac{p_2}{\kappa}= \frac{\sin(u-3\eta)\sin(\xi)}{\sin(u-\xi-\eta)\sin(2\eta)} , \qquad \frac{\kappa}{\lambda}= \frac{\sin(u-\xi+\eta)\sin(u-\xi-\eta)}{\sin(\xi)\sin(\xi-2\eta)} $$
parameterized by $\xi$ via $t=t_{6V}[\xi]$ \eqref{tparam} and $\kappa=\kappa_{6V}[\xi]$ \eqref{kappa6v}. In particular this determines $\kappa$ as a function of $\lambda$ in the
parametric form $(\kappa,\lambda)=(\kappa_{6V}[\xi],\lambda_{6V}[\xi])$, where
$$\lambda_{6V}[\xi]:= \kappa_{6V}[\xi] \ \frac{\sin(\xi)\sin(\xi-2\eta)}{\sin(u-\xi+\eta)\sin(u-\xi-\eta)} $$
This allows us to identify the slope $A[\xi]=\frac{\kappa_{6V}[\xi]}{\lambda_{6V}[\xi]}$ and the intercept 
$B[\xi]=\kappa_{6V}[\xi]$
for the family of tangents: $F_\xi[x,y]= y+ A[\xi]\, x  -B[\xi]=0$. The parameter $\xi$ is constrained by the condition that $A[\xi]>0$ which implies that $\xi\in [u+\eta-\pi,0]$.
Using the expression for the envelope \eqref{acurve}, we arrive at the final result.

\begin{thm}\label{6VNEthm}
The NE portion of the arctic curve for the 6V-DWBC model in the Disordered regime is predicted by the Tangent Method to be given parametrically by:
$$ x=X_{NE}^{6V}[\xi]:=\frac{B'[\xi]}{A'[\xi]},\qquad y=Y_{NE}^{6V}[\xi]:=B[\xi]-\frac{A[\xi]}{A'[\xi]}B'[\xi] ,\qquad (\xi\in [u+\eta-\pi,0])$$
where
\begin{eqnarray*}
A[\xi]&=&\frac{\sin(u-\xi+\eta)\sin(u-\xi-\eta)}{\sin(\xi)\sin(\xi-2\eta)}\\
B[\xi]&=&\left\{ \cot(u-\xi-\eta)+\cot(\xi)-\al\cot(\al \xi)-\al\cot(\al(u-\xi-\eta)) \right\}\\
&&\qquad \qquad \times \, \frac{\sin(u-\xi+\eta)\sin(u-\xi-\eta)}{\sin(2\eta)}
\end{eqnarray*}
with $\al$ as in \eqref{fsol}.
\end{thm}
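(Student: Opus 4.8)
The plan is to assemble the arctic curve from the pieces already derived in the preceding subsections, so the ``proof'' is essentially a bookkeeping exercise combined with a single saddle-point computation. First I would set up the total action $S(\kappa,\xi,p_2) = S_0(\kappa,t_{6V}[\xi]) + S_1(\kappa,p_2)$, where $S_0$ comes from the asymptotics of the refined one-point function (built on Theorem \ref{oneptasy6v} via the relation \eqref{relaoneptonept}) and $S_1$ comes from the Stirling estimate of the single-path partition function $Y_{k,\ell}$ in \eqref{path6v}. The leading contribution to $\Sigma_{N,\lambda N} = \mu N \int d\kappa\, H_{N,\kappa N}[u]\, Y_{\kappa N,\lambda N}$ is governed by the critical point of $S$ in the variables $\kappa$ and $p_2$, with the remaining stationarity condition in $\xi$ being precisely the relation $\partial_\xi S_0 = 0$ that was already solved in \eqref{kappa6v} to give $\kappa = \kappa_{6V}[\xi]$.

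Next I would solve the two saddle-point equations $\partial_{p_2} S = 0$ and $\partial_\kappa S = 0$ explicitly. Differentiating $S_1$ in $p_2$ gives $\log\!\big(p_2/\gamma_2\big) - \log(\kappa - p_2) - \log(\lambda - p_2) + \log\!\big(\gamma_1(\kappa+\lambda-p_2)\big) = 0$, i.e. the algebraic relation $\tfrac{\gamma_1}{\gamma_2}\tfrac{p_2(\kappa+\lambda-p_2)}{(\kappa-p_2)(\lambda-p_2)} = 1$; differentiating $S$ in $\kappa$ (using $\partial_\kappa S_0 = \log t$) gives $\tfrac{t(\kappa-p_2)}{\gamma_1(\kappa+\lambda-p_2)} = 1$. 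These are two equations in the unknowns $p_2/\kappa$ and $\kappa/\lambda$ once everything is expressed through $\xi$ via $t = t_{6V}[\xi]$ in \eqref{tparam} and the explicit forms of $\gamma_1 = \sin(u-\eta)/\sin(u+\eta)$, $\gamma_2 = \sin(3\eta - u)/\sin(u-\eta)$. Substituting and simplifying the trigonometric products should yield the stated $p_2/\kappa = \sin(u-3\eta)\sin(\xi)/\big(\sin(u-\xi-\eta)\sin(2\eta)\big)$ and $\kappa/\lambda = \sin(u-\xi+\eta)\sin(u-\xi-\eta)/\big(\sin(\xi)\sin(\xi-2\eta)\big)$; I would verify uniqueness by noting the system is bilinear so elimination gives a single solution. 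This fixes $\lambda = \lambda_{6V}[\xi] = \kappa_{6V}[\xi]\,\sin(\xi)\sin(\xi-2\eta)/\big(\sin(u-\xi+\eta)\sin(u-\xi-\eta)\big)$.

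With the parametric relation $(\kappa,\lambda) = (\kappa_{6V}[\xi], \lambda_{6V}[\xi])$ in hand, I would read off the tangent line family from \eqref{tgteq}: the slope is $A[\xi] = \mu\kappa(\lambda)/\lambda$ and the intercept $B[\xi] = \mu\kappa(\lambda)$, which with $\mu = 1$ for the 6V-DWBC model gives $A[\xi] = \kappa_{6V}[\xi]/\lambda_{6V}[\xi] = \sin(u-\xi+\eta)\sin(u-\xi-\eta)/\big(\sin(\xi)\sin(\xi-2\eta)\big)$ and $B[\xi] = \kappa_{6V}[\xi]$ with the latter given by \eqref{kappa6v}. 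Then I invoke \eqref{acurve} directly: the envelope of $F_\xi(x,y) = y + A[\xi]x - B[\xi] = 0$ is $x = B'[\xi]/A'[\xi]$, $y = B[\xi] - \big(A[\xi]/A'[\xi]\big)B'[\xi]$. Finally I would pin down the parameter range: the condition $A[\xi] \geq 0$ together with positivity of the weights \eqref{domain6v} forces $\xi \in [u + \eta - \pi, 0]$, matching the statement; the left endpoint corresponds to the tangent meeting the N border (slope $\to \infty$) and $\xi = 0$ to tangency with the E border (slope positive and finite, $\kappa \to 0$).

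The main obstacle I anticipate is the trigonometric simplification in the second paragraph: verifying that the solution of the two saddle-point equations collapses to the clean closed forms for $p_2/\kappa$ and $\kappa/\lambda$ requires careful use of product-to-sum identities (in particular recognizing combinations like $\sin(u-3\eta)$ and $\sin(\xi-2\eta)$ emerging from $\gamma_1\gamma_2$-type products), and one must be attentive to sign conventions so that $p_2 \in [0,\kappa]$ and $\lambda > 0$ hold on the claimed interval. A secondary (but routine) point is justifying that the saddle point is a genuine maximum dominating the contour/real integrals — this follows the standard reasoning already used implicitly for the 6V free energy and one-point function asymptotics, so I would only remark on it rather than belabor it.
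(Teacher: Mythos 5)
Your proposal is correct and follows essentially the same route as the paper: the total action $S_0+S_1$, the two saddle-point equations in $\kappa$ and $p_2$ yielding exactly the paper's algebraic system, the same closed forms for $p_2/\kappa$ and $\kappa/\lambda$, and the envelope formula \eqref{acurve} with the parameter range fixed by $A[\xi]\geq 0$. (One cosmetic slip in your endpoint discussion: at $\xi=u+\eta-\pi$ one has $\sin(u-\xi+\eta)=0$ so $A[\xi]\to 0$, i.e.\ tangency to the N border, while as $\xi\to 0$ the slope $A[\xi]$ diverges, giving tangency to the E border — you have the two limits interchanged, but this does not affect the derivation.)
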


As explained in Sect.~\ref{obsec}, we easily get the SE portion of the arctic curve, by applying the transformation $*$: $u\mapsto u^*=\pi- u$, and the change of coordinates \eqref{flip6v6vp} for $\mu=1$. As a result we have the following.

\begin{thm}\label{6VSEthm}
The SE portion of the arctic curve for the 6V-DWBC model in the Disordered regime is predicted by the Tangent Method to be given parametrically by:
$$ x=X_{SE}^{6V}[\xi]:=X_{NE}^{6V}[\xi]^*,\qquad y=Y_{SE}^{6V}[\xi]:=1-Y_{NE}^{6V}[\xi]^* ,\qquad (\xi\in [\eta-u,0])$$
with $X_{NE}^{6V},Y_{NE}^{6V}$ as in Theorem \ref{6VNEthm}.
\end{thm}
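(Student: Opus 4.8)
The plan is to deduce the statement from Theorem \ref{6VNEthm} using the reinterpretation of configurations described in Section \ref{obsec}, so that essentially no new saddle-point computation is needed. The starting point is that every 6V-DWBC configuration carries a second osculating-path description: redefining the base orientation so that paths travel right and \emph{up} (the Vertical Flip \textbf{VF}) turns the original $n$-path family, with starts on the W border and ends on the S border, into an $n$-path family with starts on the W border and ends on the N border; reflecting the picture across a horizontal line (the Reflection \textbf{R}) brings it back to a bona fide 6V-DWBC configuration, but with the roles of the $a$- and $b$-type vertices exchanged. At the level of spectral parameters this is precisely the involution $*\colon u\mapsto u^*=\pi-u$ of \eqref{6vstar}, which preserves the admissibility domain \eqref{domain6v} and sends the weights \eqref{6vweights} to $(a^*,b^*,c^*)=(b,a,c)$.

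First I would record how the two inputs of the Tangent Method transform under this reinterpretation. Since the flip sends an exit point at height $k$ on the E border to one at height $n-k$, the refined one-point function of the new family is $H_{n,k}=H_{n,n-k}^*$, and likewise the single-path partition function in empty space becomes $Y_{k,\ell}=Y_{n-k,\ell}^*$, evaluated with the starred weights (in which the empty vertex now weighs $a^*$). All the large-$N$ asymptotics established in Section \ref{6vsec} — the actions $S_0(\kappa,t)$ and $S_1(\kappa,p_2)$ and the functions $t_{6V}[\xi]$, $\kappa_{6V}[\xi]$, $\lambda_{6V}[\xi]$ — were derived for an arbitrary admissible value of $u$, so I may simply carry them over with $u$ replaced by $u^*$ and $\kappa$ replaced by $1-\kappa$. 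The saddle-point analysis of the total action is then word-for-word the one already done, and yields the most likely exit point $\bigl(0,\,1-\kappa_{6V}[\xi]^*\bigr)$ together with the family of tangent lines through that point and $(\lambda_{6V}[\xi]^*,0)$, where the superscript $*$ indicates evaluation at $u\to\pi-u$.

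Next I would transport the envelope back to the original frame. The reinterpretation was produced by reflecting the rescaled domain across its horizontal midline, i.e.\ by the affine map $(x,y)\mapsto(x,1-y)$ of \eqref{flip6v6vp} with $\mu=1$; since an affine map commutes with the envelope construction \eqref{acurve}, the SE branch of the arctic curve is exactly the image under $(x,y)\mapsto(x,1-y)$ of the curve produced for the starred model. This gives at once $x=X_{SE}^{6V}[\xi]=X_{NE}^{6V}[\xi]^*$ and $y=Y_{SE}^{6V}[\xi]=1-Y_{NE}^{6V}[\xi]^*$, with $X_{NE}^{6V},Y_{NE}^{6V}$ as in Theorem \ref{6VNEthm} and the superscript $*$ meaning that the explicit expressions $A[\xi]$, $B[\xi]$ there are evaluated at $u\to\pi-u$. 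Finally I would check the parameter range: the geometric condition that the tangent lines have nonnegative slope, $A[\xi]^*\ge 0$, is obtained from the original constraint $\xi\in[u+\eta-\pi,0]$ by the substitution $u\to\pi-u$, which produces precisely $\xi\in[\eta-u,0]$, as stated.

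The single point demanding care — rather than genuine difficulty — is the assertion that \textbf{VF} followed by \textbf{R} really reproduces the \emph{same} vertex model (identical bulk weights up to $a\leftrightarrow b$, with exit heights relabelled $k\mapsto n-k$), so that no additional boundary corrections alter the leading exponential order of $H_{n,k}$ or $Y_{k,\ell}$ used in the asymptotic estimates. This is exactly the content of the discussion in Section \ref{obsec}, and once it is granted the rest of the argument is a mechanical substitution into formulas already proved.
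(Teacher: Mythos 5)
Your proposal is correct and follows essentially the same route as the paper: the paper's proof of Theorem \ref{6VSEthm} is precisely the one-line application of the involution $*$ ($u\mapsto\pi-u$) to the NE branch followed by the reflection \eqref{flip6v6vp} with $\mu=1$, justified by the \textbf{VF}+\textbf{R} reinterpretation of Section \ref{obsec}. Your additional remarks (that the affine map commutes with the envelope construction, and that the range $[\eta-u,0]$ is the image of $[u+\eta-\pi,0]$ under $u\mapsto\pi-u$) just make explicit what the paper leaves implicit.
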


Finally, we note that the weights \eqref{6vweights} and the DWBC are invariant under central symmetry, which reflects all arrow orientations. As a consequence,
the arctic curve of the 6V-DWBC model is centro-symmetric as well, and it can be easily completed by applying the central symmetry
$(x,y)\mapsto (-1-x,1-y)$ to the NE and SE branches to respectively produce the SW and NW ones.

\begin{remark}\label{rem6v}
At the self-dual point $u=u^*=\frac{\pi}{2}$, the arctic curve is symmetric w.r.t. the horizontal line $y=1/2$, as well as the vertical line $x=-1/2$ by the central symmetry. The full curve is then obtained by successive reflections of the NE branch; as an example the limit shape of ASMs \cite{CP2010} is made of 4 reflected portions of ellipse.
This is no longer true if $u\neq \frac{\pi}{2}$.
\end{remark}

\section{6V' model}\label{6vpsec}

\subsection{Partition function and one-point function}

\subsubsection{Inhomogeneous partition function}

The partition function of the inhomogeneous U-turn boundary 6V model was derived by Kuperberg and independently by Tsuchya \cite{kuperberg2002symmetry,tsu}.
Let 
$$m_U(u,v):=\frac{1}{\sin(u-v+\eta)\sin(u-v-\eta)}-\frac{1}{\sin(u+v+\eta)\sin(u+v-\eta)}$$
Note that as opposed to the 6V case, this is no longer a function of $u-v$ only, but includes a reflected term which is a function of $u+v$.
\begin{thm}\label{kupthm}
The U-turn boundary 6V partition function reads:
\begin{eqnarray}\label{Udelta}
&&Z_{n}^{6V-U}[\bu,\bv;\theta]=(\rho_e\rho_o)^{n^2}\, \det_{1\leq i,j \leq n}\big(m_U(u_i,v_j)\big) \nonumber \\
&&\qquad
\times{
\scriptstyle 
\frac{\left\{\prod_{i=1}^n \sin(\theta-v_i)\sin(2u_i+2\eta)\sin(2\eta) \right\}\left\{\prod_{i,j=1}^n 
\sin(u_i-v_j+\eta)\sin(u_i-v_j-\eta)\sin(u_i+v_j+\eta)\sin(u_i+v_j-\eta)\right\}}{\left\{\prod_{1\leq i<j \leq n}\sin(u_i-u_j)\,
\sin(v_j-v_i)\right\}\left\{\prod_{1\leq i\leq j\leq n}\sin(u_i+u_j)\,\sin(v_i+v_j)\right\}}}
\end{eqnarray}
\end{thm}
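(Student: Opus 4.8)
The plan is to follow the Izergin--Korepin--Tsuchiya scheme: isolate a short list of properties in the spectral parameters that $Z_n^{6V-U}[\bu,\bv;\theta]$ must satisfy, argue that these pin it down uniquely, and then check that the right-hand side of \eqref{Udelta} has all of them. First I would recast the partition function in the boundary quantum inverse scattering language: the weights \eqref{6voddweights}--\eqref{6vevenweights} are entries of $R$-matrices obeying the Yang--Baxter equation, and each U-turn is a boundary $K$-matrix with weights \eqref{uweights} obeying the reflection (boundary Yang--Baxter) equation with that $R$-matrix. Standard ``railroad'' moves --- sliding a vertical line past a pair of horizontal lines, or intertwining the two lines of a pair through their shared U-turn --- then show that $Z_n^{6V-U}$, divided by the explicit single-variable product $\prod_{i=1}^n\sin(2u_i+2\eta)$ appearing in \eqref{Udelta}, is (a) symmetric under permutations of $u_1,\dots,u_n$, (b) invariant under each sign flip $u_i\mapsto -u_i$ (the two rows of a pair are exchanged by reflecting through their U-turn), and (c) symmetric under permutations of $v_1,\dots,v_n$; there is no $v_j\mapsto -v_j$ symmetry, in accordance with the asymmetric $v$-dependence of the formula.

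Next I would fix all variables but $v_n$. Since each vertex weight is a sine of $\pm v_n$ plus a constant, every configuration contributes, in $z=e^{2iv_n}$, a monomial whose exponent lies in a window of width controlled by the $2n$ weight factors in the last column, so that $Z_n^{6V-U}$ times a suitable explicit product of sines is a centered Laurent polynomial in $z$ of width $\leq 2n$, hence fixed by $\leq 2n+1$ evaluations. The essential input is a recursion: at $v_n=u_n-\eta$ the odd-row weight $\rho_o\sin(u_n-v_n-\eta)$ vanishes, and propagating the ice rule through column $n$ and its U-turn forces a unique frozen configuration there, giving
\[
Z_n^{6V-U}[\bu,\bv;\theta]\big|_{v_n=u_n-\eta}=\big(\text{explicit prefactor}\big)\,Z_{n-1}^{6V-U}[\widehat{\bu}_n,\widehat{\bv}_n;\theta],
\]
with hats denoting omission of the last entry; by symmetry (b) the analogous freezing occurs at $v_n=-u_n-\eta$ (where $\rho_e\sin(-u_n-v_n-\eta)$ vanishes), and by symmetry (a) at $v_n=\pm u_i-\eta$ for every $i$, which furnishes $2n$ evaluation points. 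Together with the base case $Z_1^{6V-U}=\rho_o\rho_e\,\sin(2\eta)\sin(2u_1+2\eta)\sin(\theta-v_1)$, read off directly from the two-vertex lattice, these conditions determine $Z_n^{6V-U}$.

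It then remains to verify that the right-hand side of \eqref{Udelta} satisfies the same list. Symmetry in the $v_j$ is immediate from the determinant together with the denominators $\prod_{i<j}\sin(v_j-v_i)\prod_{i\leq j}\sin(v_i+v_j)$. For the $u$-variables one uses that $m_U(u,v)$ is \emph{odd} in $u$: a transposition or a sign flip $u_i\mapsto -u_i$ changes the $i$-th row of the matrix, and the resulting sign is absorbed exactly by $\prod_i\sin(2u_i+2\eta)$ together with $\prod_{i<j}\sin(u_i-u_j)$ and $\prod_{i\leq j}\sin(u_i+u_j)$ --- this sign bookkeeping is the one genuinely computational point. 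The width bound in $v_n$ follows from Laplace expansion along the last column, where only the entries $m_U(u_i,v_n)$ carry $v_n$. For the recursion, at $v_n=u_n-\eta$ the factor $\sin(u_n-v_n-\eta)$ inside $\prod_{i,j}\sin(u_i-v_j-\eta)$ cancels the pole of $m_U(u_n,v_n)$, after which the last column of the matrix degenerates and Laplace expansion reproduces $\det_{1\leq i,j\leq n-1}(m_U(u_i,v_j))$ times precisely the predicted prefactor, i.e.\ the formula at rank $n-1$. The $n=1$ case follows from the identity $\sin^2(u_1+v_1)-\sin^2(u_1-v_1)=\sin(2u_1)\sin(2v_1)$, which cancels the $\prod_{i\leq j}\sin(u_i+u_j)\sin(v_i+v_j)=\sin(2u_1)\sin(2v_1)$ denominator. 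By uniqueness the two sides coincide.

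The step I expect to be the real obstacle is making the $v_n$-width bound sharp enough that the $2n$ evaluations at $v_n=\pm u_i-\eta$ (possibly supplemented by the leading behavior as $v_n\to i\infty$) genuinely over-determine $Z_n^{6V-U}$, and, on the determinant side, getting the $u_i\mapsto -u_i$ sign cancellation right, since $\det(m_U)$ and the three products $\prod\sin(2u_i+2\eta)$, $\prod_{i<j}\sin(u_i-u_j)$, $\prod_{i\leq j}\sin(u_i+u_j)$ must conspire to leave the ratio invariant. The Yang--Baxter and reflection-equation derivations of the symmetries, and the ice-rule freezing behind the recursion, are standard but must be carried out consistently with the even/odd-row convention and with the U-turn weights \eqref{uweights}.
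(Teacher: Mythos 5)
The paper does not prove this theorem: it is imported verbatim from Tsuchiya and Kuperberg \cite{tsu,kuperberg2002symmetry}, so there is no internal proof to compare against. Your sketch is a faithful reconstruction of the standard Izergin--Korepin--Tsuchiya characterization argument that those references use (symmetry and sign-flip properties from the Yang--Baxter and reflection equations, a degree bound in one spectral parameter, freezing recursions, and verification that the determinant expression satisfies the same list), and the determinant-side checks you single out do work: $m_U(u,v)$ is indeed odd in $u$, the sign from flipping $u_i\mapsto -u_i$ in the determinant cancels against the sign produced by $\sin(2u_i)$ in $\prod_{i\le j}\sin(u_i+u_j)$ while the pairs $\sin(u_i-u_j)\sin(u_i+u_j)$ are invariant, and the $n=1$ case reduces to $\sin^2(u_1+v_1)-\sin^2(u_1-v_1)=\sin(2u_1)\sin(2v_1)$ exactly as you say. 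The one genuine loose end is the one you flag yourself: with $2n$ vertices in the last column the naive Laurent width in $e^{2iv_n}$ gives $2n+1$ unknown coefficients against only $2n$ freezing points $v_n=\pm u_i-\eta$, so you must either sharpen the degree bound (in the literature one extracts the explicit factor $\sin(\theta-v_n)$ and/or exploits a residual parity so that the remaining function is a polynomial of degree $n-1$ in $\cos 2v_n$) or supply one more condition such as the leading asymptotics as $v_n\to i\infty$; as written, the uniqueness step is not yet closed. Modulo making that count precise, the strategy is correct and is essentially the original proof.
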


As mentioned above and illustrated in Fig.~\ref{fig:6vUtop}, the 6V' model corresponds to the choice of parameter 
$\theta=-u-\eta$, which ensures that $y_u(u)=0$
for all U-turns.
The partition function corresponding to this choice, where we cut out the U-turns of Fig.~\ref{fig:6vUtop} (a) and remove their weights, 
as well as the weights of the trivially fixed b-type vertices of the bottom row in Fig.~\ref{fig:6vUtop} (b), reads:
\begin{equation}\label{UUp}
Z_n^{6V'}[u,\bv]=\lim_{u_i\to u\atop \theta\to -u-\eta} \frac{(-1)^n\, Z_n^{6V-U}[\bu,\bv;\theta]}{\sin^n(2u+2\eta)\,\rho_e^n\, \prod_{i=1}^n \sin(-u-v_i-\eta) } 
\end{equation}
where we have identified the limit of the U-turn weights to be $y_d(u_i)\to -\sin(2u+2\eta)$ and that of the $b$ weights of the bottom (even) row to be $b_e\to \rho_e\sin(-u-v_i-\eta)$.

\begin{remark}\label{thetarem}
Note that in \eqref{Udelta} the dependence on the parameter $\theta$ is only through the prefactor $\prod_i \sin(\theta-v_i)$. The ``worst case scenario" is the homogeneous limit where all $v_i\to v$, and where this gives a factor $\sin^n(\theta-v)$. In any case, this does not affect the value of the
thermodynamic free energy $f=\lim_{n\to\infty} -\frac{1}{n^2}{\rm Log}(Z_n^{6V'})$, which is independent of $\theta$. We may therefore safely fix the value of $\theta$ to suit our needs. 
\end{remark}

%
%
%
%
%
%
%

\subsubsection{Homogeneous limit}

Like in the 6V case, the homogeneous limit where we take all $u_i\to u$ and all $v_i\to v$ involves the quantity:

$$\Delta_n[u,v]:= \lim_{u\to u_i\atop v\to v_i}  \frac{\det_{1\leq i,j \leq n}\left( m_U(u_i,v_j)\right)}{\prod_{1\leq i<j\leq n} (u_i-u_j)(v_j-v_i)} $$
Upon Taylor-expanding rows and columns, we may rewrite:
$$\Delta_n[u,v]=(-1)^{n(n-1)/2}\, \det_{0\leq i,j\leq n-1}\left( \frac{\partial_u^i\partial_v^j m_U(u,v)}{i!j!}  \right)=:\frac{1}{\prod_{i=0}^{n-1}(i!)^2} D_n[u,v]$$
where the determinant $D_n[u,v]$ reads
\begin{equation}\label{dofuv}
D_n[u,v]=\det_{0\leq i,j\leq n-1}\left((-1)^j\partial_u^i\partial_v^j m_U(u,v) \right)\end{equation}

Using the relation \eqref{UUp} and the result of Theorem \ref{kupthm}, we obtain the homogeneous partition function of the 6V' model:
\begin{equation}\label{sixvpf}
\frac{Z_n^{6V'}[u,v]}{\rho_e^{n^2-n}\rho_o^{n^2}\sin^n(2\eta)}=\Delta_n[u,v]\,
\frac{\big( \sin(u-v+\eta)\,\sin(u-v-\eta)\,\sin(u+v+\eta)\,\sin(u+v-\eta)\big)^{n^2}}{\big(\sin(2u)\,\sin(2v)\big)^{n(n+1)/2}}
\end{equation}

To determine $\Delta_n[u,v]$ one uses like in the 6V case the Pl\"ucker/Desnanot-Jacobi relation of Lemma \ref{desnajac} applied to the $n+1\times n+1$ matrix $M$
in the definition of $D_{n+1}[u,v]$ \eqref{dofuv}:
$$D_{n+1}[u,v]\,D_{n-1}[u,v]=\partial_u D_n[u,v]\, \partial_vD_n[u,v]-D_n[u,v]\, \partial_u\partial_v D_n[u,v]$$
which implies
\begin{equation}\label{pluone}
\frac{D_{n+1}[u,v]\,D_{n-1}[u,v]}{(D_n[u,v])^2}+ \partial_u\partial_v {\rm Log}\big(D_n[u,v]\big) =0
\end{equation}
As a direct consequence, we have:
\begin{thm}\label{6vpdeltathm}
The quantity $\Delta_n[u,v]$ obeys the following recursion relation:
\begin{equation}\label{deltarecur}
\frac{\Delta_{n+1}[u,v]\,\Delta_{n-1}[u,v]}{\Delta_{n}[u,v]^2}+\frac{1}{n^2} \partial_u\partial_v {\rm Log}\big(\Delta_{n}[u,v]\big) =0
\end{equation}
\end{thm}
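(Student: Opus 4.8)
The plan is to mirror exactly the argument that produced Theorem~\ref{6vpartthm} in the 6V case, now for the two-variable determinant $D_n[u,v]$ defined in \eqref{dofuv}. The whole content of the theorem is the single relation \eqref{deltarecur}, and the excerpt has already assembled every ingredient: the Desnanot--Jacobi identity of Lemma~\ref{desnajac}, the explicit matrix $M=\big((-1)^j\partial_u^i\partial_v^j m_U(u,v)\big)_{0\le i,j\le n}$, the quadratic relation \eqref{pluone} it yields, and the normalization $\Delta_n[u,v]=\tfrac{1}{\prod_{i=0}^{n-1}(i!)^2}\,D_n[u,v]$. So the task is purely to translate \eqref{pluone} into a statement about $\Delta_n$.

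First I would record that applying Lemma~\ref{desnajac} to the $(n+1)\times(n+1)$ matrix $M$ above, with rows and columns indexed by $0,\dots,n$, identifies the four relevant minors: $|M|=D_{n+1}[u,v]$; $|M|_{n,n+1}^{n,n+1}=D_{n-1}[u,v]$; the minor $|M|_n^n$ (last row and column removed) is $D_n$ built from the entries $(-1)^j\partial_u^i\partial_v^j m_U$ with $0\le i,j\le n-1$, i.e.\ exactly $D_n[u,v]$; and the mixed minors $|M|_{n+1}^n$ and $|M|_n^{n+1}$, in which the last row resp.\ column is deleted but the other is kept, are obtained from $D_n[u,v]$ by replacing its last row by one more $u$-derivative, resp.\ its last column by one more $v$-derivative (up to the sign bookkeeping coming from the $(-1)^j$ factors). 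These two are precisely $\partial_u D_n[u,v]$ and $\partial_v D_n[u,v]$, by the standard observation that differentiating a determinant whose $i$-th row is the $i$-th derivative vector amounts to bumping the last row's derivative order. Substituting into Lemma~\ref{desnajac} gives the displayed identity
\[
D_{n+1}[u,v]\,D_{n-1}[u,v]=\partial_u D_n[u,v]\,\partial_v D_n[u,v]-D_n[u,v]\,\partial_u\partial_v D_n[u,v],
\]
which is exactly the line preceding \eqref{pluone} in the excerpt; dividing by $D_n[u,v]^2$ and using $\partial_u\partial_v{\rm Log}(D_n)=\tfrac{\partial_u\partial_v D_n}{D_n}-\tfrac{\partial_u D_n\,\partial_v D_n}{D_n^2}$ gives \eqref{pluone}.

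It then remains to pass from $D_n$ to $\Delta_n$. Since $\Delta_n[u,v]=c_n D_n[u,v]$ with the $u,v$-independent constants $c_n=\prod_{i=0}^{n-1}(i!)^{-2}$, we have $\partial_u\partial_v{\rm Log}(\Delta_n)=\partial_u\partial_v{\rm Log}(D_n)$, so the logarithmic term is unchanged. For the quadratic term, $\dfrac{\Delta_{n+1}\Delta_{n-1}}{\Delta_n^2}=\dfrac{c_{n+1}c_{n-1}}{c_n^2}\cdot\dfrac{D_{n+1}D_{n-1}}{D_n^2}$, and a one-line computation gives $\dfrac{c_{n+1}c_{n-1}}{c_n^2}=\dfrac{\big((n-1)!\big)^2}{(n!)^2}=\dfrac{1}{n^2}$. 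Combining with \eqref{pluone} yields \eqref{deltarecur}. There is no genuine obstacle here; the only thing requiring care, and the step I would write out most explicitly, is the identification of the two mixed minors $|M|_{n+1}^n$ and $|M|_n^{n+1}$ with $\partial_u D_n[u,v]$ and $\partial_v D_n[u,v]$ with the correct signs --- this is where the alternating factor $(-1)^j$ in \eqref{dofuv} and the row/column ordering conventions (and the parity prefactor $(-1)^{n(n-1)/2}$ absorbed into the definition of $D_n$) must be checked to be consistent, exactly as in the scalar 6V computation that gave \eqref{6vdelta}. Everything else is the same Plücker-relation-plus-normalization bookkeeping already carried out for Theorem~\ref{6vpartthm}.
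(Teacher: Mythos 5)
Your proposal is correct and follows exactly the paper's route: apply the Desnanot--Jacobi identity of Lemma~\ref{desnajac} to the matrix defining $D_{n+1}[u,v]$, identify the minors with $D_{n\pm1}$, $D_n$, $\partial_uD_n$, $\partial_vD_n$, $\partial_u\partial_vD_n$, and divide through by the normalization constants, whose ratio $c_{n+1}c_{n-1}/c_n^2=1/n^2$ you compute correctly. The one point you rightly flag for care does work out: with the $(-1)^j$ column factors one finds $|M|_{n+1}^{n}=-\partial_vD_n$ and $|M|_{n}^{n}=-\partial_u\partial_vD_n$, and these two compensating signs reproduce exactly the displayed identity $D_{n+1}D_{n-1}=\partial_uD_n\,\partial_vD_n-D_n\,\partial_u\partial_vD_n$ preceding \eqref{pluone}.
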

Note that the latter can be used to determine $\Delta_n[u,v]$ recursively, starting with $\Delta_0[u,v]=1$ and $\Delta_1[u,v]=m_U(u,v)$, as we illustrate now with a few simple examples.

\begin{example}\label{classicex}
Let us consider the ``classical limit" $\eta\to 0$, where:
$$ m_U(u,v)=\frac{1}{\sin^2(u-v)}-\frac{1}{\sin^2(u+v)} =\frac{\sin(2u)\sin(2v)}{\sin^2(u-v)\,\sin^2(u+v)}$$
We have:
\begin{thm}\label{classithm} In the classical case $\eta=0$, we have for all $n\geq 1$:
$$
\Delta_{n}[u,v]=n! \left(\frac{\sin(2u)\sin(2v)}{\sin^2(u-v)\,\sin^2(u+v)} \right)^{n(n+1)/2}
$$
\end{thm}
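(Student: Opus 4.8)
The plan is to verify the claimed closed form
$\Delta_n[u,v]=n!\,\big(\tfrac{\sin(2u)\sin(2v)}{\sin^2(u-v)\sin^2(u+v)}\big)^{n(n+1)/2}$
by induction on $n$, using the recursion \eqref{deltarecur} of Theorem \ref{6vpdeltathm}. The base cases are immediate: $\Delta_0[u,v]=1$ matches (empty product), and $\Delta_1[u,v]=m_U(u,v)=\tfrac{\sin(2u)\sin(2v)}{\sin^2(u-v)\sin^2(u+v)}$ by the computation displayed just before Theorem \ref{classithm}, which is exactly the formula with $n=1$. So the real content is the inductive step: assuming the formula for $\Delta_{n-1}$ and $\Delta_n$, one plugs into \eqref{deltarecur} and solves for $\Delta_{n+1}$, checking it agrees with the conjectured value at index $n+1$.

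The key simplification is to write $\Delta_n[u,v]=n!\,G[u,v]^{n(n+1)/2}$ where $G[u,v]:=\tfrac{\sin(2u)\sin(2v)}{\sin^2(u-v)\sin^2(u+v)}$, and set $g:={\rm Log}\,G$. Then the recursion reads
$$\frac{\Delta_{n+1}\Delta_{n-1}}{\Delta_n^2}=-\frac{1}{n^2}\,\partial_u\partial_v{\rm Log}\,\Delta_n
=-\frac{1}{n^2}\cdot\frac{n(n+1)}{2}\,\partial_u\partial_v\,g=-\frac{n+1}{2n}\,\partial_u\partial_v\, g .$$
Meanwhile the left side, written in terms of the ansatz, is
$$\frac{\Delta_{n+1}\Delta_{n-1}}{\Delta_n^2}=\frac{(n+1)!\,(n-1)!}{(n!)^2}\,G^{\frac{(n+1)(n+2)}{2}+\frac{(n-1)n}{2}-n(n+1)}
=\frac{n+1}{n}\,G^{1},$$
since the exponent telescopes: $\tfrac{(n+1)(n+2)}{2}+\tfrac{(n-1)n}{2}-n(n+1)=1$. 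Comparing the two expressions, the inductive step reduces to the single $n$-independent identity
$$G[u,v]=-\tfrac12\,\partial_u\partial_v\,{\rm Log}\,G[u,v],$$
i.e. $\partial_u\partial_v{\rm Log}\!\left(\tfrac{\sin(2u)\sin(2v)}{\sin^2(u-v)\sin^2(u+v)}\right)=-\tfrac{2\sin(2u)\sin(2v)}{\sin^2(u-v)\sin^2(u+v)}$. Equivalently, one must check that $W[u,v]:=\sqrt{G[u,v]}$ satisfies a two-variable analogue of the constant-Wronskian equation; in fact this is the same 1D Liouville-type identity that already appeared in the 6V free-energy analysis (the solution $W=\sin(\alpha(u-\eta))/\alpha$ there), specialized to the classical limit $\alpha\to 1$ in each of the variables $u-v$ and $u+v$.

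The remaining work is the verification of that fixed trigonometric identity. I would compute ${\rm Log}\,G=\log\sin(2u)+\log\sin(2v)-2\log\sin(u-v)-2\log\sin(u+v)$, differentiate: $\partial_u{\rm Log}\,G=2\cot(2u)-2\cot(u-v)-2\cot(u+v)$, then apply $\partial_v$, using $\partial_v\cot(u-v)=\csc^2(u-v)$ and $\partial_v\cot(u+v)=-\csc^2(u+v)$, to get $\partial_u\partial_v{\rm Log}\,G=-2\csc^2(u-v)+2\csc^2(u+v)$. It then remains to check $-2\csc^2(u-v)+2\csc^2(u+v)=-\tfrac{2\sin(2u)\sin(2v)}{\sin^2(u-v)\sin^2(u+v)}$, i.e. $\sin^2(u+v)-\sin^2(u-v)=\sin(2u)\sin(2v)$, which is the standard product-to-difference identity ($\sin^2 A-\sin^2 B=\sin(A+B)\sin(A-B)$ with $A=u+v$, $B=u-v$). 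This last step is the main ``obstacle,'' but it is entirely routine; the only subtlety is bookkeeping the exponent telescoping and the factor $\tfrac{n+1}{2n}$ correctly, and noting that the induction propagates because $\Delta_{n+1}$ is uniquely determined by $\Delta_n,\Delta_{n-1}$ through \eqref{deltarecur}.
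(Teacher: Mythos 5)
Your proof is correct and follows the same route as the paper: induction via the recursion \eqref{deltarecur}, with the inductive step collapsing (after the exponent telescoping and the $\frac{n+1}{n}$ factorial ratio) to a single $n$-independent trigonometric identity. Your identity $G=-\tfrac12\partial_u\partial_v\,{\rm Log}\,G$ is the same as the paper's $G=\partial_u\partial_v\,{\rm Log}\big(\sin(u-v)\sin(u+v)\big)$, since the $\sin(2u)$ and $\sin(2v)$ factors are killed by the mixed derivative, and both reduce to $\sin^2(u+v)-\sin^2(u-v)=\sin(2u)\sin(2v)$.
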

\begin{proof}
The proof is by induction on $n$, using \eqref{deltarecur}, and follows from the relation
$$\frac{\sin(2u)\sin(2v)}{\sin^2(u-v)\,\sin^2(u+v)}-\partial_u\partial_v {\rm Log}\left(\sin(u-v)\,\sin(u+v)\right) =0 \ .$$
\end{proof}

\noindent Note that the corresponding 6V' partition function vanishes, however we get a finite limit for the quantity
$$\lim_{\eta\to 0} \frac{Z_n^{6V'}[u,v]}{\rho_e^{n^2-n}\rho_o^{n^2}\sin^n(2\eta)}=n!\left( \sin(u-v)\sin(u+v)\right)^{n(n-1)} $$
This result has a simple interpretation: sending $\eta\to 0$ implies both $c_e$ and $c_o$ type vertices have
vanishing weights. However, without the ability to turn right, none of the osculating paths can satisfy the boundary conditions, unless each path is allowed at least one right turn. In this formulation, we must no longer see the paths as osculating, but rather as {\it crossing} at fully occupied $a$-type vertices.  The minimal case is if each path has exactly one turn
(and the vanishing weight $\sin^n(2\eta)$ is divided before taking the $\eta\to 0$ limit). For each $i=1,2,...,n$, the $i$-th path from the bottom starts with say $j=\sigma(i)$ horizontal steps, then turns right and ends with $2i-1$ vertical steps at the $j$-th endpoint. Clearly there are as many such configurations as permutations $\sigma$ of the $n$ path ends, which accounts for an overall factor of $n!$. Collecting all the Boltzmann weights gives the remaining factor.
\end{example}

\begin{example}\label{picex}
We now consider the ``free fermion" case $\eta=\frac{\pi}{4}$, where
\begin{eqnarray*}
m_U(u,v)&=&\frac{1}{\sin(u-v+\frac{\pi}{4})\sin(u-v-\frac{\pi}{4})} -\frac{1}{\sin(u+v+\frac{\pi}{4})\sin(u+v-\frac{\pi}{4})}\\
&=&\frac{4\sin(2u)\sin(2v)}{\cos(2(u-v))\cos(2(u+v))}
\end{eqnarray*}
\begin{thm}\label{picthm} In the free fermion case $\eta=\frac{\pi}{4}$, we have for all $n\geq 1$:
$$
\Delta_{n}[u,v]=\frac{\left(4\sin(2u)\sin(2v)\right)^{n(n+1)/2}\left(4\,\cos(2u)\cos(2v)\right)^{n(n-1)/2}}{\left(\cos\big(2(u-v)\big)\,\cos\big(2(u+v)\big)\right)^{n^2}} 
$$
\end{thm}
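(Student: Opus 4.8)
The plan is to prove Theorem \ref{picthm} by induction on $n$ using the recursion of Theorem \ref{6vpdeltathm}, exactly in the spirit of the proof of Theorem \ref{classithm} for the classical case $\eta=0$. Write the conjectured formula as
$$\Delta_n[u,v]=c_n\, S^{n(n+1)/2}\, C^{n(n-1)/2}\, D^{-n^2},$$
where $c_n=1$ for all $n\ge1$, and where $S=4\sin(2u)\sin(2v)$, $C=4\cos(2u)\cos(2v)$, $D=\cos(2(u-v))\cos(2(u+v))$. One first checks the base cases: $\Delta_0=1$ (by convention) and $\Delta_1=m_U(u,v)$, which by the displayed simplification equals $S/D$, matching the formula at $n=1$. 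Then, substituting the Ansatz into \eqref{deltarecur}, the ratio $\Delta_{n+1}\Delta_{n-1}/\Delta_n^2$ produces the $n$-independent factor $S\,C^{-1}D^{-2}$ (the exponents of $S$, $C$, $D$ collapse to $1$, $-1$, $-2$ respectively, since $(n{+}1)(n{+}2)/2+(n{-}1)n/2-2\cdot n(n{+}1)/2=1$, and similarly for the others), while $\partial_u\partial_v\,{\rm Log}\,\Delta_n$ produces $n$-dependent multiples of $\partial_u\partial_v\,{\rm Log}\,S$, $\partial_u\partial_v\,{\rm Log}\,C$, $\partial_u\partial_v\,{\rm Log}\,D$.

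So the inductive step reduces to verifying a single trigonometric identity: that
$$\frac{S}{C\,D^2}\;+\;\frac{1}{n^2}\,\partial_u\partial_v\,{\rm Log}\!\left(S^{n(n+1)/2}C^{n(n-1)/2}D^{-n^2}\right)=0$$
holds \emph{identically in $n$}. Expanding, the coefficient of $1/n^2$ contributes $\tfrac{1}{2}\partial_u\partial_v{\rm Log}(S/C)$, the coefficient of $1/n$ contributes $\tfrac12\partial_u\partial_v{\rm Log}(SC)$, and the $n$-independent part contributes $S/(CD^2)-\partial_u\partial_v{\rm Log}\,D$. For the identity to hold for all $n$, each of these three pieces must vanish separately:
\begin{align*}
\partial_u\partial_v\,{\rm Log}(S/C)&=0,\\
\partial_u\partial_v\,{\rm Log}(SC)&=0,\\
\frac{S}{C\,D^2}-\partial_u\partial_v\,{\rm Log}\,D&=0.
\end{align*}
The first two are immediate since ${\rm Log}\,S={\rm Log}\,4+{\rm Log}\sin(2u)+{\rm Log}\sin(2v)$ and ${\rm Log}\,C={\rm Log}\,4+{\rm Log}\cos(2u)+{\rm Log}\cos(2v)$ are both sums of a function of $u$ alone and a function of $v$ alone, so their mixed partials vanish; hence $\partial_u\partial_v{\rm Log}(S/C)=\partial_u\partial_v{\rm Log}(SC)=0$ trivially. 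The third identity is the genuine content: using $\partial_u\partial_v\,{\rm Log}\cos(2(u-v))=-4/\cos^2(2(u-v))\cdot(-1)\cdot\dots$ — more carefully, compute $\partial_v{\rm Log}\cos(2(u\pm v))=\mp2\tan(2(u\pm v))$, then $\partial_u$ of that gives $\pm4/\cos^2(2(u\pm v))$ with a sign; summing the two terms of ${\rm Log}\,D$ and simplifying should reproduce $S/(CD^2)=4\sin(2u)\sin(2v)/\big(4\cos(2u)\cos(2v)\cos^2(2(u-v))\cos^2(2(u+v))\big)$ via a product-to-sum manipulation such as $\cos(2(u-v))\cos(2(u+v))=\cos^2(2v)-\sin^2(2u)=\cos^2(2u)-\sin^2(2v)$.

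The main obstacle is precisely this last trigonometric identity: unlike the $\eta=0$ case where $D$ factorizes as $\sin(u-v)\sin(u+v)$ into a neat product, here $D=\cos(2(u-v))\cos(2(u+v))$ does not split as a function of $u$ times a function of $v$, so the computation of $\partial_u\partial_v\,{\rm Log}\,D$ genuinely mixes the two variables and one must carefully juggle double-angle and product-to-sum formulas (the key being $2\cos(2(u-v))\cos(2(u+v))=\cos(4u)+\cos(4v)$, equivalently $=1-2\sin^2(2u)+1-2\sin^2(2v)\over\text{no}$ — rather $\cos(4u)+\cos(4v)=2\cos(2(u-v))\cos(2(u+v))$, and $\cos4u=1-2\sin^2 2u=2\cos^2 2u -1$). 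I expect that once $\partial_u\partial_v{\rm Log}\,D$ is written over the common denominator $\cos^2(2(u-v))\cos^2(2(u+v))$, the numerator simplifies to exactly $4\sin(2u)\sin(2v)/(4\cos(2u)\cos(2v))\cdot 1 = \tan(2u)\tan(2v)$ times the remaining factor, matching $S/(CD^2)$; but this bookkeeping is the one step requiring care rather than being purely mechanical. Everything else — the collapse of exponents, the separation into powers of $1/n$, the base cases — is routine.
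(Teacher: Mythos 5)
Your overall strategy is exactly the paper's: induction on $n$ via the recursion \eqref{deltarecur}, base cases $\Delta_0=1$ and $\Delta_1=m_U=S/D$, and the observation that ${\rm Log}\,S$ and ${\rm Log}\,C$ are separable so their mixed partials vanish and only the $D$-dependence survives in $\partial_u\partial_v{\rm Log}\,\Delta_n$. However, there is a concrete arithmetic error that derails the inductive step. The second difference of the exponent of $C$ is
$$\frac{(n+1)n}{2}+\frac{(n-1)(n-2)}{2}-2\cdot\frac{n(n-1)}{2}=+1,$$
not $-1$ (the second difference of any quadratic $\tfrac{n^2}{2}+bn+c$ is $+1$, independently of the linear term). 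Hence
$$\frac{\Delta_{n+1}\Delta_{n-1}}{\Delta_n^2}=S\,C\,D^{-2}=\frac{4\sin(4u)\sin(4v)}{\cos^2\big(2(u-v)\big)\cos^2\big(2(u+v)\big)},$$
since $SC=16\sin(2u)\cos(2u)\sin(2v)\cos(2v)=4\sin(4u)\sin(4v)$, and \emph{not} $S\,C^{-1}D^{-2}=\tan(2u)\tan(2v)/D^2$ as you claim.

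Consequently the identity you reduce everything to, $\dfrac{S}{C\,D^2}=\partial_u\partial_v{\rm Log}\,D$, is not the right target and is in fact false: a direct computation gives
$$\partial_u\partial_v{\rm Log}\,D=\frac{4}{\cos^2\big(2(u-v)\big)}-\frac{4}{\cos^2\big(2(u+v)\big)}
=\frac{4\big(\cos^2(2(u+v))-\cos^2(2(u-v))\big)}{D^2}=-\frac{4\sin(4u)\sin(4v)}{D^2},$$
which is nowhere near $\tan(2u)\tan(2v)/D^2$. So the final verification, which you in any case leave unfinished ("should reproduce\dots the one step requiring care"), cannot succeed as set up. The identity the induction actually hinges on is the one the paper displays, namely $\frac{4\sin(4u)\sin(4v)}{\cos^2(2(u-v))\cos^2(2(u+v))}+\partial_u\partial_v{\rm Log}\big(\cos(2(u-v))\cos(2(u+v))\big)=0$, i.e.\ with $SC$ rather than $S/C$ in the numerator; it follows from $\cos^2(2(u+v))-\cos^2(2(u-v))=-\sin(4u)\sin(4v)$. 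Two smaller points: your expansion of $\frac{1}{n^2}\partial_u\partial_v{\rm Log}\,\Delta_n$ in powers of $1/n$ actually produces only an $O(1)$ and an $O(1/n)$ piece (no $1/n^2$ piece), though this is harmless since the separable contributions vanish identically; and once the exponents are corrected you should still track the overall sign of the inductive step carefully, as the relative sign between $SC/D^2$ and $\partial_u\partial_v{\rm Log}\,D$ is precisely where the content of the verification lies.
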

\begin{proof}
The proof is by induction on $n$ using \eqref{deltarecur}, and follows from the relation
$$ \frac{4\sin(4u)\sin(4v)}{\cos^2(2(u-v))\cos^2(2(u+v))}+\partial_u\partial_v{\rm Log}\left(\cos(2(u-v))\cos(2(u+v))\right)=0 $$
\end{proof}
\noindent The corresponding 6V' partition function reads:
\begin{equation}\label{freepicex} 
\frac{Z_n^{6V'}[u,v]}{\rho_e^{n^2-n}\rho_o^{n^2}}=\left(\cos(2u)\cos(2v)\right)^{n(n-1)/2}
\end{equation}
\end{example}

\subsubsection{One-point function}

As in the case of the 6V model, we consider the semi-homogeneous partition function $Z_n^{6V'}[u,v;\xi]$ with the same boundary conditions as $Z_n^{6V'}[u,v]$
but with a different vertical spectral parameter in the last column, set to $v_n=v+\xi$.
It is again obtained as a limit of \eqref{UUp} and reads:
\begin{eqnarray}\label{pluone}
&&\!\!\!\!\!\!\!\!\!
\frac{Z_n^{6V'}[u,v;\xi]}{\rho_e^{n^2-n}\rho_o^{n^2}\sin^n(2\eta)}\!\!=\!\!\Delta_n[u,v;\xi] \frac{\big( \sin(u-v+\eta)\sin(u-v-\eta)\sin(u+v+\eta)\sin(u+v-\eta)\big)^{n(n-1)}}{\sin(2\xi+2v)\big(\sin(\xi+2v)\big)^{n-1}\big(\sin(2u)\big)^{\frac{n(n+1)}{2}} \big(\sin(2v)\big)^{\frac{n(n-1)}{2}}}\nonumber\\
&&\quad\qquad \qquad \times \, \left(\sin(u-v-\xi+\eta)\sin(u-v-\xi-\eta)\sin(u+v+\xi+\eta)\sin(u+v+\xi-\eta)\right)^{n}\nonumber\\
\end{eqnarray}
in terms of  the semi-homogeneous quantity
$$\Delta_n[u,v;\xi]:=\lim_{u_1,...u_n\to u\atop v_1,...,v_{n-1}\to v, v_n\to v+\xi} 
\frac{\det_{1\leq i,j\leq n} \left( m_U(u_i,v_j)\right)}{\prod_{1\leq i<j\leq n}\sin(u_i-u_j)\,\sin(v_j-v_i)}$$
Repeating the Taylor expansion of rows and columns except the last one, we may rewrite:
\begin{eqnarray}\Delta_n[u,v;\xi]&=&
\frac{(-1)^{n(n-1)/2}}{\sin^{n-1}(\xi)}\, \det\left\{\left(\frac{\partial_u^i\partial_v^j m_U(u,v)}{i!j!} \right)_{i=1,..,n-1\atop j=0,...,n-2} \Big\vert\left(\frac{\partial_u^{i}m_U(u,w)}{i!}\right)_{i=0,...,n-1}\right\}\nonumber \\
&=:&\frac{(-1)^{n-1}\, (n-1)!}{\sin^{n-1}(\xi)\,
\prod_{i=0}^{n-1}(i!)^2} D_n[u,v;\xi] \label{deltad6vp}
\end{eqnarray}
where the determinant $D_n[u,v;\xi]$ reads
\begin{equation}\label{deteronept6vp}
D_n[u,v;\xi]=\det\left\{\left\{(-1)^j\partial_u^i\partial_v^j m_U(u,v) \right\}_{i=1,..,n-1\atop j=0,...,n-2}\Big\vert \left\{\partial_u^{i}m_U(u,v+\xi)\right\}_{i=0,...,n-1}\right\}
\end{equation}

As before, we define the one-point function $H_n^{6V'}[u,v;\xi]$ as the ratio:
\begin{eqnarray}
&&H_n^{6V'}[u,v;\xi]:=\frac{Z_n^{6V'}[u,v;\xi]}{Z_n^{6V'}[u,v]}= \frac{\Delta_n[u,v;\xi]}{\Delta_n[u,v]}\, \frac{\sin(\xi+2v)}{\sin(2\xi+2v)}\,\left(\frac{\sin(2v)}{\sin(\xi+2v)}\right)^n
\nonumber \\
&&\quad \qquad \times\left( \frac{\sin(u-v-\xi+\eta)\,\sin(u-v-\xi-\eta)\,\sin(u+v+\xi+\eta)\,\sin(u+v+\xi-\eta)}{\sin(u-v+\eta)\,\sin(u-v-\eta)\,\sin(u+v+\eta)\,\sin(u+v-\eta)}\right)^n
\label{ratioref}
\end{eqnarray}

The Pl\"ucker/Desnanot-Jacobi relation of Lemma \ref{desnajac} applied to the $n+1\times n+1$ matrix $M$
in the definition of $D_{n+1}[u,v;\xi]$ \eqref{deteronept6vp} implies the following:
$$D_{n+1}[u,v;\xi]\, D_{n-1}[u,v] = D_n[u,v;\xi]\, \partial_u D_n[u,v] -D_n[u,v]\, \partial_u D_n[u,v;\xi]$$
Introducing the  reduced one-point function 
\begin{equation}\label{redone6vp}
H_n[u,v;\xi]:= (-1)^{n-1} (n-1)! \frac{D_n[u,v;\xi]}{D_n[u,v]} =\sin^{n-1}(\xi)  \frac{\Delta_n[u,v;\xi]}{\Delta_n[u,v]} ,
\end{equation}
we may recast the above into the following.
\begin{thm}\label{6vponeptthm}
The reduced one-point function of the 6V' model obeys the following relation:
\begin{equation}\label{hone}
\frac{H_{n+1}[u,v;\xi]}{H_n[u,v;\xi]} \frac{\Delta_{n-1}[u,v]\,\Delta_{n+1}[u,v]}{\Delta_n[u,v]^2}+\frac{1}{n} \partial_u {\rm Log}(H_n[u,v;\xi] ) =0
\end{equation}
\end{thm}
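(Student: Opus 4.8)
The plan is to read the identity off directly from the quadratic (Pl\"ucker/Desnanot-Jacobi) relation derived just above from Lemma~\ref{desnajac}, namely $D_{n+1}[u,v;\xi]\,D_{n-1}[u,v]=D_n[u,v;\xi]\,\partial_uD_n[u,v]-D_n[u,v]\,\partial_uD_n[u,v;\xi]$, together with the defining relations for the reduced one-point function \eqref{redone6vp} and for $\Delta_n[u,v]$ in terms of $D_n[u,v]$. First I would divide both sides of that relation by $D_n[u,v]\,D_n[u,v;\xi]$; the right-hand side then becomes the difference of two logarithmic derivatives and collapses to $-\,\partial_u{\rm Log}\big(D_n[u,v;\xi]/D_n[u,v]\big)$. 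By \eqref{redone6vp}, $D_n[u,v;\xi]/D_n[u,v]=\frac{(-1)^{n-1}}{(n-1)!}\,H_n[u,v;\xi]$, and since the prefactor is independent of $u$ it drops out under $\partial_u{\rm Log}$, so the right-hand side equals $-\,\partial_u{\rm Log}\,H_n[u,v;\xi]$.

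For the left-hand side I would write the ratio $D_{n+1}[u,v;\xi]/D_n[u,v;\xi]$ as the product of $\big(D_{n+1}[u,v;\xi]/D_{n+1}[u,v]\big)\big(D_n[u,v;\xi]/D_n[u,v]\big)^{-1}$ with $D_{n+1}[u,v]/D_n[u,v]$, and apply \eqref{redone6vp} again at levels $n$ and $n+1$: the first two factors contribute $\frac{-1}{n}\,H_{n+1}[u,v;\xi]/H_n[u,v;\xi]$, the $-1/n$ arising precisely from the $n$-dependence of the $(-1)^{n-1}(n-1)!$ normalizations. Collecting the remaining homogeneous factors (including the leftover $D_{n-1}[u,v]/D_n[u,v]$), what multiplies this is the ratio $D_{n+1}[u,v]\,D_{n-1}[u,v]/D_n[u,v]^2$; inserting $D_m[u,v]=\big(\prod_{i=0}^{m-1}(i!)^2\big)\,\Delta_m[u,v]$ from the homogeneous-limit formula makes the factorial prefactors telescope to $n^2$, so this ratio equals $n^2\,\Delta_{n+1}[u,v]\,\Delta_{n-1}[u,v]/\Delta_n[u,v]^2$. (Alternatively one could use \eqref{deltarecur} of Theorem~\ref{6vpdeltathm} to rewrite the same ratio as $-\partial_u\partial_v{\rm Log}\,\Delta_n[u,v]$, but the $\Delta$-ratio form is what the statement wants.)

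Equating the two sides, the overall constant in front of $\frac{H_{n+1}}{H_n}\,\frac{\Delta_{n+1}\Delta_{n-1}}{\Delta_n^2}$ is $(-1/n)\cdot n^2=-n$; dividing through by $-n$ produces the displayed relation. I do not expect a genuine obstacle --- the whole argument is bookkeeping --- but the one step that needs real care is exactly that bookkeeping of signs and factorials: one must track the hidden row/column permutation sign already absorbed into the stated Desnanot-Jacobi relation and the $(-1)^{n-1}(n-1)!$ normalization in \eqref{redone6vp}, since their interplay is what fixes the relative sign of the two terms and, as in the 6V case of Theorem~\ref{6v1ptthm}, converts the naive factor $n$ into the $1/n$ weighting of the $\partial_u{\rm Log}\,H_n$ term.
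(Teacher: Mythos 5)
Your strategy is exactly the paper's (the theorem is stated there as an immediate consequence of the displayed quadratic relation plus the definitions), and your bookkeeping of the normalizations is right: the factor $\frac{(-1)^{n-1}(n-1)!}{(-1)^{n}n!}=-\frac1n$ and the telescoping $\frac{D_{n+1}D_{n-1}}{D_n^2}=n^2\frac{\Delta_{n+1}\Delta_{n-1}}{\Delta_n^2}$ are both correct. But your final assertion does not follow from the relation you start from. Taking the displayed identity $D_{n+1}[u,v;\xi]\,D_{n-1}[u,v]=D_n[u,v;\xi]\,\partial_uD_n[u,v]-D_n[u,v]\,\partial_uD_n[u,v;\xi]$ literally, dividing by $D_nD_n[\xi]$ gives RHS $=-\partial_u{\rm Log}H_n$ and LHS $=-n\,\frac{H_{n+1}}{H_n}\frac{\Delta_{n+1}\Delta_{n-1}}{\Delta_n^2}$, so after dividing by $-n$ you land on
\begin{equation*}
\frac{H_{n+1}}{H_n}\,\frac{\Delta_{n+1}\Delta_{n-1}}{\Delta_n^2}-\frac1n\,\partial_u{\rm Log}H_n=0,
\end{equation*}
i.e.\ the theorem with the sign of the second term flipped. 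You flag the sign bookkeeping as the delicate step but do not resolve the discrepancy, and as written your chain of equalities proves the wrong identity.

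The resolution is that the displayed quadratic relation in the 6V$'$ section is itself mis-signed: identifying the minors of the matrix of $D_{n+1}[u,v;\xi]$ in Lemma~\ref{desnajac} directly, one finds $|M|_n^n=\partial_uD_n[u,v;\xi]$, $|M|_{n+1}^{n+1}=D_n[u,v]$, $|M|_{n+1}^n=D_n[u,v;\xi]$, $|M|_n^{n+1}=\partial_uD_n[u,v]$ (differentiating a row in $u$ shifts its index $i\to i+1$, so only the last row survives), whence the correct relation is $D_{n+1}[u,v;\xi]\,D_{n-1}[u,v]=D_n[u,v]\,\partial_uD_n[u,v;\xi]-D_n[u,v;\xi]\,\partial_uD_n[u,v]$ -- exactly parallel to \eqref{newdenjac6v} in the 6V case. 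With this corrected input your computation produces the theorem verbatim, and the plus sign is confirmed independently by consistency with Theorem~\ref{6v1ptthm} and with the downstream equation \eqref{psione} (which requires $\partial_ue^{\psi}=+W^{-2}$). A complete proof should either rederive the Desnanot--Jacobi identification as above or at least note that the displayed relation must be negated; simply quoting it and asserting the conclusion leaves a genuine sign gap.
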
 
Together with \eqref{deltarecur}, this determines $H_n[u,v;\xi]$ recursively, using the initial data $H_1[u,v;\xi]=\frac{m_U(u,v+\xi)}{m_U(u,v)}$, and in turn
the one-point function $H_n^{6V'}[u,v;\xi]$ via:
\begin{eqnarray}
&&H_n^{6V'}[u,v;\xi]= H_n[u,v;\xi]\,\frac{\sin(\xi)\sin(\xi+2v)}{\sin(2\xi+2v)}\,\left(\frac{\sin(2v)}{\sin(\xi)\sin(\xi+2v)}\right)^n
\nonumber \\
&&\quad \qquad \times\left( \frac{\sin(u-v-\xi+\eta)\,\sin(u-v-\xi-\eta)\,\sin(u+v+\xi+\eta)\,\sin(u+v+\xi-\eta)}{\sin(u-v+\eta)\,\sin(u-v-\eta)\,\sin(u+v+\eta)\,\sin(u+v-\eta)}\right)^n
\label{oneptfin6vp}
\end{eqnarray}

\subsection{Large $n$ limit: free energy and one-point function asymptotics}

\subsubsection{Free energy}

For large $n=N$, like in the 6V case, the relation \eqref{deltarecur} leads to the following leading behavior for the function $\Delta_n[u,v]$:
\begin{equation}\label{asymptoD}
\Delta_N[u,v]\simeq e^{-N^2 f[u,v]}
\end{equation}
for some function $f[u,v]$ to be determined (see \cite{RIBKOR} for a full derivation). 


\noindent{\bf Liouville equation and free energy.}

For large $n=N$, substituting the behavior \eqref{asymptoD} into eq.\eqref{deltarecur}, and
expanding at leading order in $N^{-1}$, we get the following 2D Liouville partial differential equation equation for the function $f[u,v]$:
\begin{equation}\label{liouville}
\partial_u\partial_v f[u,v]-e^{-2 f[u,v] } =0
\end{equation}
Introducing the function 
$W[u,v]:=e^{f[u,v]}$ this may be rewritten as:
$$W\,\partial_u\partial_vW-\partial_uW\,\partial_vW =1$$

The general solution $W$ of this equation is known to be \cite{Liouville,Crowdy}:
\begin{equation}\label{gensol}
W[u,v]=\frac{g(u)-h(v)}{|g'(u)h'(v)|^{\frac{1}{2}}} \end{equation}
for some arbitrary differentiable functions $g,h$.  In \cite{RIBKOR}, the functions $g,h$ are fixed by use of symmetries and known limits of $W$, leading to the following.

\begin{thm}[\cite{RIBKOR}]\label{freeconj}
The leading asymptotics of the determinant $\Delta_n[u,v]$ is given by  $W[u,v]=\lim_{N\to\infty} \Delta_N[u,v]^{-\frac{1}{N^2}}$ where:
\begin{equation}\label{conjW}
W[u,v]=\frac{\sin(\al(u-v-\eta))\,\sin(\al(-u-v-\eta))}{\al\, |\sin(2\al u)\, \sin(2\al (v+\eta))|^\frac{1}{2}}
\end{equation}with
\begin{equation}\label{vala}\al=\frac{\pi}{\pi-2\eta}
\end{equation}
\end{thm}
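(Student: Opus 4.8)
The plan is to follow the strategy already used for the 6V free energy \eqref{fsol}: the function $W[u,v]=e^{f[u,v]}$ satisfies the Liouville-type bilinear PDE $W\,\partial_u\partial_v W-\partial_u W\,\partial_v W=1$, whose general solution is \eqref{gensol}, and we pin down the free functions $g,h$ by imposing (i) the symmetries inherited from the model and (ii) the boundary behavior of $\Delta_n$ at the degenerate locus where $m_U$ blows up. First I would record the symmetries of $W$. The reflection symmetry \eqref{6vpsym}, combined with the explicit prefactors in \eqref{sixvpf} (which contribute only smooth pieces to the $N^{-2}$-order free energy once one divides by the full partition function / takes logs), forces $f[-u,-\pi-v]=f[u,v]$, hence $W[-u,-\pi-v]=W[u,v]$. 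There is moreover the manifest symmetry $u\leftrightarrow -u$ of $m_U$ (it depends on $u$ only through $\sin(u\pm v\pm\eta)$ in the symmetric combination), giving $W[-u,v]=W[u,v]$. Feeding the ansatz \eqref{gensol} into these relations constrains $g$ to be even and $h$ to satisfy $h(-\pi-v)=h(v)$ up to the freedom of fractional-linear reparametrization that \eqref{gensol} enjoys (one may replace $g,h$ by $\frac{ag+b}{cg+d},\frac{ah+b}{ch+d}$).

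Next I would extract two limiting constraints. The first is the analogue of the $u\to\eta$ computation in the 6V case: as $u-v\to\eta$ (equivalently one of the factors $\sin(u-v-\eta)$ in $m_U$ degenerates), one sets $u-v=\eta+\epsilon$ and uses the Cauchy determinant evaluation exactly as in \eqref{deltahomo6v}ff to show $\Delta_N\sim \epsilon^{-N^2}$ times subleading, i.e. $W[u,v]\to 0$ as $u-v\to\eta$; by the $u\to -u$ symmetry one gets the same vanishing as $-u-v\to\eta$. This identifies the zero locus of the numerator $g(u)-h(v)$ with (the two branches) $u\mp v=\eta$. The second constraint is the location of the poles of $W$: from \eqref{sixvpf} the factor $\big(\sin(2u)\sin(2v)\big)^{-n(n+1)/2}$ shows that, after the appropriate normalization, $W$ must have square-root singularities exactly at $\sin(2\al u)=0$ and $\sin(2\al(v+\eta))=0$ — this fixes $g'(u)$ to vanish (to first order) where $\sin(2\al u)=0$ and $h'(v)$ where $\sin(2\al(v+\eta))=0$, forcing $g(u)=\cos(2\al u)$ and $h(v)=\cos(2\al(v+\eta))$ up to affine rescaling. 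Matching the zero of $g-h$ against $u\mp v=\eta$ then determines $\al$ through the condition that $\cos(2\al u)=\cos(2\al(v+\eta))$ precisely on $u-v-\eta\in\pi\Z$ near the physical strip, i.e. $2\al\eta$ must be commensurate with $\pi$ in the way dictated by the domain \eqref{domain6}; this yields $\al=\frac{\pi}{\pi-2\eta}$ as in \eqref{vala}, consistently with the 6V value \eqref{fsol}. Using the product-to-sum identity $\cos(2\al u)-\cos(2\al(v+\eta))=2\sin(\al(u+v+\eta))\sin(\al(v+\eta-u))$ and $g'h'=4\al^2\sin(2\al u)\sin(2\al(v+\eta))$ then reproduces \eqref{conjW} after fixing the remaining multiplicative constant by a single normalization (e.g. the $\eta\to 0$ classical limit of Theorem \ref{classithm}, or the free-fermion check of Theorem \ref{picthm}, both of which the formula must reproduce).

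Finally I would verify the two special cases as a consistency check rather than part of the derivation: at $\eta\to 0$, $\al\to 1$ and \eqref{conjW} should collapse to $\dfrac{\sin(u-v)\sin(u+v)}{|\sin 2u\,\sin 2v|^{1/2}}$, matching $\Delta_N^{-1/N^2}$ from Theorem \ref{classithm}; at $\eta=\frac{\pi}{4}$, $\al=2$ and \eqref{conjW} should reproduce the square of the relevant trigonometric ratio appearing in Theorem \ref{picthm}. The main obstacle is not the PDE manipulation — that is routine once the ansatz \eqref{gensol} is in hand — but rather the rigorous justification that the $N^{-2}$ expansion $\Delta_N[u,v]\simeq e^{-N^2 f[u,v]}$ is legitimate and that the recursion \eqref{deltarecur} can be passed to the continuum limit term-by-term, together with the argument that the smooth prefactors in \eqref{sixvpf} and \eqref{ratioref} genuinely do not contribute at order $N^2$; this is precisely the point where I would invoke the detailed analysis of Ref.~\cite{RIBKOR} rather than redo it, since the theorem is attributed there. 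The secondary subtlety is bookkeeping the fractional-linear gauge freedom in \eqref{gensol} so that the symmetry and pole/zero data pin down $g,h$ uniquely up to an overall constant; I expect one needs all of: the two reflection symmetries, the two vanishing loci $u\mp v=\eta$, and the pole structure to remove this freedom completely.
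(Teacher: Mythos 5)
Your proposal follows essentially the same route as the paper, which itself only sketches the argument (Liouville equation from the recursion \eqref{deltarecur}, general solution \eqref{gensol}, then fixing $g,h$ "by use of symmetries and known limits") and defers the details to Ref.~\cite{RIBKOR}; you correctly identify $g(u)=\cos(2\al u)$, $h(v)=\cos(2\al(v+\eta))$, verify via product-to-sum that this reproduces \eqref{conjW}, and rightly locate the genuinely hard analytic content (legitimacy of the $e^{-N^2f}$ ansatz and passage to the continuum limit) in \cite{RIBKOR}. The only soft spot is your inference of the square-root singularities at $\sin(2\al u)=0$, $\sin(2\al(v+\eta))=0$ from the prefactor $(\sin 2u\,\sin 2v)^{-n(n+1)/2}$ in \eqref{sixvpf}, which involves $\sin 2u$ rather than $\sin 2\al u$ and so does not by itself pin down $\al$ in the pole locations — but this matches the level of rigor of the paper's own presentation.
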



Theorem \ref{freeconj} gives access to the full free energy $f^{6V'}$ of the 6V' model, as defined by the large $N$ asymptotics 
$Z_N^{6V'}[u,v]\simeq e^{-N^2\, f^{6V'}[u,v]}$, where as a consequence of \eqref{sixvpf}, we have:
\begin{equation}\label{f6vprime}
 f^{6V'}[u,v]=f[u,v]+{\rm Log}\left( \frac{\sqrt{|\sin(2u)\,\sin(2v)|}}{\rho_e\rho_o\sin(u-v+\eta)\,\sin(u-v-\eta)\,\sin(u+v+\eta)\,\sin(u+v-\eta)} \right)
 \end{equation}
 This leads immediately to the following.

\begin{cor}[\cite{RIBKOR}]
The free energy of the 6V' model in the Disordered regime reads:
\begin{eqnarray}\label{fren6vp}
f^{6V'}[u,v]&=&{\scriptstyle \frac{1}{2} }\,{\rm Log} \left\vert \frac{\sin(2u)\,\sin(2v)}{\sin(2\al u)\, \sin(2\al (v+\eta))}\right\vert \nonumber \\
&&\quad + {\rm Log}\left( \frac{\sin(\al(u-v-\eta))\,\sin(\al(-u-v-\eta))}{\al\, \rho_e\rho_o\sin(u-v+\eta)\,\sin(u-v-\eta)\,\sin(u+v+\eta)\,\sin(u+v-\eta)} \right)
\end{eqnarray}
\end{cor}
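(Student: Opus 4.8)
The plan is to read off \eqref{fren6vp} directly from Theorem \ref{freeconj} combined with the relation \eqref{f6vprime}: no new analytic work is needed, since both ingredients are already in hand. Recall that $W[u,v]$ was defined as $W[u,v]=e^{f[u,v]}$, i.e. $f[u,v]={\rm Log}\,W[u,v]$, so Theorem \ref{freeconj} and \eqref{conjW} give the closed form
\[
f[u,v]={\rm Log}\!\left(\sin(\al(u-v-\eta))\,\sin(\al(-u-v-\eta))\right)-{\rm Log}\,\al-\tfrac12{\rm Log}\bigl|\sin(2\al u)\,\sin(2\al(v+\eta))\bigr|,
\]
with $\al$ as in \eqref{vala}. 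It then remains only to substitute this into \eqref{f6vprime} and collect terms.

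For completeness I would first recover \eqref{f6vprime} itself from the exact identity \eqref{sixvpf}. Applying $-\frac{1}{N^2}{\rm Log}(\cdot)$ to both sides of \eqref{sixvpf} and sending $N\to\infty$, one tracks which prefactors contribute at order $N^2$: the factor $\bigl(\sin(u-v+\eta)\sin(u-v-\eta)\sin(u+v+\eta)\sin(u+v-\eta)\bigr)^{n^2}$ contributes a full ${\rm Log}$ of its base; the factor $\bigl(\sin(2u)\sin(2v)\bigr)^{n(n+1)/2}$ contributes $\tfrac12{\rm Log}$ of its base because $n(n+1)/2\sim N^2/2$; and $\rho_e^{n^2-n}\rho_o^{n^2}\sim(\rho_e\rho_o)^{N^2}$ contributes ${\rm Log}(\rho_e\rho_o)$, while $\sin^n(2\eta)$ is subleading, being only $O(N)$ inside the logarithm. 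Together with $f[u,v]=-\lim_{N\to\infty}N^{-2}{\rm Log}\,\Delta_N[u,v]$ this yields exactly \eqref{f6vprime}. Absolute values are carried along throughout so that all logarithms are real, consistent with $Z_N^{6V'}[u,v]>0$.

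Substituting the closed form of $f[u,v]$ into \eqref{f6vprime} and merging the two logarithms then gives the claim. The one point deserving a little care is that the square-root factor $\bigl|\sin(2\al u)\sin(2\al(v+\eta))\bigr|^{-1/2}$ carried in $W[u,v]$ combines with the numerator $\sqrt{|\sin(2u)\sin(2v)|}$ of \eqref{f6vprime} into the single term $\tfrac12{\rm Log}\bigl|\tfrac{\sin(2u)\sin(2v)}{\sin(2\al u)\sin(2\al(v+\eta))}\bigr|$, while the remaining factors — $\sin(\al(u-v-\eta))\sin(\al(-u-v-\eta))$ in the numerator, and $\al\,\rho_e\rho_o\,\sin(u-v+\eta)\sin(u-v-\eta)\sin(u+v+\eta)\sin(u+v-\eta)$ in the denominator — assemble into the second logarithm of \eqref{fren6vp}. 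Comparing term by term completes the proof.

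The main obstacle is essentially nonexistent at this stage: the corollary is a bookkeeping consequence. All the genuine work — solving the Liouville equation \eqref{liouville}, i.e. pinning down the arbitrary functions $g,h$ in the general solution \eqref{gensol} by the symmetries and boundary limits of $W[u,v]$ — is precisely Theorem \ref{freeconj}, which is imported from \cite{RIBKOR}. The only mild subtlety, as noted, is keeping the signs of $\sin(2u),\sin(2v)$ straight across the parameter domain \eqref{domain6} so that the absolute values and the final $\tfrac12{\rm Log}|\cdots|$ are correct.
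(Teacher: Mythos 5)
Your proposal is correct and follows exactly the paper's route: the paper likewise obtains \eqref{f6vprime} by extracting the order-$N^2$ contributions from the exact identity \eqref{sixvpf} (with $\sin^n(2\eta)$ subleading and $n(n+1)/2\sim N^2/2$ producing the factor $\tfrac12$), and then substitutes $f[u,v]={\rm Log}\,W[u,v]$ from Theorem \ref{freeconj}. The bookkeeping of the square-root and absolute-value factors is handled the same way in both.
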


We also have access to the free energy $f^{20V}$ of the 20V DWBC3 model defined in Sect.~\ref{20vpresec}, which will be studied in Section \ref{20vsec} below. 
The free energy is defined via $Z_N^{20V}[u,v]\simeq e^{-N^2\, f^{20V}[u,v]}$
for large $N$. As a consequence of \eqref{vingtvpf} which relates the partition functions of the 20V-DWBC3 and 6V' models (see also Ref.~\cite{DF20V}), we have the relation:
\begin{equation}\label{f20v}
f^{20V}[u,v]=f^{6V'}[u,v]+\frac{1}{2}{\rm Log} \left(\nu^3\, \sin^3(2u+2\eta)\,\sin(u-v-\eta)\,\sin(u+v-\eta)\right)
\end{equation}

Let us apply this to the uniform case \eqref{combipoint20v}, where the partition function $Z_n^{6V'}$ of the $6V'$ model on the $(2n-1)\times n$ grid is related to the number of configurations $Z_n^{20V}$ of the 20V model with DWBC3 on the quadrangle $\cQ_n$ \cite{DF20V} (see Sect.~\ref{20vpresec}). Using Theorem \ref{freeconj} and the relations \eqref{fren6vp} and \eqref{f20v}, and approaching the desired value  $v=-4\eta+\epsilon$, while $u=\eta$, we get for $\eta=\frac{\pi}{8}$, $\al=\frac{4}{3}$, $\nu=\sqrt{2}$:
\begin{eqnarray*}
e^{f^{20V}}&=&\lim_{\epsilon\to 0}  \left\vert\frac{\sin(2\eta)\,\sin(-8\eta+2\epsilon)}{\sin(\frac{8}{3}\eta))\, \sin(-8\eta+\frac{8}{3}\epsilon)} \right\vert^{\frac{1}{2}} \frac{3\, \sin(\frac{16}{3}\eta)\sin(\frac{8}{3}\eta)}{4\, \nu^{3/2}\,\sin^2(4\eta)\,\sin(6\eta)\,\sin(2\eta)}=\frac{3^{9/4}}{2^{9/2}} \ .
\end{eqnarray*}
This is in agreement with the asymptotics of the exact conjectured formula of Ref.~\cite{DF20V} for the uniformly weighted partition function, namely:
\begin{equation}\label{20vpf}
Z_N^{20V}=2^{N(N-1)/2}\prod_{i=0}^{N-1}\frac{(4i+2)!}{(n+2i+1)!}\simeq \left(\frac{2^{9/2}}{3^{9/4}}\right)^{N^2}\ ,
\end{equation}
easily derived by use of the Stirling formula.

\subsubsection{One-point function}

We now derive the large $n=N$ asymptotics of the one-point function $H_n^{6V'}[u,v;\xi]$ \eqref{ratioref}. From Eq. \eqref{oneptfin6vp},
the latter is simply expressed in terms of the 
reduced one-point function $H_n[u,v;\xi]$ \eqref{redone6vp}. Like in the 6V case, we first derive a differential equation governing the asymptotic behavior
of $H_n[u,v;\xi]$, and compute a number of limits to fix integration constants. It turns out that our Conjecture \ref{freeconj} is sufficient
to determine asymptotics completely.

By Theorem \ref{6vponeptthm}, $H_n[u,v;\xi]$ must satisfy \eqref{hone}, which implies
the leading asymptotic behavior
\begin{equation}\label{asymptoh} H_N[u,v;\xi]\simeq_{N\to\infty}   e^{-N \psi[u,v;\xi]}
\end{equation}
for some function $\psi[u,v;\xi]$. As a simple confirmation, using the definition \eqref{redone6vp} and the fact that $\Delta_n[u,v;0]=\Delta_n[u,v]$, we find that
$H_n[u,v;\xi]\simeq_{\xi\to 0} \xi^{n-1}$, resulting in:
\begin{equation}\label{firsxtlimxi}
\psi[u,v;\xi]\simeq_{\xi \to 0}  -{\rm Log}(\xi)\end{equation}

\noindent{\bf Differential equation.}

Substituting the expressions \eqref{asymptoD} and \eqref{asymptoh} into eq. \eqref{hone} for $n=N$, 
and expanding to leading order in $N^{-1}$, we get the following partial differential equation:
\begin{equation}
\partial_u\psi[u,v;\xi]-e^{-2f[u,v]-\psi[u,v;\xi]}=0 \label{psione}
\end{equation}

\noindent{\bf Limits.}

In addition to the limit \eqref{firsxtlimxi} above, let us consider the limit $u-v-\xi-\eta\to 0$,
by setting $\xi=u-v-\eta-\epsilon$ and sending $\epsilon\to 0$. The entries of the last column of the determinant $D_N[u,v;v+\xi]$ \eqref{deteronept6vp} read:
$$ \partial_u^i m_U(u,u-\eta-\epsilon) = \frac{1}{\sin(2\eta)}\frac{(-1)^i i!}{\epsilon^{i+1} }+O(\epsilon^{-i})$$
The dominant term is in the last row and results in 
$$D_N[u,v;u-\eta-\epsilon] \simeq \frac{(-1)^{N-1} (N-1)!}{\sin(2\eta)\,\epsilon^{N} } D_{N-1}[u,v] $$
We deduce that
\begin{eqnarray*}
H_N[u,v;u-v-\eta-\epsilon]&=&(-1)^{N-1}\, (N-1)!\frac{D_N[u,v;u-\eta-\epsilon]}{D_N[u,v]}\\
&&\!\!\!\!\!\!\!\!\!\! \simeq_{\epsilon\to 0} \frac{ (N-1)!^2}{\sin(2\eta)\,\epsilon^{N} }\,\frac{D_{N-1}[u,v]}{D_{N}[u,v]}
\simeq \frac{1}{\epsilon^{N} }\,\frac{\Delta_{N-1}[u,v]}{\Delta_{N}[u,v]}\simeq \frac{W[u,v]^{2N}}{\epsilon^{N} }
\end{eqnarray*}
where we have used the large $N$ asymptotics $\Delta_N[u,v]\simeq W[u,v]^{-N^2}$. Matching this with the asymptotics \eqref{asymptoh}, we conclude that 
\begin{equation}\label{seclimxipsi}
\psi[u,v,u-v-\eta+\epsilon]_{\epsilon\to 0}\simeq {\rm Log}\left\vert\frac{\epsilon}{W[u,v]^2}\right\vert 
\end{equation}
Repeating the analysis for $\xi=\eta-u-v+\epsilon$, we find analogously:
\begin{equation}\label{thirdlimxipsi}\psi[u,v,\eta-u-v+\epsilon]_{\epsilon\to 0}\simeq {\rm Log}\left\vert\frac{\epsilon}{W[u,v]^2}\right\vert 
\end{equation}

\noindent{\bf Solution.}

Note that Eq.~\ref{psione} may be rewritten  in the form $\partial_u(e^{\psi})=e^{-2f}=W^{-2}$. This can be integrated w.r.t. the variable $u$ as follows:
\begin{equation}\label{integ}
e^{\psi[u,v,\xi]}= c[v,\xi] -\frac{\al \sin(2\al(v+\eta))}{\sin(\al(u-v-\eta))\,\sin(\al(u+v+\eta))} 
\end{equation}
for some integration constant $c[v,\xi]$ independent of $u$.

We now use the limit \eqref{seclimxipsi} to express that, for $\xi=u-v-\eta+\epsilon$ and $\epsilon\to 0$, we have $e^{\psi[u,v;\xi]}\to 0$. This gives:
$$ c[v,u-v-\eta] =\frac{\al \sin(2\al (v+\eta))}{\sin(\al(u-v-\eta))\,\sin(\al(u+v+\eta))}$$
which is valid for all $u,v$. In particular, setting $u=v+\eta+\xi$ yields the integration constant
$$c[v,\xi] =\frac{\al \sin(2\al(v+\eta))}{\sin(\al \xi)\,\sin(\al(\xi+2v+2\eta))} $$
which we plug back into \eqref{integ} to finally get:
\begin{equation}\label{solpsi}
\psi[u,v;\xi]= {\rm Log}\left( \frac{\al \,\sin(2\al (v+\eta))\,\sin(\al(u-v-\xi-\eta))\,\sin(\al(u+v+\xi+\eta))}{\sin(\al(u-v-\eta))\,\sin(\al(u+v+\eta))\,\sin(\al\xi)\,\sin(\al(\xi+2v+2\eta))} \right)
\end{equation}

Using the relation \eqref{oneptfin6vp} this leads to the following result for the one-point function asymptotics.
\begin{thm}\label{asympto6vponept}
The one-point function $H_n^{6V'}[u,v;\xi]$ has the following large $n=N$ behavior:
\begin{eqnarray*}&&H_N^{6V}[u,v;\xi]\simeq e^{-N \psi^{6V'}[u,v;\xi]} \\
&&\psi^{6V'}[u,v;\xi]= -{\rm Log}\left(
\frac{\sin(\al(u-v-\eta))\,\sin(\al(u+v+\eta))\,\sin(\al\xi)\,\sin(\al(\xi+2v+2\eta))}{\al \,\sin(2\al (v+\eta))\,\sin(\al(u-v-\xi-\eta))\,\sin(\al(u+v+\xi+\eta))} \right) \\
&&\ - {\rm Log}\left(\frac{\sin(2v) \sin(u-v-\xi+\eta)\sin(u-v-\xi-\eta)\sin(u+v+\xi-\eta)\sin(u+v+\xi+\eta)}{\sin(\xi)\sin(\xi+2v)\sin(u-v+\eta)\sin(u-v-\eta)\sin(u+v-\eta)\sin(u+v+\eta)}\right) \end{eqnarray*}
with $\al$ as in \eqref{fsol}.
\end{thm}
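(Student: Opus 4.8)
The plan is to obtain the stated asymptotics simply by feeding the already-established behaviour of the \emph{reduced} one-point function into the exact identity \eqref{oneptfin6vp} relating $H_n^{6V'}[u,v;\xi]$ to $H_n[u,v;\xi]$. In fact all the analytic work has been done in the paragraphs preceding the statement: the recursion of Theorem \ref{6vponeptthm}, combined with the determinant asymptotics \eqref{asymptoD}, forces the ansatz $H_N[u,v;\xi]\simeq e^{-N\psi[u,v;\xi]}$ as in \eqref{asymptoh}; substituting into \eqref{hone} yields the first-order PDE \eqref{psione}, i.e.\ $\partial_u(e^{\psi})=W^{-2}$; integrating this (cf.\ \eqref{integ}) and fixing the $u$-independent integration constant $c[v,\xi]$ by the boundary limit \eqref{seclimxipsi} (equivalently \eqref{thirdlimxipsi}) gives the closed form \eqref{solpsi} for $\psi[u,v;\xi]$, with $W$ as in Theorem \ref{freeconj}.

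Granting \eqref{solpsi}, the proof is a short computation, mirroring verbatim the proof of Theorem \ref{oneptasy6v} in the 6V case. First I would take $-\tfrac1N{\rm Log}$ of both sides of \eqref{oneptfin6vp}. On the right-hand side the factor $\tfrac{\sin(\xi)\sin(\xi+2v)}{\sin(2\xi+2v)}$ is independent of $n=N$, hence contributes $O(1/N)$ and drops out at leading order; the factor $H_N[u,v;\xi]$ contributes $\psi[u,v;\xi]$ by \eqref{asymptoh}; and each of the two factors raised to the $N$-th power contributes its logarithm. Collecting terms gives
\begin{equation*}
\psi^{6V'}[u,v;\xi]=\psi[u,v;\xi]-{\rm Log}\!\left(\frac{\sin(2v)}{\sin(\xi)\sin(\xi+2v)}\right)-{\rm Log}\!\left(\frac{\sin(u-v-\xi+\eta)\sin(u-v-\xi-\eta)\sin(u+v+\xi+\eta)\sin(u+v+\xi-\eta)}{\sin(u-v+\eta)\sin(u-v-\eta)\sin(u+v+\eta)\sin(u+v-\eta)}\right),
\end{equation*}
and substituting \eqref{solpsi} for $\psi[u,v;\xi]$ produces exactly the two logarithmic terms in the statement — the first line of $\psi^{6V'}$ being $\psi$ rewritten with an overall sign, the second line being the two correction logarithms above combined into a single fraction.

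Since the substitution is routine, there is no serious obstacle inside this particular proof; the genuine content sits upstream, in justifying that the recursion \eqref{hone} really propagates the ansatz \eqref{asymptoh} to leading order and that the single limit \eqref{seclimxipsi} suffices to pin down $c[v,\xi]$ for all $u,v$ (which in turn relies on Theorem \ref{freeconj} for $W[u,v]$). In the final step the only things I would double-check are the absolute-value conventions in the arguments of $W$ in \eqref{conjW}, which cancel between numerator and denominator here, and, as a consistency check, the degenerations of the formula: as $\xi\to0$ the limit \eqref{firsxtlimxi} gives $\psi\simeq-{\rm Log}(\xi)$ while the first correction term tends to $+{\rm Log}(\xi)$ and the second to $0$, so $\psi^{6V'}\to0$ in accordance with $H_N^{6V'}[u,v;0]=1$; and the limits $\xi\to u-v-\eta$ and $\xi\to\eta-u-v$ reproduce \eqref{seclimxipsi}–\eqref{thirdlimxipsi}.
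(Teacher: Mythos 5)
Your proposal is correct and follows exactly the paper's route: the theorem is obtained by substituting the asymptotics $H_N[u,v;\xi]\simeq e^{-N\psi[u,v;\xi]}$ with $\psi$ given by \eqref{solpsi} into the exact relation \eqref{oneptfin6vp}, dropping the $N$-independent prefactor, and collecting the logarithms of the $N$-th power factors, just as you describe. Your bookkeeping of signs matches the stated formula, and your $\xi\to 0$ consistency check is the same one the paper records immediately after the theorem.
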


As a consistency check, we find that $\lim_{\xi\to 0} \psi^{6V'}[u,v;\xi]=0$, in agreement with the fact that $H_n^{6V'}[u,v;0]=1$ by definition.


\begin{remark}\label{thetaonerem}
In the case of the more general U-turn 6V model (with arbitrary value of the parameter $\theta$, we already showed in Remark \ref{thetarem} that the thermodynamic free energy of the model is independent of $\theta$, therefore identical to that of the 6V' model. The same argument may be applied to the one-point function, whose leading asymptotics is independent of $\theta$ as well, and therefore the {\it same} for U-turn 6V and 6V' models.
\end{remark}

\begin{remark}
Independently of Theorem \ref{freeconj}, eq. \eqref{psione} can be solved in terms of the generic function $g$ which determines the general solution \eqref{gensol}
to the Liouville equation with the correct symmetries and limits, namely such that $h(v)=g(v+\eta)$, with the expression:
$$ W[u,v]=\frac{g[u]-g[v+\eta]}{\sqrt{|g'(u)g'(v+\eta)|}} $$
Solving Eq.~\ref{psione} in the same manner as above, we obtain:
$$\psi[u,v;\xi]={\rm Log}\left( \frac{ (g(u)-g(v+\xi+\eta))\, g'(v+\eta)}{(g(u)-g(v+\eta))(g(v+\xi+\eta)-g(v+\eta))}\right) $$
In particular, we recover the solution for the 6V-DWBC case by picking $g(u)=\tan(\al u)$, which leads to
\begin{eqnarray*}W[u,v]&=& \frac{\sin(\al(u-v-\eta))}{\al}=W[u-v]\\ 
\psi[u,v;\xi]&=&{\rm Log}\left( \frac{\al\sin(\al (u - v - \eta -\xi))}{\sin(\al\xi)\,\sin(\al(u-v-\eta))}\right)=\psi[u-v;\xi] \end{eqnarray*}
in agreement with \eqref{W6v} and \eqref{psi6v}.
\end{remark}


\subsection{Paths}

\subsubsection{Partition function}

With the setting of Fig.~\ref{fig:alltgt} (bottom left, light blue domain), we wish to compute the partition function $Y_{k,\ell}$ of a single path of the 6V' model in the first quadrant $\Z_+^2$,
with starting point $(0,k)$ and endpoint $(\ell,0)$. The weights of the path are those of the 6V' model, namely $(b_o,c_o)$ for a path (going straight, turning) at a vertex with second coordinate $y=2j$, $j=0,1,2,...$
and $(b_e,c_e)$ for a path (going straight, turning) at a vertex with second coordinate $y=2j+1$, $j=0,1,2,...$ However the path crosses a domain of empty vertices, each receiving weights $a_e,a_o$ depending on the parity of their second coordinate $y$. Factoring an overall weight $(a_o)^{n\ell}(a_e)^{(n-1)\ell}$ which does not affect our study, the weights of the path steps must be divided by 
$a_e,a_o$ and finally read:
\begin{eqnarray}\label{moreweights}
&&b_0=\frac{b_o}{a_o}=\frac{\sin(u-v-\eta)}{\sin(u-v+\eta)}, \qquad c_0=\frac{c_o}{a_o}= \frac{\sin(2\eta)}{\sin(u-v+\eta)},\nonumber \\
&&b_1=\frac{b_e}{a_e}=\frac{\sin(u+v+\eta)}{\sin(u+v-\eta)},  \qquad c_1=\frac{c_e}{a_e}=\frac{\sin(2\eta)}{\sin(\eta-u-v)} 
\end{eqnarray}
for vertices with $y=2j$ and $y=2j+1$ respectively.
Note that the path has a horizontal step just before entering the first quadrant, and has a final vertical step.

The partition function $Y_{k,\ell}$ is computed by use of a transfer matrix technique. Each path is travelled from N,W to S,E, and the transfer matrix
is a $4\times 4$ matrix $T_{6V'}$ whose entries correspond to the vertex weight for the transition from the entering step at each visited vertex to the outgoing step, with the four possible configurations $(-,o),(\vert, o),(-,e),(\vert,e)$ of horizontal/vertical step ending at an odd/even vertex. Moreover we include an extra weight $z,w$ per horizontal, vertical outgoing step respectively.
The matrix $T_{6V'}$ reads:
$$T_{6V'}=\begin{pmatrix}
b_0 z & c_0 z & 0 & 0\\
0 & 0  & c_1 w & b_1 w\\
0 & 0  & b_1 z & c_1 z \\
c_0 w  &b_0 w & 0 & 0
\end{pmatrix}$$
We deduce the generating function for the $Y_{k,\ell}$:
\begin{eqnarray}
Y^{6V'}(z,w)&=&\sum_{k,\ell\geq 0} Y_{k,\ell}\, w^{k+1} z^{\ell}=(0,0,0,1) ({\mathbb I}-T_{6V'})^{-1} \begin{pmatrix}1\\0\\1\\0\end{pmatrix}  \nonumber \\
&=&\frac{w(c_0(1-b_1 z)+ c_1 w(b_0+(c_0^2-b_0^2)z))}{(1-b_0 z)(1-b_1 z)-w^2(b_0+(c_0^2-b_0^2)z)(b_1+(c_1^2-b_1^2)z)}\nonumber \\
&=&c_0\sum_{j\geq 0} w^{2j+1} \frac{(b_0+(c_0^2-b_0^2)z)^j(b_1+(c_1^2-b_1^2)z)^j}{(1-b_0 z)^{j+1}(1-b_1 z)^j} \nonumber \\
&&\qquad \qquad
 + c_1\sum_{j\geq 0} w^{2j+2} 
\frac{(b_0+(c_0^2-b_0^2)z)^{j+1}(b_1+(c_1^2-b_1^2)z)^j}{(1-b_0 z)^{j+1}(1-b_1 z)^{j+1}}\nonumber \\
&=&\sum_{k\geq 0} w^{k+1} c_\epsilon
\frac{ \left(\gamma_1(1+\gamma_3 z)\right)^{\frac{k+\epsilon}{2}} \left(\gamma_2(1+\gamma_4 z)\right)^{\frac{k-\epsilon}{2}}}{(1-\gamma_1 z)^{1+\frac{k-\epsilon}{2}} (1-\gamma_2 z)^{\frac{k+\epsilon}{2}}}
 \label{genpa6v}
\end{eqnarray}
where we have used the notation $\epsilon:=k$ mod 2 (with $\epsilon\in \{0,1\}$), and the following weights:
\begin{eqnarray}
&&\gamma_1=b_0=\frac{\sin(u-v-\eta)}{\sin(u-v+\eta)},\quad \gamma_2=b_1=\frac{\sin(u+v+\eta)}{\sin(u+v-\eta)}, \nonumber \\
&&\gamma_3=\frac{c_0^2-b_0^2}{b_0}=-\frac{\sin(u-v+3\eta)}{\sin(u-v-\eta)},\quad \gamma_4= \frac{c_1^2-b_1^2}{b_1}=-\frac{\sin(u+v+3\eta)}{\sin(u+v+\eta)}
\label{weights6vpath}
\end{eqnarray}
To obtain \eqref{genpa6v}, we have used the fact that the first step of path is horizontal with $y$ parity unspecified 
(and receives no weight $z$), and the last step is vertical, with $y=0$ (and receives the weight $w$).

\subsubsection{Asymptotics}\label{secasym6v}

We wish to take the large $n=N$ scaling limit with $\kappa=k/(2N)$ and $\lambda=\ell/N$ finite.
Further expanding \eqref{genpa6v} in powers of $z$, we find:
\begin{eqnarray}
Y_{k,\ell}&=&\sum_{P_1,P_2,P_3,P_4\geq 0\atop P_1+P_2+P_3+P_4=\ell} {\frac{k-\epsilon}{2}+P_1\choose P_1} {\frac{k+\epsilon-2}{2}+P_2\choose P_2}
{\frac{k+\epsilon}{2}\choose P_3} {\frac{k-\epsilon}{2}\choose P_4} \gamma_1^{P_1+\frac{k+\epsilon}{2}}\gamma_2^{P_2+\frac{k-\epsilon}{2}} \gamma_3^{P_3}\gamma_4^{P_4}
\nonumber \\
Y_{2\kappa N,\lambda N}&\simeq& \int_0^1 dp_2 dp_3 dp_4 e^{-N S_1^{6V'}(\kappa, p_2,p_3,p_4)}\nonumber \\
&&\!\!\!\!\!\!\!\!\!\!\!\!\!\!\!\!\!\!\!\!\!\!\!\!\!\!\!\!\!\!\!S_1^{6V'}(\kappa, p_2,p_3,p_4)=-(\kappa+\lambda-p_2-p_3-p_4)\,{\rm Log}(\kappa+\lambda-p_2-p_3-p_4)\nonumber\\
&&\!\!\!\!\!\!+(\lambda-p_2-p_3-p_4)\,{\rm Log}(\lambda-p_2-p_3-p_4)-(\kappa+p_2)\,{\rm Log}(\kappa+p_2)+p_2\,{\rm Log}(p_2)\nonumber\\
&&\!\!\!\!\!\!+p_3\,{\rm Log}(p_3)+(\kappa-p_3)\,{\rm Log}(\kappa-p_3)+p_4\,{\rm Log}(p_4)+(\kappa-p_4)\,{\rm Log}(\kappa-p_4)\nonumber\\
&&\!\!\!\!\!\!- (\kappa+\lambda-p_2-p_3-p_4)\,{\rm Log}(\gamma_1)-(\kappa+p_2)\,{\rm Log}(\gamma_2)-p_3\,{\rm Log}(\gamma_3)-p_4\,{\rm Log}(\gamma_4)
\label{asymptopath6v}
\end{eqnarray}
Here we have eliminated $P_1$ and replaced the remaining summations over $P_i$ by integrations over $p_i=P_i/n$ in $[0,1]$. Note that this covers the case of vahishing weights $\gamma_i$ for $i=3$ or $4$ as well: if $\gamma_i=0$ we simply suppress $P_i$ from the above 
expression, which in turn corresponds to taking the $p_i\to 0$ limit at finite $\gamma_i$ in \eqref{asymptopath6v}.


\subsection{Refined one-point functions and asymptotics}

%

\subsubsection{Refined partition function}

Let $Z_{n,k}^{6V'}[u,v]$ denote the {\it refined} partition function of the 6V' model on the rectangular grid of size $(2n-1)\times n$ with uniform weights (\ref{6voddweights}-\ref{6vevenweights}), 
in which the rightmost path is conditioned to first visit the
rightmost vertical line at a point at position $k\in[1,2n-1]$ (counted from bottom to top), before going vertically down until its endpoint, 
as illustrated in Fig.~\ref{fig:alltgt} (bottom left, pink domain, with the $k$ final steps removed). 
This quantity is easily related to the semi-homogeneous partition function $Z_n^{6V'}[u,v;\xi]$ as follows.
In the latter, only the weights of the last column (with spectral parameter $v_n=v+\xi$) are different, and depend on the parity of the vertex height. 
Let us denote by $({\bar a}_i,{\bar b}_i,{\bar c}_i)$, $i=o,e$ the relative 6V' weights (ratio of the value at $v+\xi$ by that at $v$): 
\begin{eqnarray*} && {\bar a}_o=\frac{\sin(u-v-\xi+\eta)}{\sin(u-v+\eta)},\quad  {\bar b}_o=\frac{\sin(u-v-\xi-\eta)}{\sin(u-v-\eta)}, \quad {\bar c}_o=1\\
&& {\bar a}_e=\frac{\sin(u+v+\xi-\eta)}{\sin(u+v-\eta)}, \quad  {\bar b}_e=\frac{\sin(u+v+\xi+\eta)}{\sin(u+v+\eta)}, \quad {\bar c}_e=1
\end{eqnarray*}

Contributions to $Z_{n,k}^{6V'}[u,v]$
have a last column with $k-1$ bottom vertices of type b (vertical step), the $k$-th vertex of type c (right turn), and the top $2n-1-k$ vertices of type a (empty). 
Splitting contributions according to the parity of the position of the point of entry into the last column of the rightmost path, we arrive at:
\begin{eqnarray*}
Z_n^{6V'}[u,v;\xi] &=&\sum_{j=1}^n Z_{n,2j-1}^{6V'}[u,v]({\bar b}_o{\bar b}_e)^{j-1} {\bar c}_o ({\bar a}_e {\bar a}_o)^{n-j}  +\sum_{j=1}^{n-1}Z_{n,2j}^{6V'}[u,v]
({\bar b}_o{\bar b}_e)^{j-1} {\bar b}_o {\bar c}_e {\bar a}_o ({\bar a}_e {\bar a}_o)^{n-j-1}\\
&=&({\bar a}_e {\bar a}_o)^{n-1}\sum_{j=1}^n \tau^{j-1} \{ Z_{n,2j-1}^{6V'}[u,v]+ Z_{n,2j}^{6V'}[u,v]\, \sigma \}
\end{eqnarray*}
where we have used the values ${\bar c}_o={\bar c}_e=1$ and the parameters
\begin{eqnarray*}
\tau&:=&\frac{{\bar b}_o{\bar b}_e}{{\bar a}_e {\bar a}_o}
=\frac{\sin(u-v-\xi-\eta)\sin(u+v+\xi+\eta)\sin(u-v+\eta)\sin(u+v-\eta)}{\sin(u-v-\eta)\sin(u+v+\eta)\sin(u-v-\xi+\eta)\sin(u+v+\xi-\eta)}\\
\sigma&:=&\frac{{\bar b}_o }{{\bar a}_e}=\frac{\sin(u-v-\xi-\eta)\sin(u+v-\eta)}{\sin(u-v-\eta)\sin(u+v+\xi-\eta)} 
\end{eqnarray*}

For use with the Tangent Method, we need to consider the refined one-point function $H_{n,k}[u,v]$ defined as the ratio of the partition function of the 6V' model
in which the topmost path ends at position $k$ with a horizontal last step between the $n-1$-st vertical and the rightmost vertical, to that of the usual 6V' partition function. 
Note that the numerator is slightly different from the refined partition function $Z_{n,k}^{6V'}[u,v]$ as the rightmost path does not continue with $k$ vertical steps after
hitting the rightmost vertical.  Consequently, we must replace the $k-1$ corresponding b-type weights with a-type weights:
$$H_{n,2j-1}[u,v]:=\left(\frac{a_o\,a_e}{b_o\, b_e}\right)^{j-1}\, \frac{Z_{n,2j-1}^{6V'}[u,v]}{Z_{n}^{6V'}[u,v]} ,\quad 
H_{n,2j}[u,v]:=\frac{a_o}{b_o} \,\left(\frac{a_o\,a_e}{b_o\, b_e}\right)^{j-1}\,  \frac{Z_{n,2j}^{6V'}[u,v]}{Z_{n}^{6V'}[u,v]}$$
In terms of the one-point function $H_n^{6V'}[u,v;\xi]$ \eqref{ratioref},  the above identity reads:
\begin{eqnarray*}H_n^{6V'}[u,v;\xi]=({\bar a}_e {\bar a}_o)^{n-1} \sum_{j=1}^n t^{j-1} \{ H_{n,2j-1}[u,v]+ H_{n,2j}[u,v]\, s \}\\
\end{eqnarray*}
where 
\begin{eqnarray*}
t&=&\tau \,\frac{b_o\,b_e}{a_o\, a_e} =\frac{\sin(u-v-\xi-\eta)\sin(u+v+\xi+\eta)}{\sin(u-v-\xi+\eta)\sin(u+v+\xi-\eta)}\\
s&=&\sigma \, \frac{b_o}{a_o} = \frac{\sin(u-v-\xi-\eta)\sin(u+v-\eta)}{\sin(u-v+\eta)\sin(u+v+\xi-\eta)}
\end{eqnarray*}

\subsubsection{Asymptotics}

We wish to estimate the leading behavior of the one-point function $H_{N,k}[u,v]$ for large $N$ and $\kappa=k/( 2N)$ finite. To this end, we use the asymptotics of the function $H_N^{6V'}[u,v;\xi]$ (Theorem \ref{asympto6vponept}) to estimate for large $N$:
\begin{eqnarray}
&&\frac{H_N^{6V'}[u,v;\xi]}{({\bar a}_e {\bar a}_o)^{N-1}} \simeq e^{-N\varphi^{6V'}[u,v;\xi]} \nonumber \\
&&\varphi^{6V'}[u,v;\xi]=\psi^{6V'}[u,v;\xi]+{\rm Log}\left( \frac{\sin(u-v-\xi+\eta)\sin(u+v+\xi-\eta)}{\sin(u-v+\eta)\sin(u+v-\eta)}\right) \nonumber \\
&&\qquad\qquad\quad\  =-{\rm Log}\left(\frac{\sin(2v)\sin(u-v-\xi-\eta)\,\sin(u+v+\xi+\eta)}{\sin(2v+\xi)\sin(u-v-\eta)\,\sin(u+v+\eta)}\right)\nonumber \\
&&\qquad\qquad -{\rm Log}\left(
\frac{\sin(\al\xi)\sin(\al(\xi+2v+2\eta))\sin(\al(u-v-\eta))\sin(\al(u+v+\eta))}{\al \,\sin(\xi)\sin(2\al (v+\eta))\sin(\al(u-v-\xi-\eta))\sin(\al(u+v+\xi+\eta))} \right)
\label{phi6v}
\end{eqnarray}

This leads finally to the following result.
\begin{thm}\label{6vpasythm}
The large $N$ asymptotics of the refined one-point function for the 6V'model are given by:
\begin{eqnarray} H_{N,2\kappa N}[u,v]&\simeq& \oint \frac{dt}{2i\pi t} e^{-N S_0^{6V'}(\kappa,t) }\nonumber \\
S_0^{6V'}(\kappa,t)&=& \varphi^{6V'}[u,v,\xi]+\kappa \, {\rm Log}(t) \label{action6v}
\end{eqnarray}
with $\varphi^{6V'}[u,v,\xi]$ as in \eqref{phi6v}, and
where $\xi$ can be thought of as an implicit function of the variable $t$, upon inversion of the relation
\begin{equation}\label{txi6v}
t=t_{6V'}[\xi]:=\frac{\sin(u-v-\xi-\eta)\sin(u+v+\xi+\eta)}{\sin(u-v-\xi+\eta)\sin(u+v+\xi-\eta)} 
\end{equation}
\end{thm}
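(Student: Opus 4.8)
\medskip
\noindent\textbf{Proof proposal.}
The plan is to obtain the claimed contour-integral representation by inverting the generating-function identity established just above,
$$\frac{H_n^{6V'}[u,v;\xi]}{({\bar a}_e {\bar a}_o)^{n-1}}=\sum_{j=1}^n t^{j-1} \big( H_{n,2j-1}[u,v]+ s\, H_{n,2j}[u,v] \big),\qquad t=t_{6V'}[\xi],$$
and then feeding in the large $N$ asymptotics of $H_N^{6V'}$. First I would note that on the relevant parameter range the map $\xi\mapsto t=t_{6V'}[\xi]$ of \eqref{txi6v} is a local diffeomorphism, so that $\xi$, $s$, $\bar a_o$ and $\bar a_e$ may all be regarded as functions of $t$ and the right-hand side above is a genuine polynomial (hence power series) in $t$; extracting the coefficient of $t^{j-1}$ by a small contour around the origin gives
$$H_{n,2j-1}[u,v]+ s\, H_{n,2j}[u,v]=\oint \frac{dt}{2i\pi\,t}\, t^{-(j-1)}\,\frac{H_n^{6V'}[u,v;\xi]}{({\bar a}_e {\bar a}_o)^{n-1}}.$$

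Next I would set $n=N$, $k=2\kappa N$, so that $j-1=\kappa N+O(1)$ in either parity and $t^{-(j-1)}\simeq e^{-\kappa N\,{\rm Log}(t)}$, and substitute the leading asymptotics of Theorem \ref{asympto6vponept} in the normalized form \eqref{phi6v}, i.e.\ $H_N^{6V'}[u,v;\xi]/({\bar a}_e {\bar a}_o)^{N-1}\simeq e^{-N\varphi^{6V'}[u,v;\xi]}$. The integrand then becomes $e^{-N(\kappa\,{\rm Log}(t)+\varphi^{6V'}[u,v;\xi])}=e^{-N S_0^{6V'}(\kappa,t)}$ with $S_0^{6V'}$ as in \eqref{action6v}. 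Since $s$ and $\bar a_o/\bar a_e$ are $O(1)$ (they carry no factor growing exponentially in $N$), the two summands $H_{N,2j-1}$ and $H_{N,2j}$ share the same leading exponential factor, so the estimate descends to $H_{N,2\kappa N}[u,v]$ itself, yielding $H_{N,2\kappa N}[u,v]\simeq \oint \frac{dt}{2i\pi t}\,e^{-N S_0^{6V'}(\kappa,t)}$.

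The hard part will be justifying the interchange of exact coefficient extraction with the substitution of the asymptotic form of $H_N^{6V'}$: one needs the estimate $H_N^{6V'}[u,v;\xi]/({\bar a}_e {\bar a}_o)^{N-1}\simeq e^{-N\varphi^{6V'}}$ to hold uniformly along a $t$-contour that can be deformed through the relevant critical point of $S_0^{6V'}$ (avoiding the branch points of $\xi(t)$ and of $\varphi^{6V'}$), and the subexponential prefactors hidden in ``$\simeq$'' must not conspire to alter the leading order. As elsewhere in the Tangent Method literature, this is to be controlled only at the level of leading exponential asymptotics, which is all that the subsequent saddle-point analysis and envelope computation of the arctic curve require.
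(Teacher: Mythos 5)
Your proposal is correct and follows essentially the same route as the paper, which states Theorem \ref{6vpasythm} as an immediate consequence of the generating relation $H_n^{6V'}[u,v;\xi]=({\bar a}_e{\bar a}_o)^{n-1}\sum_j t^{j-1}(H_{n,2j-1}+s\,H_{n,2j})$ together with the asymptotics \eqref{phi6v}, exactly as in the 6V case. Your contour-integral coefficient extraction, the identification $t^{-(j-1)}\simeq e^{-\kappa N\,{\rm Log}(t)}$, and the observation that the $O(1)$ factors $s,\bar a_o,\bar a_e$ let the estimate descend to $H_{N,2\kappa N}$ itself are precisely the steps the paper leaves implicit.
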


The leading contribution to \eqref{action6v} is determined by the solution of the saddle point equation 
$\partial_t S_0^{6V'}(\kappa,t)=0$ or equivalently $\partial_\xi S_0^{6V'}(\kappa,t_{6V'}[\xi])=0$, leading to:
\begin{equation}\label{sapo6v} \kappa=\kappa_{6V'}[\xi]:=-\frac{t_{6V'}[\xi]}{\partial_\xi t_{6V'}[\xi]} \partial_\xi \varphi^{6V'}[u,v;\xi] 
\end{equation}
Explicitly we have:
\begin{eqnarray}
\kappa_{6V'}[\xi]
&=&\left\{\cot(u-v-\eta-\xi)+\cot(\xi)+\cot(\xi+2v)-\cot(u+v+\eta+\xi)\right.\nonumber \\
&&\!\!\!\!\!\!\!\!\!\!\!\!\!\!\!\left. -\al\big(\cot(\al(u-v-\eta-\xi))+\cot(\al\xi)+\cot(\al(\xi+2v+2\eta))-\cot(\al(u+v+\eta+\xi)) \big)\right\} \nonumber \\
&&\quad \times\  \frac{\sin(u + v -\eta+\xi) \sin(u + v +\eta +\xi)\sin(u - v -\eta-\xi) \sin(u - v +\eta -\xi)}{\sin(2\eta)\big(\cos(2\eta)-\cos(2u)\cos(2v+2\xi)\big)}\label{rsol}
\end{eqnarray}
with $\al=\frac{\pi}{\pi-2\eta}$ as usual.

\subsection{Arctic curves}

\subsubsection{NE branch} As explained above, the first application of the Tangent Method gives access to the portion of the arctic curve situated in the NE corner of the rectangular domain.

\begin{thm}\label{6VpNEthm}
The NE branch of the arctic curve for the 6V' model as predicted by the Tangent Method is given by the parametric equations
$$ x=X_{NE}^{6V'}[\xi]= \frac{B'[\xi]}{A'[\xi]} \qquad y=Y_{NE}^{6V'}[\xi]=B[\xi]-\frac{A[\xi]}{A'[\xi]}B'[\xi]$$
with the parameter range:
$$ 
\xi\in \big[\eta+|u|-v-\pi,0\big] 
$$
and where 
$$A[\xi]=2\,\frac{\sin(u-v-\eta-\xi)\sin(u-v+\eta-\xi)\sin(u+v-\eta+\xi)\sin(u+v+\eta+\xi)}{\sin(\xi-2\eta)\sin(\xi)\big(\cos(2\eta)- \cos(2u)\cos(2v+2\xi)\big)}$$
and $B[\xi]=2\kappa_{6V'}[\xi]$, with $\kappa_{6V'}[\xi]$ is as in \eqref{rsol}.
\end{thm}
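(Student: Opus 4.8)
The strategy parallels the treatment of the 6V-DWBC model in Section~\ref{6vsec}, with the single-path transfer matrix replaced by the $4\times 4$ matrix $T_{6V'}$ and the $u$-dependence replaced by a $u\pm v$-dependence. The starting point is the factorization of the displaced partition function $\Sigma_{N,\lambda N}\simeq 2N\int_0^1 d\kappa\, H_{N,2\kappa N}[u,v]\, Y_{2\kappa N,\lambda N}$, into which one feeds the two asymptotic estimates already established: the refined one-point function asymptotics $H_{N,2\kappa N}[u,v]\simeq \oint \frac{dt}{2i\pi t}e^{-N S_0^{6V'}(\kappa,t)}$ of Theorem~\ref{6vpasythm}, and the path asymptotics $Y_{2\kappa N,\lambda N}\simeq \int dp_2\,dp_3\,dp_4\, e^{-N S_1^{6V'}(\kappa,p_2,p_3,p_4)}$ of \eqref{asymptopath6v}. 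The total action is $S(\kappa,\xi,p_2,p_3,p_4)=S_0^{6V'}(\kappa,t_{6V'}[\xi])+S_1^{6V'}(\kappa,p_2,p_3,p_4)$, and the leading behaviour of $\Sigma_{N,\lambda N}$ is governed by its critical point.

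First I would write out the five saddle-point equations. The three equations $\partial_{p_i}S=0$, $i=2,3,4$, involve only $S_1^{6V'}$ and are equivalent to rational relations expressing $p_2,p_3,p_4$ in terms of $\kappa,\lambda$ and the path weights $\gamma_1,\dots,\gamma_4$ of \eqref{weights6vpath}, exactly as the single equation $\partial_{p_2}S=0$ did in the 6V case; when some $\gamma_i$ vanishes the corresponding $p_i$ is simply dropped, as noted after \eqref{asymptopath6v}. The equation $\partial_\kappa S=0$ couples the two pieces and reads $\mathrm{Log}(t)+\partial_\kappa S_1^{6V'}=0$; substituting the solved $p_i$ turns this into a single relation among $t$, $\kappa$ and $\lambda$. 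Finally $\partial_\xi S=0$, by the very definition \eqref{sapo6v} of $\kappa_{6V'}[\xi]$, is equivalent to $\kappa=\kappa_{6V'}[\xi]$. Eliminating $t$ via $t=t_{6V'}[\xi]$ of \eqref{txi6v} and $\kappa$ via \eqref{rsol}, the remaining relation determines $\lambda$ as an explicit trigonometric function $\lambda=\lambda_{6V'}[\xi]$; one then checks that it takes the form $\lambda_{6V'}[\xi]=2\kappa_{6V'}[\xi]/A[\xi]$ with $A[\xi]$ as in the statement, so that the family of tangent lines is $F_\xi(x,y)=y+A[\xi]x-B[\xi]$ with $B[\xi]=2\kappa_{6V'}[\xi]$, and the arctic curve is recovered as its envelope via \eqref{acurve}.

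The range of $\xi$ is fixed by the geometric constraints under which the Tangent Method operates: the exit height must lie in the domain, $\kappa_{6V'}[\xi]\in[0,1]$, the displacement must be positive, $\lambda_{6V'}[\xi]>0$, equivalently the slope runs over $A[\xi]\in[0,\infty)$. Tracking the signs of the trigonometric factors in $A[\xi]$ and in $\kappa_{6V'}[\xi]$ over the admissible parameter domain \eqref{domain6}, and using the reflection symmetry \eqref{6vpsym} (which makes the answer depend on $u$ only through $|u|$), pins the interval down to $\xi\in[\eta+|u|-v-\pi,0]$, corresponding to the point of tangency to the E border at $\xi=0$ and to the N border at the other endpoint.

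I expect the main obstacle to be the algebraic step in the middle: carrying out the simultaneous solution of the five saddle-point equations and verifying that, after substitution of \eqref{txi6v} and \eqref{rsol}, the resulting expression for $\lambda_{6V'}[\xi]$ (hence for $A[\xi]$) collapses to the compact closed form of the statement, with the four-fold sine-product numerator and the $\cos(2\eta)-\cos(2u)\cos(2v+2\xi)$ denominator. The bookkeeping of which $\gamma_i$ are nonzero in a given region of parameters, and the sign analysis needed to extract the endpoint $\eta+|u|-v-\pi$, are secondary but still delicate points.
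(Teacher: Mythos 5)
Your proposal follows essentially the same route as the paper's proof: the same factorization $\Sigma_{N,\lambda N}\simeq \int d\kappa\, H_{N,2\kappa N}\,Y_{2\kappa N,\lambda N}$, the same total action $S_0^{6V'}+S_1^{6V'}$, the same saddle-point system in $\kappa,\xi,p_2,p_3,p_4$ with $\partial_\xi S=0$ giving $\kappa=\kappa_{6V'}[\xi]$, the same identification $A[\xi]=2\kappa_{6V'}[\xi]/\lambda_{6V'}[\xi]$, $B[\xi]=2\kappa_{6V'}[\xi]$, and the envelope formula \eqref{acurve} with the range fixed by $A[\xi]\in[0,\infty)$. The only content you defer — the explicit unique positive solution of the four algebraic saddle-point equations yielding the closed forms \eqref{valal} and \eqref{lamofxi} — is precisely the computation the paper carries out, so the proposal is correct in approach and differs only in leaving that verification to be executed.
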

\begin{proof}
We may now bring together the ingredients of the Tangent Method. We determine the family of tangents $F_\xi(x,y)=y+A[\xi]x-B[\xi]$ defined in Sect. \ref{sectan}.
We already identified the intercept $B[\xi]=2\kappa_{6V'}[\xi]$ with $\kappa_{6V'}[\xi]$ given by \eqref{rsol}. 
To determine the slope $A[\xi]=2\kappa/\lambda$, we must find the leading contribution to the total partition function
\begin{eqnarray*} 
&&\sum_{k=1}^{2n-1} H_{N,k}[u,v] \, Y_{k,\ell} \simeq \int_0^1 d\kappa H_{N,2\kappa N}[u,v]\, Y_{2\kappa N,\lambda N}\simeq 
\int_0^1 d\kappa dp_2 dp_4,dp_5 e^{-N S^{6V'}(\kappa,p_2,p_4,p_5,t)} \\
&&S^{6V'}(\kappa,p_2,p_4,p_5,t):=S_0^{6V'}(\kappa,t)+S_1^{6V'}(\kappa,p_2,p_4,p_5)
\end{eqnarray*}
with $S_0^{6V'}(\kappa,t)$ as in \eqref{action6v} and $S_1^{6V'}(\kappa,p_2,p_3,p_4)$ as in \eqref{asymptopath6v}. As in the 6V case, the saddle-point equation 
$\partial_\xi S^{6V'}=0$ is solved by \eqref{rsol}, and amounts to parameterizing $\kappa=\kappa_{6V'}[\xi]$ in terms of the parameter $\xi$.
The saddle-point equations
$\partial_\kappa S^{6V'}=\partial_{p_2}S^{6V'}=\partial_{p_3}S^{6V'}=\partial_{p_4}S^{6V'}=0$ give rise to the system of algebraic equations:
\begin{eqnarray*}
\frac{t}{\gamma_1\gamma_2}&=& \frac{(p_2+\kappa)(\kappa+\lambda-p_2-p_3-p_4)}{(p_3-\kappa)(p_4-\kappa)}\\
\frac{\gamma_1}{\gamma_2}&=&\frac{(p_2+\kappa)(\lambda-p_2-p_3-p_4)}{p_2(\kappa+\lambda-p_2-p_3-p_4)}\\
\frac{\gamma_1}{\gamma_3}&=& \frac{(\kappa-p_3)(\lambda-p_2-p_3-p_4)}{p_3(\kappa+\lambda-p_2-p_3-p_4)}\\
\frac{\gamma_1}{\gamma_4}&=& \frac{(\kappa-p_4)(\lambda-p_2-p_3-p_4)}{p_4(\kappa+\lambda-p_2-p_3-p_4)}\\
\end{eqnarray*}
Substituting the values of the weights $\gamma_i$ \eqref{weights6vpath} and $t=t_{6V'}[\xi]$ \eqref{txi6v}, we find the unique solution such that $\lambda,\kappa>0$:
\begin{eqnarray}
\frac{p_2}{\kappa}&=& -\frac{\sin(u+v+\eta)\sin(\xi)}{\sin(2\eta)\sin(u+v-\eta+\xi)} \qquad 
\frac{p_3}{\kappa}= \frac{\sin(u-v-3\eta)\sin(\xi)}{\sin(2\eta)\sin(u-v-\eta-\xi)} \nonumber \\
\frac{p_4}{\kappa}&=& \frac{\sin(u+v+3\eta)\sin(\xi)}{\sin(2\eta)\sin(u+v+\eta+\xi)}\nonumber \\
\frac{\kappa}{\lambda}&=&
\frac{\sin(u-v-\eta-\xi)\sin(u-v+\eta-\xi)\sin(u+v-\eta+\xi)\sin(u+v+\eta+\xi)}{\sin(\xi-2\eta)\sin(\xi)\big(\cos(2\eta)- \cos(2u)\cos(2v+2\xi)\big)}\nonumber \\
&&\label{valal}
\end{eqnarray}
Using the parametrization $\kappa=\kappa_{6V'}[\xi]$,
we may interpret the last equation as determining $\lambda$ as a function $\lambda_{6V'}[\xi]$ of the parameter $\xi$, where:
\begin{equation}\label{lamofxi}
\lambda_{6V'}[\xi]:=\kappa_{6V'}[\xi]\, \frac{\sin(\xi-2\eta)\sin(\xi)\big(\cos(2\eta)- \cos(2u)\cos(2v+2\xi)\big)}{\sin(u-v-\eta-\xi)\sin(u-v+\eta-\xi)\sin(u+v-\eta+\xi)\sin(u+v+\eta+\xi)}
\end{equation}
To summarize, we have found the most likely exit point $\kappa$
as an implicit function of the arbitrary parameter $\lambda$, via the parametric equations $(\kappa,\lambda)=(\kappa_{6V'}[\xi],\lambda_{6V'}[\xi])$, which 
results in the family of tangent lines with equations $F_\xi(x,y)=0$.
The theorem follows from the expressions \eqref{acurve}, by identifying the slope $A[\xi]=2\kappa_{6V'}[\xi]/\lambda_{6V'}[\xi]$, while the range of the parameter $\xi$ corresponds to imposing $A[\xi] \in [0,\infty)$.
\end{proof}

\subsubsection{SE branch} As mentioned in Sect.~\ref{obsec}, a simple transformation of the model gives access to the portion of the arctic curve situated in the SE corner of the rectangular domain: we must change parameters $(u,v)\mapsto (u^*,v^*)=(-u,-v-\pi)$ and coordinates $(x,y)\mapsto (x,2-x-y)$. 

\begin{thm}\label{6VpSEthm}
The SE branch of the arctic curve for the 6V' model is given by the parametric equations
$$ x=X_{SE}^{6V'}[\xi]={X_{NE}^{6V'}}^*[\xi] \qquad y=Y_{SE}^{6V'}[\xi]=2-{Y_{NE}^{6V'}}^*[\xi]\qquad (\xi\in [\eta+|u|+v,0]  )$$
with $X_{NE}^{6V'},Y_{NE}^{6V'}$ as in Theorem \ref{6VpNEthm}, and where the superscript $*$ stands for the transformation $(u,v)\mapsto (u^*,v^*)=(-u,-v-\pi)$,
which we have also applied to the parameter range.
\end{thm}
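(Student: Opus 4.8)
The plan is to obtain the SE branch by the general mechanism of Section~\ref{obsec}: reinterpret the configurations of the $6V'$ model in terms of a second family of osculating paths, apply the Tangent Method to that family, and then undo the reinterpretation geometrically. First I would perform the \emph{Vertical Flip} $\mathbf{VF}$ on the osculating-path picture, redefining the base edge orientation so that horizontal steps still go right but vertical steps now go up (equivalently, swapping occupied and empty vertical edges), with the osculation at each vertex redefined accordingly; the $n$ paths then start at the odd horizontal edges of the W boundary and end at the $n$ vertical edges of the N boundary. Composing with the \emph{Reflection} $\mathbf{R}$ about a horizontal line brings the configuration back to the standard $6V'$ shape on the $(2n-1)\times n$ grid. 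As noted after \eqref{6vpstar}, the net effect of $\mathbf{R}\circ\mathbf{VF}$ on the weights is $a_i\leftrightarrow b_i$, $c_i\mapsto c_i$ for $i=o,e$, which is exactly the action of the involution $*$: $(u,v)\mapsto(u^*,v^*)=(-u,-v-\pi)$ of \eqref{6vpstar}; and since $u^*-v^*=\pi-(u-v)$ and $-u^*-v^*=\pi-(-u-v)$, the positivity domain \eqref{domain6} is preserved, so the flipped model is again a bona fide $6V'$ model with parameters $(u^*,v^*)$.

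Second, I would run the Tangent Method on the flipped family exactly as in the proof of Theorem~\ref{6VpNEthm}, but with the starred weights. Here the ``upside-down'' refined one-point function is reinterpreted as $H_{n,k}[u,v]=H_{n,2n-k}[u^*,v^*]$ and the single-path partition function as $Y_{k,\ell}=Y_{2n-k,\ell}^*$, the latter computed with the transfer matrix of \eqref{genpa6v} in which the weights \eqref{weights6vpath} are evaluated at $(u^*,v^*)$. The asymptotics of Theorems~\ref{asympto6vponept} and \ref{6vpasythm} and of Section~\ref{secasym6v} then apply verbatim after the substitution $(u,v)\mapsto(u^*,v^*)$, so the total action coincides with $S^{6V'}$ from the proof of Theorem~\ref{6VpNEthm} with starred parameters. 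Its saddle point produces the most likely exit point $\bigl(0,2(1-\kappa_{6V'}^{*}[\xi])\bigr)$ in rescaled coordinates ($\kappa=k/(2N)$, $\lambda=\ell/N$), and the associated one-parameter family of tangent lines is, by construction, the image under the reflection $(x,y)\mapsto(x,2-y)$ of the NE family of tangents of Theorem~\ref{6VpNEthm} computed at $(u^*,v^*)$.

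Third, I would take the envelope and translate back. Because the whole construction is the NE construction conjugated by the isometry $(x,y)\mapsto(x,2-y)$ of the rescaled rectangle $[-1,0]\times[0,2]$ — the one exchanging its N and S borders — together with the parameter change $*$, the resulting portion of the arctic curve is the image under $(x,y)\mapsto(x,2-y)$ of the NE branch evaluated at $(u^*,v^*)$. Using the explicit parametrization of Theorem~\ref{6VpNEthm}, this reads $x={X_{NE}^{6V'}}^*[\xi]$, $y=2-{Y_{NE}^{6V'}}^*[\xi]$, which is the claimed formula. The admissible range of $\xi$ is the image under $*$ of the NE range $\xi\in[\eta+|u|-v-\pi,0]$: since $|u^*|=|u|$ and $-v^*=v+\pi$, it becomes $\xi\in[\eta+|u|+v,0]$; equivalently it is the requirement that the slope $A[\xi]=2\kappa_{6V'}^{*}[\xi]/\lambda_{6V'}^{*}[\xi]$ run over $[0,\infty)$.

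The only genuinely non-rigorous input — the same one already underlying Theorem~\ref{6VpNEthm} and Section~\ref{obsec} — is the assertion that the flipped family of osculating paths has the \emph{same} limit shape as the original one, so that the NE arctic curve of the flipped model truly is the SE branch of the original. This is justified heuristically: $\mathbf{VF}$ is a weight-preserving bijection on configurations (up to the global relabeling $*$ of vertex types), hence the Boltzmann measure and its limit shape are unchanged, and displacing the endpoint of the flipped topmost path to the right past the E border probes precisely the SE corner. The remaining points — existence of the envelope on the stated $\xi$-interval and the monotonicity/sign of $A[\xi]$ and $A'[\xi]$ — are routine trigonometric verifications entirely parallel to the NE case, to be treated at the same level of rigor as in the proof of Theorem~\ref{6VpNEthm}. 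Thus I expect no obstacle beyond careful bookkeeping of the substitution $(u,v)\mapsto(-u,-v-\pi)$ and the reflection $(x,y)\mapsto(x,2-y)$.
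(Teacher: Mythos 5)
Your proposal is correct and follows essentially the same route as the paper: the paper treats this theorem as a direct consequence of the $\bf VF$/$\bf R$ symmetry argument of Section~\ref{obsec} (the involution $*$ of \eqref{6vpstar}, the reinterpretations $H_{n,k}=H^*_{n,2n-k}$, $Y_{k,\ell}=Y^*_{2n-k,\ell}$, and the reflection \eqref{flip6v6vp} with $\mu=2$) applied to Theorem~\ref{6VpNEthm}, which is exactly what you carry out, including the correct transformed parameter range $\xi\in[\eta+|u|+v,0]$.
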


\begin{remark}\label{symrem}
In the case $v=-\frac{\pi}{2}=v^*$, we note that the equation of the tangent is invariant under $u\to -u=u^*$. We deduce that the arctic curve is symmetric w.r.t. the line $y=1$, and that the SE branch
is simply the reflection of the NE branch: $X_{SE}=X_{NE}$, $Y_{SE}=2-Y_{NE}$. This is no longer true when $v\neq -\frac{\pi}{2}$.
\end{remark}

\subsection{Examples}
In this section, we illustrate Theorems \ref{6VpNEthm} and \ref{6VpSEthm} with some concrete examples.

\subsubsection{The ``6V" case $u=0$, $v=-\frac{\pi}{2}$}

\begin{figure}
\begin{center}
\begin{minipage}{0.4\textwidth}
        \centering
        \includegraphics[width=4cm]{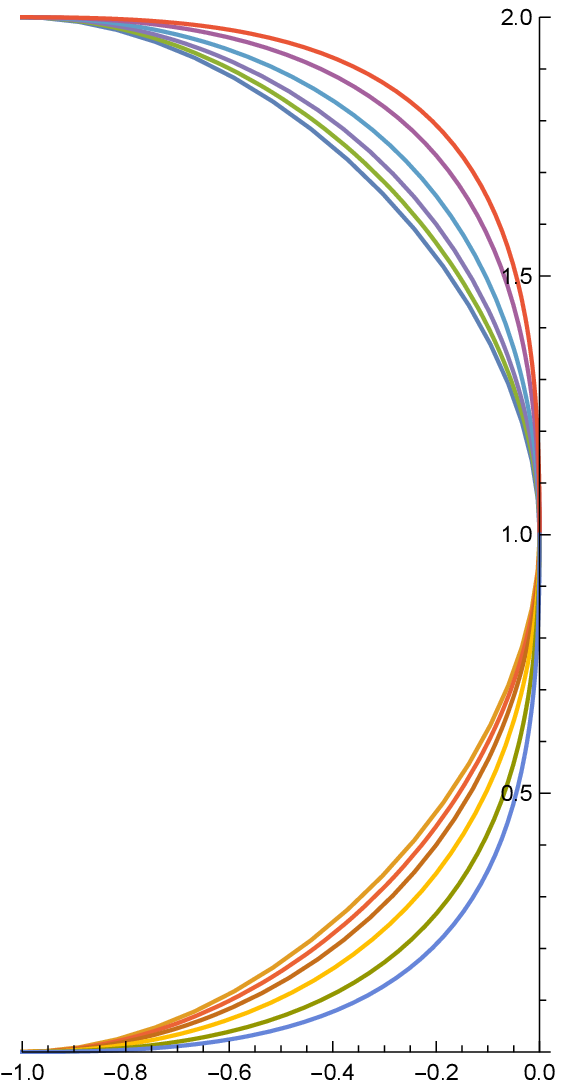} 
    \end{minipage}\hfill
    \begin{minipage}{0.6\textwidth}
        \centering
        \includegraphics[width=7.8cm]{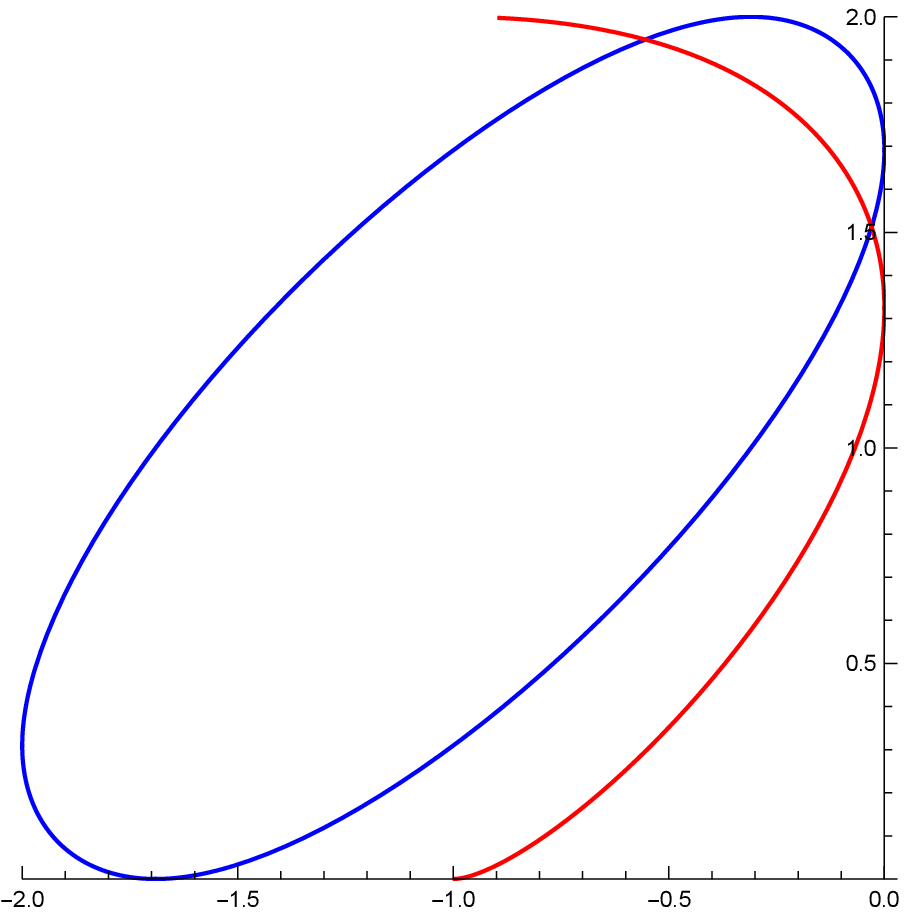}  
    \end{minipage}
\end{center}
\caption{\small Arctic curves for the 6V' model with parameter $u=0$. Left: case $v=-\frac{\pi}{2}$, with $\eta$ ranging from $0^+$ (outermost curve) to $\frac{\pi}{2}^-$
(innermost curve): all curves are symmetric w.r.t. the line $y=1$, and coincide with those of the 6V-DWBC model (NE/SE portions). 
Right: Arctic curve of the 6V' model for $\eta=\frac{\pi}{3}$, $u=0$ and $v=-\frac{\pi}{2}-\frac{\pi}{12}$ (red curve)
compared with the arctic curve of the 6V-DWBC model (scaled by a factor of 2) for the same value of 
$\eta$ and  the value $u=\frac{\pi}{2}+\frac{\pi}{12}$ leading to the same Boltzmann weights (blue curve).}
\label{fig:arctic1}
\end{figure}

The condition $u=0$ implies that all horizontal spectral parameters are equal, and that the Boltzmann weights (\ref{6voddweights}-\ref{6vevenweights}) lose their dependence on the parity of the row (upon taking $\rho_e=\rho_o=\rho$). In fact this gives a mapping to the weights \eqref{6vweights} of the ordinary 6V model via $(u_{6V'},v_{6V'})\mapsto (0,-u_{6V})$.
We may wonder how the U-turn boundary condition has affected the thermodynamics of the 6V-DWBC model. In fact, extending the usual connection between ASM and VSASM, it is easy to identify the 6V' model at $u=0$ with a 6V-DWBC model on a grid of ``double" size $2n+1\times 2n+1$, and whose configurations are vertically symmetric, i.e. invariant under reflection 
w.r.t. a vertical line. As noted in Remark \ref{rem6v}, the parameter $u$ in the 6V-DWBC case may be interpreted as an anisotropy parameter. Indeed,  the value $u=\frac{\pi}{2}$ corresponds 
for the 6V-DWBC model to 
identical weights $a=b$ which imply invariance of the partition function under reflection w.r.t. a horizontal line. However, when $u\neq \frac{\pi}{2}$, this is no longer true, as 
the weights $a\neq b$ are interchanged in the reflection. As a consequence, the tangency points of the arctic curve to the boundary of the domain move away from their symmetric positions. We expect therefore a connection between 6V-DWBC and 6V' models only at the isotropic point $u_{6V}=\frac{\pi}{2}$, corresponding to $(u_{6V'},v_{6V'})=(0,-\frac{\pi}{2})$.
Note that this point corresponds to the $\tau$-enumeration of VSASM (for the 6V' side) and ASM (for the 6V side), with $\tau= 4\sin^2(\eta)$.

\begin{thm}\label{tenum}
For arbitrary $0<\eta<\frac{\pi}{2}$, the arctic curve for the 6V' model with $(u_{6V'},v_{6V'})=(0,-\frac{\pi}{2})$ as obtained via the Tangent Method assuming Conjecture \ref{freeconj} holds is identical to that of the 6V-DWBC model with $u_{6V}=\frac{\pi}{2}$ in the NE/SE sector, up to global rescaling.
\end{thm}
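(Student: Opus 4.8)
The plan is to compute both families of tangent lines explicitly and show that they are related by the global dilation $(x,y)\mapsto(2x,2y)$. Concretely, I would compare the data $A[\xi],B[\xi]$ of Theorem~\ref{6VpNEthm} specialized to $u=0$, $v=-\frac{\pi}{2}$, with the data of Theorem~\ref{6VNEthm} specialized to $u=\frac{\pi}{2}$ (recall $v=0$ is fixed there), and verify that $A^{6V'}[\xi]=A^{6V}[\xi]$, $B^{6V'}[\xi]=2\,B^{6V}[\xi]$, with identical $\xi$-ranges; the statement for the NE branch then follows at once from the envelope formula \eqref{acurve}.

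\emph{Slopes.} At $u=0,v=-\frac{\pi}{2}$ one has $u-v=\frac{\pi}{2}$ and $u+v=-\frac{\pi}{2}$, so the four sine factors in the numerator of $A[\xi]$ of Theorem~\ref{6VpNEthm} collapse to $\pm\cos(\eta\pm\xi)$, with product $\cos^2(\eta+\xi)\cos^2(\eta-\xi)$, while the denominator bracket $\cos(2\eta)-\cos(2u)\cos(2v+2\xi)=\cos(2\eta)+\cos(2\xi)=2\cos(\eta+\xi)\cos(\eta-\xi)$. Hence $A^{6V'}[\xi]=\dfrac{\cos(\eta+\xi)\cos(\eta-\xi)}{\sin(\xi)\sin(\xi-2\eta)}$. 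On the other side, $u=\frac{\pi}{2}$ turns $\sin(u-\xi\pm\eta)$ into $\cos(\eta\mp\xi)$, giving $A^{6V}[\xi]=\dfrac{\cos(\eta-\xi)\cos(\eta+\xi)}{\sin(\xi)\sin(\xi-2\eta)}$; the two agree identically in $\xi$.

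\emph{Intercepts.} This is where the real content lies. The crucial elementary identity is $\al\big(\tfrac{\pi}{2}-\eta\big)=\tfrac{\pi}{2}$, valid because $\al=\tfrac{\pi}{\pi-2\eta}$. Using it, together with $\cot(x-\pi)=\cot(x)$ and $\cot(-\tfrac{\pi}{2}+y)=-\tan(y)$, I would simplify $\kappa_{6V'}[\xi]$ of \eqref{rsol} at $u=0,v=-\frac{\pi}{2}$: the two ``extra'' linear cotangent terms $\cot(\xi+2v)$ and $-\cot(u+v+\eta+\xi)$ reduce to $\cot(\xi)$ and $\tan(\eta+\xi)$, doubling the linear part of the curly bracket to $2\big(\tan(\eta+\xi)+\cot(\xi)\big)$; the two ``extra'' $\al$-terms $\cot(\al(\xi+2v+2\eta))$ and $-\cot(\al(u+v+\eta+\xi))$ reduce to $\cot(\al\xi)$ and $\tan(\al\xi)$, doubling the $\al$-part to $2\al\big(\tan(\al\xi)+\cot(\al\xi)\big)$; and the prefactor reduces, by the same collapse used for $A[\xi]$, to $\dfrac{\cos(\eta+\xi)\cos(\eta-\xi)}{2\sin(2\eta)}$. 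Altogether $\kappa_{6V'}[\xi]=\big(\tan(\eta+\xi)+\cot(\xi)-\al\tan(\al\xi)-\al\cot(\al\xi)\big)\dfrac{\cos(\eta+\xi)\cos(\eta-\xi)}{\sin(2\eta)}$. Performing the same specialization $u=\frac{\pi}{2}$ in $\kappa_{6V}[\xi]$ of \eqref{kappa6v}, using $\cot(u-\xi-\eta)=\tan(\xi+\eta)$ and $\al(u-\xi-\eta)=\frac{\pi}{2}-\al\xi$ (again from $\al(\tfrac{\pi}{2}-\eta)=\tfrac{\pi}{2}$), yields precisely the same expression. Hence $B^{6V'}[\xi]=2\kappa_{6V'}[\xi]=2\kappa_{6V}[\xi]=2\,B^{6V}[\xi]$, while $A^{6V'}[\xi]=A^{6V}[\xi]$.

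\emph{Conclusion, other branch, and the hard part.} Since $A^{6V'}[\xi]=A^{6V}[\xi]$ and $B^{6V'}[\xi]=2\,B^{6V}[\xi]$ on the common range $\xi\in[\eta-\frac{\pi}{2},0]$ (indeed $\eta+|u|-v-\pi=\eta-\frac{\pi}{2}=u+\eta-\pi$ at the respective special values), the envelope formulas \eqref{acurve} give $X_{NE}^{6V'}[\xi]=2\,X_{NE}^{6V}[\xi]$ and $Y_{NE}^{6V'}[\xi]=2\,Y_{NE}^{6V}[\xi]$ for every such $\xi$, i.e. the 6V' NE branch is the image of the 6V-DWBC NE branch under dilation by $2$. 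For the SE branch, note that both special points are fixed by their respective involutions $*$ (one has $-u=0$, $-v-\pi=-\frac{\pi}{2}$ on the 6V' side, and $\pi-u=\frac{\pi}{2}$ on the 6V side), so by Theorems~\ref{6VpSEthm} and \ref{6VSEthm} each SE branch is the reflection of the corresponding NE branch across $y=1$, resp.\ $y=\tfrac12$; since dilation by $2$ intertwines these two reflections, $X_{SE}^{6V'}=2\,X_{SE}^{6V}$ and $Y_{SE}^{6V'}=2\,Y_{SE}^{6V}$ as well. I expect the only real labor to be the sign-bookkeeping in the intercept computation; there is no conceptual obstacle, the identity $\al\big(\tfrac{\pi}{2}-\eta\big)=\tfrac{\pi}{2}$ being exactly what forces the anisotropic 6V' data to collapse onto the isotropic 6V-DWBC data.
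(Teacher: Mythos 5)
Your proposal is correct and follows essentially the same route as the paper: both points are fixed by the respective involutions $*$, so it suffices to compare the tangent-line data, and one verifies $A^{6V'}[\xi]=A^{6V}[\xi]$, $B^{6V'}[\xi]=2B^{6V}[\xi]$ on the common range $\xi\in[\eta-\frac{\pi}{2},0]$, whence the envelopes differ by the dilation by $2$ and the SE branches follow by the $*$-symmetry. Your explicit verification of the intercept identity via $\al(\frac{\pi}{2}-\eta)=\frac{\pi}{2}$ is exactly the computation the paper leaves implicit, and it is carried out correctly.
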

\begin{proof}
As our choice of parameters is invariant under the symmetry $*$ for both the 6V' case $(u_{6V'},v_{6V'})^*=(u_{6V'},v_{6V'})=(0,-\frac{\pi}{2})$ and the 6V case $u_{6V}^*=u_{6V}=\frac{\pi}{2}$, we simply have to compare the envelope of the corresponding families of tangent lines leading to the NE branches, as given by Theorems \ref{6VNEthm} and \ref{6VpNEthm}. We have the two families (we add a superscript 6V,6V' to avoid ambiguities):
$$ y+A^{6V}[\xi]x -B^{6V}[\xi]=0 \quad {\rm and} \quad y+A^{6V'}[\xi]x -B^{6V'}[\xi] =0 $$
We find:
$$ \lim_{u\to 0,v\to-\frac{\pi}{2}} A^{6V'}[\xi]=\lim_{u\to \frac{\pi}{2}} A^{6V}[\xi],\qquad \lim_{u\to 0,v\to-\frac{\pi}{2}} B^{6V'}[\xi] =2 \lim_{u\to \frac{\pi}{2}} B^{6V}[\xi] $$
while the 6V and 6V' ranges of the parameter $\xi$ coincide with $\xi \in [\eta-\frac{\pi}{2},0]$. We deduce that upon rescaling of $x$ and $y$ by a factor of $2$ the two families are identical,
and conclude that $(X_{NE}^{6V'}[\xi],Y_{NE}^{6V'}[\xi])=2 (X_{NE}^{6V}[\xi],Y_{NE}^{6V}[\xi])$. The SE branch identification follows immediately from our remark on the symmetry $*$,
leading to $(X_{SE}^{6V'}[\xi],Y_{SE}^{6V'}[\xi])=2 (X_{SE}^{6V}[\xi],Y_{SE}^{6V}[\xi])$ as well, and the Theorem follows.
\end{proof}

A particular case of Theorem \ref{tenum} corresponds to the uniform case where the 6V-DWBC model boils down to the enumeration of ASM, and the 6V' model to that of VSASM. 
The arctic curves for both these cases were derived in \cite{CP2009,CP2010} and \cite{DFLAP} respectively, and shown to coincide.

For illustration, we have represented in Fig.~\ref{fig:arctic1} (left) the corresponding arctic curves for some values of $\eta$ ranging from $0^+$ to $\frac{\pi}{2}^-$: the curves are identical to the NE/SE portions of the arctic curve of the 6V-DWBC model, upon a rescaling by a global factor of $2$. 
For $\eta\to 0^+$, we find  the following limiting arctic curve:
$$(X_{NE},Y_{NE})\vert_{\eta\to 0}= \left( \frac{1}{\pi}(2\xi-\sin(2\xi)),1-\frac{1}{\pi}(2\xi+\sin(2\xi))\right)\quad (\xi\in [-\frac{\pi}{2},0])\\
$$
The limit  $\eta\to \frac{\pi}{2}^-$ is singular, as the parameter $\al=\pi/(\pi-2\eta)$ diverges. However, one can take a double scaling limit
$\eta=\frac{\pi}{2}-\epsilon$, $\xi=\epsilon \zeta$, and $\epsilon\to 0$, in which case the limiting curve reads:
\begin{eqnarray*}X_{NE}\vert_{\epsilon\to 0}&=&
\frac{(2+\zeta)^2(\cos(2\pi\zeta)-1+2\pi\zeta^2(\pi(1-\zeta^2)\cos(\pi\zeta)+2\zeta \sin(\pi\zeta))}{4(1+\zeta+\zeta^2)\sin^2(\pi \zeta)}\\
Y_{NE}\vert_{\epsilon\to 0}&=&
\frac{(1+\zeta)^2(3\sin^2(\pi\zeta)+\pi(1-\zeta)^2(\pi\zeta(2+\zeta)\cos(\pi\zeta)-2(1+\zeta)\sin(\pi\zeta))}{2(1+\zeta+\zeta^2)\sin^2(\pi \zeta)}  
\end{eqnarray*}
for $\zeta\in (-1,0]$.

By contrast, in the anisotropic case where $u_{6V'}=0$ but $v_{6V'}\neq -\frac{\pi}{2}$ (and $u_{6V}=-v_{6V'}\neq \frac{\pi}{2}$), the arctic curves no longer coincide.
For illustration, 
the predicted NE and SE portions of arctic curve of the 6V' model at $\eta=\frac{\pi}{3}$, $u_{6V'}=0$, $v_{6V'}= -\frac{\pi}{2}-\frac{\pi}{12}$ are depicted in Fig.~\ref{fig:arctic1} (right), together with the arctic curve of the 6V-DWBC model with the same values of the weights (i.e. with same $\eta$ and $u_{6V}=-v_{6V'}=\frac{\pi}{2}+\frac{\pi}{12}$): the resulting curves are very different. In particular, the 6V' curve is anchored at the endpoints $(-1,0)$ and $(-1,2)$ with horizontal tangents, whereas the 6V curve has horizontal tangents at different points $\big(2(1-\frac{2}{\sqrt{3}}),2\big)\simeq (-.309,2)$ and $\big(4(\frac{1}{\sqrt{3}}-1),0\big)\simeq (-1.69,0)$.

\subsubsection{The ``Free fermion" case $\eta=\frac{\pi}{4}$}

\begin{figure}
\begin{center}
\begin{minipage}{0.5\textwidth}
        \centering
        \includegraphics[width=5cm]{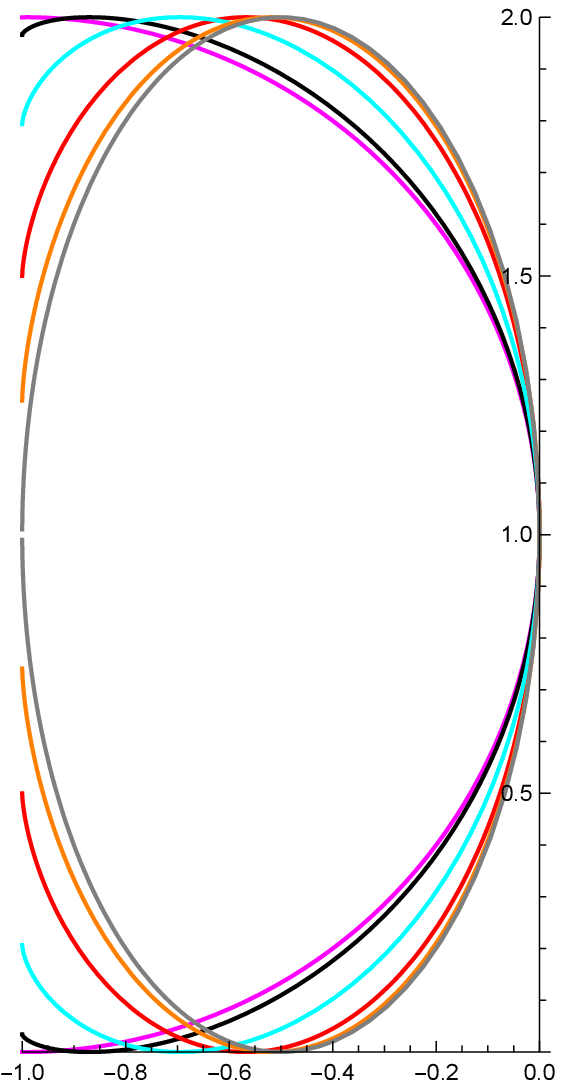} 
    \end{minipage}\hfill
    \begin{minipage}{0.5\textwidth}
        \centering
        \includegraphics[width=5cm]{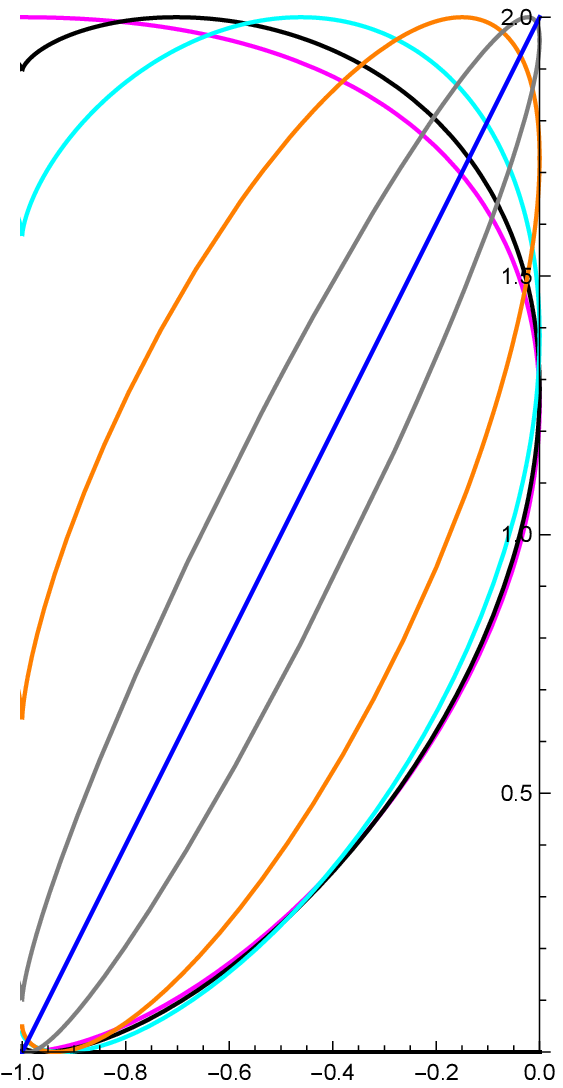}  
    \end{minipage}
\end{center}
\caption{\small Arctic curves for the free fermion case $\eta=\frac{\pi}{4}$ of the 6V' model. Left: symmetric case $v=-\frac{\pi}{2}$, 
with $u$ ranging from $0^+$ (outermost curve on the vertical $x=-1$) to $\frac{\pi}{4}^-$
(innermost curve on the vertical $x=-1$): all curves are symmetric w.r.t. the line $y=1$. 
Right: asymmetric case $v=-\frac{\pi}{2}-\frac{\pi}{12}$, with $u$ ranging from $0^+$ (outermost curve on the vertical $x=-1$) 
to $\frac{\pi}{6}^-$ (innermost curve on the vertical $x=-1$ degenerating to a segment).}
\label{fig:arctic2}
\end{figure}

This case is nicer in the sense that arctic curves are expected to be analytic. In particular, we checked that the SE portion of the arctic curve is indeed the analytic continuation of the NE one. In Fig.\ref{fig:arctic2} (left) we represent arctic curves for $\eta=\frac{\pi}{4}$ and the isotropic value $v=-\frac{\pi}{2}$ with $u$ ranging from $0^+$ to $\frac{\pi}{4}^-$.
The $u=0$ arctic curve is given by $(x,y)=\big(\cos(2\xi)-1,\sin(2\xi)+1\big)$: it is the half-circle $(x+1)^2+(y-1)^2=1$ with $x\geq -1$, first obtained in \cite{PRarctic}. The case $u=\frac{\pi}{4}$ is singular. As before, 
we consider the double-scaling limit $u=\frac{\pi}{4}+\epsilon$ and $\xi=\epsilon \, \zeta$, leading to the limiting curve:
$$(x,y)=\left(- \frac{\zeta^2}{1+\zeta^2},\frac{(1+\zeta)^2}{1+\zeta^2}\right) $$
equal to the ellipse $(2x+1)^2 +(y-1)^2 =1$ inscribed in the rectangle $[-1,0]\times [0,2]$. We see that the gap between the endpoints of the arctic curve on the vertical $x=-1$
ranges from $2$ (semi-circle case) to $0$ (ellipse case). This type of arctic curve was also encountered when considering lozenge tilings (an archetypical free fermion model)
with free boundary conditions in \cite{DFR}.

In Fig.~\ref{fig:arctic2} (right) we represent arctic curves for $\eta=\frac{\pi}{4}$ and a sample anisotropic value $v=-\frac{\pi}{2}-\frac{\pi}{12}$ with $u$ ranging from 
$0^+$ to $\frac{\pi}{6}^-$. The $u=0$ arctic curve is the quartic:
$$(x,y)=\left(-\sin^2(\xi)\big( 1+\frac{\cos^2(\xi)}{\cos^2(\frac{\pi}{6}-\xi)}\big),\frac{1}{2}\,\frac{\cos^2(\frac{\pi}{6}+\xi)}{\cos^2(\frac{\pi}{6}-\xi)}\big(1+\sqrt{3}\sin(\frac{\pi}{3}+2\xi)\big)\right) $$
The limit  $u\to \frac{\pi}{6}^-$ is singular, but the double scaling limit $u=\frac{\pi}{6}-\epsilon$ and $\xi= \zeta \sqrt{\epsilon}$ and $\epsilon\to 0$ leads to the segment
$$ (x,y)=\left(-1+\frac{1}{1+\frac{4}{\sqrt{3}}\zeta^2},\frac{2}{1+\frac{4}{\sqrt{3}}\zeta^2}\right) \qquad (\zeta\in [0,\infty) )$$
that joins point $(-1,0)$ to $(0,2)$.

\subsubsection{20V case}

\begin{figure}
\begin{center}
\includegraphics[width=8cm]{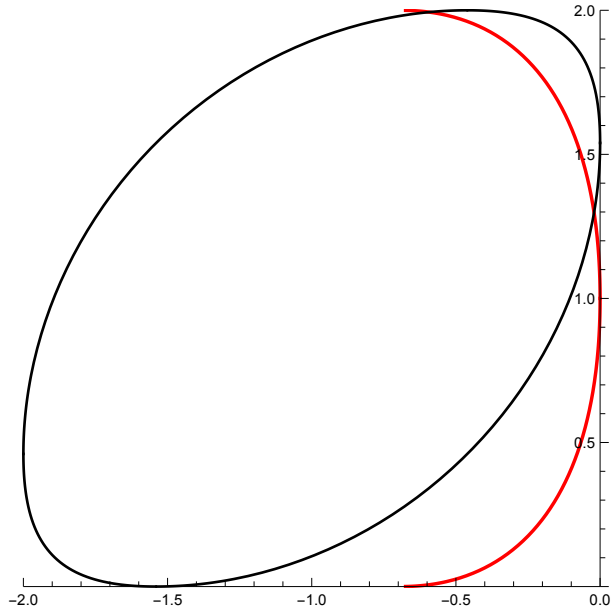}
\end{center}
\caption{\small Arctic curve (NE and SE portions) of the ``20V point" of the 6V' model, with $\eta=u=\frac{\pi}{8}$, $v=-\frac{\pi}{2}$ (symmetric curve in red) together with the arctic curve of the associated 6V model, with $\eta=\frac{\pi}{8}$, $u=\frac{5\pi}{8}$, scaled by a factor of $2$.}
\label{fig:arctic3}
\end{figure}

This case corresponds to $\eta=\frac{\pi}{8}$, $u=\eta=\frac{\pi}{8}$ and $v=-4\eta=-\frac{\pi}{2}$, by analogy with the 6V model with DWBC whose partition function is identical to
that of the uniformly weighted 20V model with DWBC1,2 studied in Refs. \cite{DFGUI,BDFG}. The corresponding NE/SE portions of arctic curve are depicted in Fig.~\ref{fig:arctic3}.

\subsubsection{Generic case}
\begin{figure}
\begin{center}
\includegraphics[width=5cm]{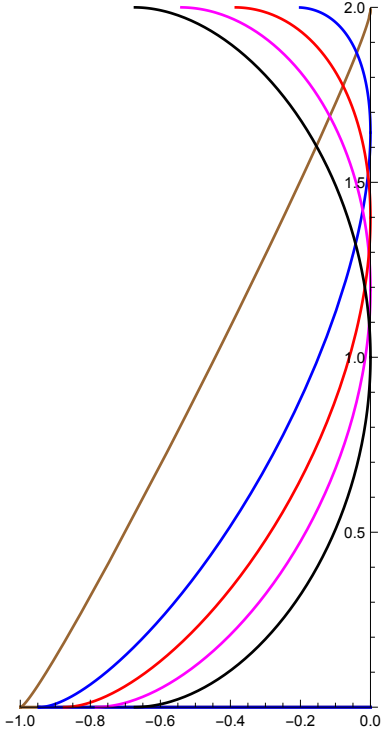}
\end{center}
\caption{\small Arctic curve (NE and SE portions) of the 6V' model  with $\eta=\frac{\pi}{3}$, $u=\frac{\pi}{12}$ and $v$ varying from $-\frac{\pi}{2}$ (leftmost curve on top) to $-\frac{\pi}{2}-\frac{\pi}{12}$ (rightmost curve on top).}
\label{fig:arctic4}
\end{figure}

We present a ``generic case" in Fig.~\ref{fig:arctic4} with $\eta=\frac{\pi}{3}$, $u=\frac{\pi}{12}$ and $v$ varying from $-\frac{\pi}{2}$ to $-\frac{\pi}{2}-\frac{\pi}{12}$. As before the case $v=-\frac{\pi}{2}-\frac{\pi}{12}$ is singular, but may be investigated via a double scaling limit, leading to the segment joining $(-1,0)$ to $(0,2)$.

\section{20V model with DWBC3}\label{20vsec}

\subsection{Partition function and one-point function}

In Ref.~ \cite{DF20V} the partition function of the 20V-DWBC3 model was related to that of the 6V' model, by use of the integrability of the weights \eqref{weights20V}. More precisely, let us denote by $Z_n^{20V}[u,\bv]$ the semi-homogeneous
partition function of the 20V-DWBC3 model, with all horizontal spectral parameters equal to $\eta+u$, all diagonal
ones to $-u$ and arbitrary vertical spectral parameters $\bv=v_1,v_2,...,v_n$, and by 
$Z_n^{6V'}[u,\bv]$ the partition function of the 6V' model with horizontal spectral parameters all equal to $u$ and arbitrary vertical spectral parameters $\bv$. We have:
\begin{thm}\label{20v6vthm}{\cite{DF20V}}
The following relation holds for all $n\geq 1$:
\begin{equation}Z_n^{20V}[u,\bv]=\al^{n(3n-1)/2} Z_n^{6V'}[u,\bv] \sin(2u+2\eta)^{n(3n-1)/2}  \prod_{i=1}^n \sin(u-v_i-\eta)^{i-1} \sin(\eta-u-v_i)^{i}
\end{equation}
\end{thm}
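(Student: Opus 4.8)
The plan is to prove the identity by exploiting the integrable structure of the $20$V weights \eqref{weights20V}, following the route of \cite{DFGUI,DF20V}. The basic observation is that every $20$V vertex sits at a triple crossing of three spectral-parameter lines --- a horizontal line carrying $\eta+u$, a vertical line carrying $v_j$, and a diagonal line carrying $-u$ --- and that the weights \eqref{weights20V} are precisely the ones for which this triple crossing factorizes, by the Yang--Baxter equation, into three ordinary $6$V $R$-matrices: one for the horizontal/vertical pair with argument $\eta+u-v_j$, one for the horizontal/diagonal pair with the \emph{constant} argument $2u+\eta$ (independent of $j$), and one for the vertical/diagonal pair with argument $u+v_j$. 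Thus $Z_n^{20V}[u,\bv]$ is equal, as a weighted sum, to the partition function of three superimposed $6$V models living on three distinct sublattices, glued along their shared lines; in particular the $\omega_3$ weight being a sum of two terms reflects the two internal resolutions of that vertex.

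First I would set this three-fold decomposition up carefully on the quadrangle $\cQ_n$ with DWBC$3$, keeping track of which edges of each sublattice carry fixed boundary arrows. The horizontal/vertical sublattice forms an $n\times n$ square grid; the diagonal lines enter from the boundary and must be routed across. The core step is then to use the (bulk and boundary/reflection) Yang--Baxter relations to transport the diagonal lines: because the horizontal/diagonal $R$-matrix has an argument independent of the column index, the diagonal lines slide freely through the horizontal ones, and pushing them against the quadrangular boundary of $\cQ_n$ reflects them, so that the $2n$ diagonal half-lines pair up into $n$ U-turns. This is exactly the mechanism that turns the square $6$V lattice into a $2n\times n$ $6$V lattice with U-turn boundary, i.e.\ (after the simplifications of Section~\ref{gen6vpsec}, with $\theta$ fixed by the geometry) the $6$V$'$ model with horizontal parameter $u$ and vertical parameters $\bv$. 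During this transport one of the three sublattices degenerates: its vertices get forced by the ice rule together with the boundary data into a single frozen configuration, so it contributes only a product of local weights over $\cQ_n$. Collecting that product --- together with the projective constants relating $\nu$ to $\rho_e\rho_o$, which is what the normalization $\al$ packages --- yields the prefactor $\al^{n(3n-1)/2}\sin(2u+2\eta)^{n(3n-1)/2}\prod_{i=1}^n\sin(u-v_i-\eta)^{i-1}\sin(\eta-u-v_i)^{i}$, the exponents $i-1$ and $i$ coming from counting how many frozen vertices lie in the $i$-th row of $\cQ_n$ (the staircase structure of the DWBC boundary region), and the power $n(3n-1)/2$ counting the total number of frozen vertices.

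The main obstacle is the boundary bookkeeping. One must verify that the reflection of the diagonal lines is compatible with the DWBC$3$ data --- this is where the precise shape $n\times n\times(2n-1)\times n$ of $\cQ_n$ and the U-turn reduction of Section~\ref{gen6vpsec} are genuinely used --- check that the degenerate sublattice is \emph{fully} frozen rather than merely partially constrained, match the surviving weights against \eqref{6voddweights}, \eqref{6vevenweights} and \eqref{uweights} row by row, and count the frozen vertices exactly to pin down both the exponents in the product and the power $n(3n-1)/2$. A less conceptual alternative would be to derive, from the resolved form, a Desnanot--Jacobi-type recursion in $n$ for $Z_n^{20V}[u,\bv]$ (analogous to the ones in Theorems~\ref{kupthm} and~\ref{6vpdeltathm}) and check that the claimed right-hand side satisfies the same recursion with the same base case $n=1$; but this still relies on the integrable resolution and is messier than the line-sliding argument carried out in \cite{DF20V}.
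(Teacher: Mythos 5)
The paper itself contains no proof of this statement: it is imported verbatim from \cite{DF20V}, the only indication of method being the remark that the 20V weights arise by ``resolving the triple intersections of spectral parameter lines at each vertex into three simple intersections corresponding to three 6V models'' and that integrability then transforms $Z_n^{20V}$ into $Z_n^{6V'}$. Your plan is precisely that argument --- Yang--Baxter resolution of each triple point, sliding of the diagonal lines whose crossings with the horizontal lines carry a $j$-independent weight $\sin(2u+2\eta)$, freezing of that sublattice (one such crossing per vertex of $\cQ_n$, hence the exponent $n(3n-1)/2$), and a per-column count of the residual frozen horizontal/vertical and diagonal/vertical crossings giving the exponents $i-1$ and $i$ --- so it follows the same route as the cited proof, with the boundary bookkeeping you flag being exactly the part carried out in \cite{DF20V} (note only that the product over $i$ indexes columns, i.e.\ vertical lines carrying $v_i$, not rows).
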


In the homogeneous case where all $v_i=v$ for all $i$, this reduces to:

\begin{equation}\label{vingtvpf}Z_n^{20V}[u,v]=
\al^{n(3n-1)/2} Z_n^{6V'}[u,v] \sin(2u+2\eta)^{n(3n-1)/2} \sin(u-v-\eta)^{n(n-1)/2} \sin(\eta-u-v)^{n(n+1)/2} 
\end{equation}

Next we define the one-point function $H_n^{20V}[u,v;\xi]$ as the ratio:
\begin{eqnarray}\label{inho20v}
H_n^{20V}[u,v;\xi]&:=&\frac{Z_n^{20V}[u,v;\xi]}{Z_n^{20V}[u,v]}\nonumber \\
&=&
\left(\frac{\sin(u-v-\xi-\eta)}{\sin(u-v-\eta)}\right)^{n-1} \, \left(\frac{\sin(\eta-u-v-\xi)}{\sin(\eta-u-v)}\right)^n \, H_n^{6V'}[u,v;\xi]
\end{eqnarray}
where in $Z_n^{20V}[u,v;\xi]$ we have kept $v_1=v_2=\cdots=v_{n-1}=v$ but relaxed the last value $v_n=v+\xi$.
Like in the 6V and 6V' cases, this function will be a crucial ingredient of the Tangent Method.

\subsection{Refined one-point functions and asymptotics}

\subsubsection{Refined partition function}\label{ref20vsec}

\begin{figure}
\begin{center}
\includegraphics[width=7.5cm]{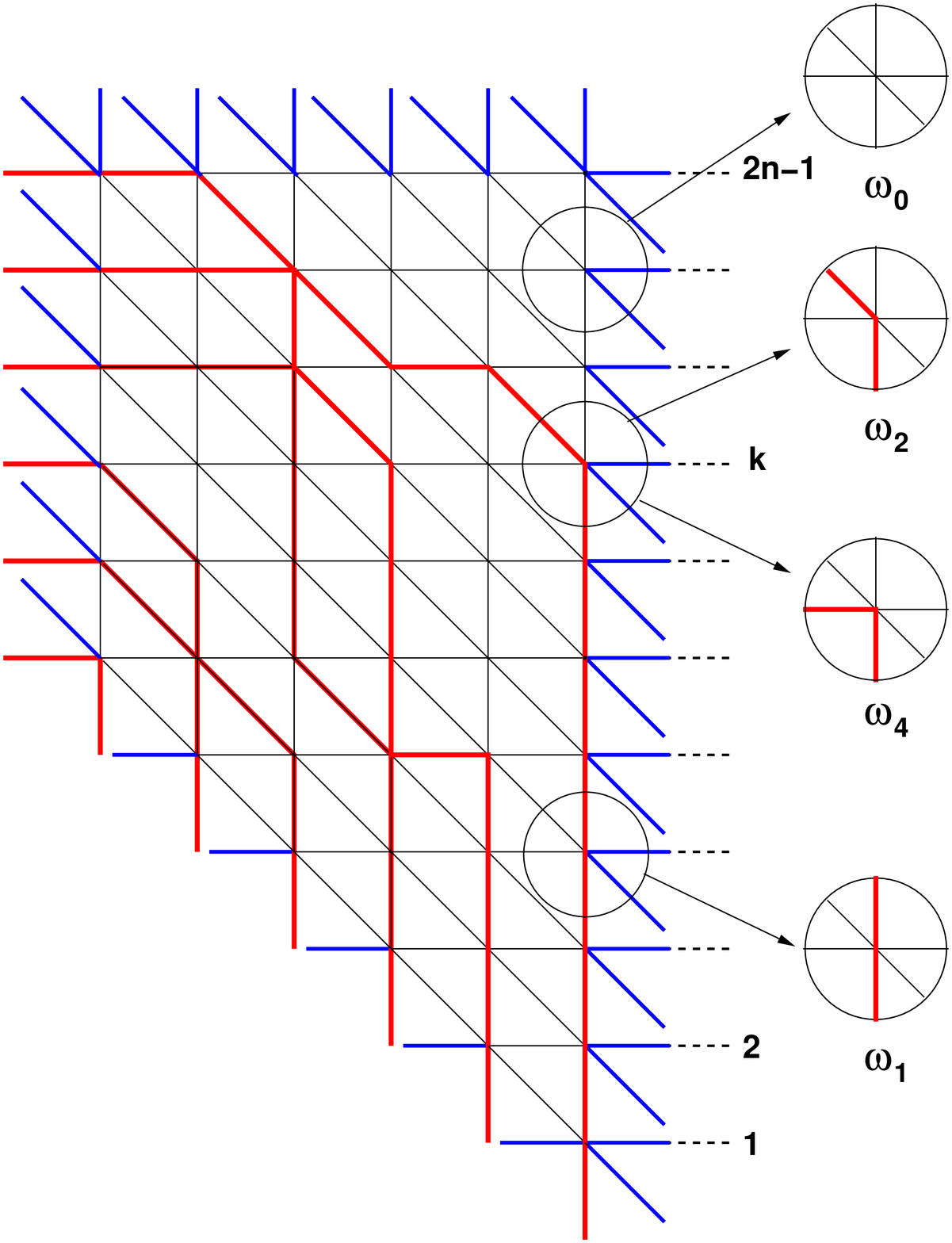}
\end{center}
\caption{\small A sample contribution to the refined partition function $Z_{n,k}^{20V}[u,v]$. In this particular example, the contribution pertains to $Z_{n,k}^{20V \backslash}[u,v]$. The medallions detail the various weights involved in the last column.}
\label{fig:ref20v}
\end{figure}

Let $Z_{n,k}^{20V}[u,v]$ denote the partition function of the 20V model on the quadrangle $Q_n$ with uniform weights \eqref{weights20V}, 
in which the rightmost path is conditioned to first visit the rightmost column
at a point at position $k\in [1,2n-1]$ (see Fig.~\ref{fig:ref20v} for an illustration).  
We may split this partition function into $Z_{n,k}^{20V}[u,v]=Z_{n,k}^{20V-}[u,v]+Z_{n,k}^{20V \backslash}[u,v]$ according to whether the topmost path  accesses 
the point $k$ via a horizontal $-$ or diagonal $\backslash$ step, before terminating with $k$ vertical steps until its endpoint. This quantity is easily related to the 
partially inhomogeneous partition function $Z_{n}^{20V}[u,v;\xi]$ \eqref{inho20v}. Recall that for the latter the weights are homogeneous with parameters $u,v$ except for the
$n$-th column in which $v$ is replaced by $v+\xi$. Let ${\bar \omega_i}:=\omega_i[u,v+\xi]/\omega_i[u,v]$ be the {\it relative} Boltzmann weights for the last column, 
as compared to the homogeneous values. Specifically, using the weights:
\begin{eqnarray*}
&&{\bar \omega_0}=\frac{\sin(u-v-\xi+\eta)\, \sin(\eta-u-v-\xi)}{\sin(u-v+\eta)\, \sin(\eta-u-v)}, \quad
{\bar \omega_2}=\frac{ \sin(u-v-\xi-\eta)}{ \sin(u-v-\eta)}
\nonumber \\
&&{\bar \omega_1}= \frac{\sin(u-v-\xi-\eta)\, \sin(-u-v-\xi-\eta)}{\sin(u-v-\eta)\, \sin(-u-v-\eta)}, \quad 
{\bar \omega_4}=\frac{\sin(\eta-u-v-\xi)}{\sin(\eta-u-v)}
\end{eqnarray*}
we find the following relation, expressing the decomposition of the contributions to $Z_{n}^{20V}[u,v;\xi]$ according to the configurations of their 
topmost path (see Fig.~\ref{fig:ref20v} for an illustration):
\begin{equation}\label{sumrule}
\sum_{k=1}^{2n-1} \left({\bar\omega}_4 \,Z_{n,k}^{20V -}[u,v]+{\bar \omega}_2\,Z_{n,k}^{20V \backslash}[u,v]\right) {\bar \omega}_0^{2n-k-1} {\bar\omega}_1^{k-1}
=Z_{n}^{20V}[u,v;\xi]
\end{equation}
Introducing the parameters
$$\tau:=\frac{{\bar\omega}_1}{{\bar \omega}_0},\qquad \sigma:=\frac{{\bar \omega}_2}{{\bar\omega}_4}$$
this reads:
$$Z_{n}^{20V}[u,v;\xi] ={\bar\omega}_4 \, {\bar \omega}_0^{2n-2} \sum_{k=1}^{2n-1} \tau^{k-1}\, (Z_{n,k}^{20V -}[u,v]+\sigma\, Z_{n,k}^{20V \backslash}[u,v])$$

\subsubsection{Refined one-point function}

As in the 6V' case, 
the corresponding (normalized) refined one-point functions $H_{n,k}^{20V -}[u,v],H_{n,k}^{20V \backslash}[u,v]$ are ratios of slightly modified refined partition functions 
to the original homogeneous partition function $Z_n^{20V}[u,v]$.
The corresponding configurations have a topmost path that stops at the point $k$ after a last step from the $n-1$-th vertical to the $n$-th one (see Fig.~\ref{fig:alltgt} top right, pink domain). Compared to 
$Z_{n,k}^{20V -}[u,v],Z_{n,k}^{20V \backslash}[u,v]$, we must remove the last $k$ vertical steps of the topmost path, 
and thus replace the $k$ corresponding weights by $1$ 
(instead of $\omega_4,\omega_2$) for the turning vertex, and by $\omega_0$ (instead of $\omega_1$) for the $k-1$ vertices crossed by the path:
\begin{equation}\label{1pt20v}
H_{n,k}^{20V -}[u,v]=\frac{1}{\omega_4}\left(\frac{\omega_0}{\omega_1}\right)^{k-1} \frac{Z_{n,k}^{20V -}[u,v]}{Z_n^{20V}[u,v]},\qquad 
H_{n,k}^{20V \backslash}[u,v]=\frac{1}{\omega_2}\left(\frac{\omega_0}{\omega_1}\right)^{k-1} \frac{Z_{n,k}^{20V \backslash}[u,v]}{Z_n^{20V}[u,v]}
\end{equation}

We deduce the relation
\begin{equation}\label{relaonept20v}
H_n^{20V}[u,v;\xi]=\frac{Z_n^{20V}[u,v;\xi]}{Z_n^{20V}[u,v]}=\omega_4[u,v;\xi]\,{\bar \omega}_0^{2n-2} 
\sum_{k=1}^{2n-1} t^{k-1} (H_{n,k}^{20V -}[u,v]+s\, H_{n,k}^{20V \backslash}[u,v])
\end{equation}
where we have used the parameters
\begin{equation}\label{t20v}
t= \tau \frac{\omega_1}{\omega_0}=\frac{\sin(u-v-\xi-\eta)\, \sin(-u-v-\xi-\eta)}{\sin(u-v-\xi+\eta)\, \sin(\eta-u-v-\xi)}=:t_{20V}[\xi],\quad s=\sigma \frac{\omega_2}{\omega_4}=
\frac{ \sin(u-v-\xi-\eta)}{\sin(\eta-u-v-\xi)}
\end{equation}
We note that the function $t_{20V}[\xi]$ is identical to $t_{6V'}[\xi]$ of the 6V' model \eqref{txi6v}.

\subsubsection{Relation to 6V' one-point function}

Using eq.\eqref{inho20v}, and noting moreover that ${\bar a}_o{\bar a}_e={\bar \omega}_0$, we may express:
\begin{equation}\label{relutfin}
\frac{H_n^{20V}[u,v;\xi]}{{\bar \omega}_0^{2n-1}}=\left(\frac{\sin(u-v-\xi-\eta)}{\sin(u-v-\eta)}\right)^{n-1} \, 
\left(\frac{\sin(u-v+\eta)}{\sin(u-v-\xi+\eta)}\right)^n\, \frac{H_n^{6V'}[u,v;\xi]}{({\bar a}_o{\bar a}_e)^{n-1}}
\end{equation}

\subsubsection{Asymptotics}

We now turn to large $n=N$ asymptotics of the one-point functions \eqref{1pt20v} with the scaled exit point position $\kappa =k/(2N)$ kept finite. 
We first note that the relation \eqref{relutfin} yields the large $N$ asymptotics
\begin{eqnarray}\frac{H_N^{20V}[u,v;\xi]}{{\bar \omega}_0^{2N-1}}&\simeq& e^{-N\,\varphi^{20V}[u,v,\xi]} \nonumber \\
\varphi^{20V}[u,v;\xi]&=&\varphi^{6V'}[u,v,\xi]-{\rm Log}\left(\frac{\sin(u-v-\xi-\eta)\sin(u-v+\eta)}{\sin(u-v-\xi+\eta)\sin(u-v-\eta)}\right) \nonumber \\
&&\!\!\!\!\!\!\!\!\!\! =-{\rm Log}\left(\frac{\sin(2v)\sin^2(u-v-\xi-\eta)\,\sin(u+v+\xi+\eta)\,\sin(u-v+\eta)}{\sin(2v+\xi)\sin^2(u-v-\eta)\,\sin(u+v+\eta)\,\sin(u-v-\xi+\eta)}\right)\nonumber \\
&&\!\!\!\!\!\!\!\!\!\!  -{\rm Log}\left(
\frac{\sin(\al\xi)\sin(\al(\xi+2v+2\eta))\sin(\al(u-v-\eta))\sin(\al(u+v+\eta))}{\al \,\sin(\xi)\sin(2\al (v+\eta))\sin(\al(u-v-\xi-\eta))\sin(\al(u+v+\xi+\eta))} \right)\label{asymptoh20v}
\end{eqnarray}

As the parameter $s$ is finite and independent of $k$, using the relation \eqref{relaonept20v}, the connection between $H_n^{20V}[u,v;\xi]$ to the 6V' one-point function \eqref{relutfin} and finally
the asymptotics \eqref{inho20v}, we get identical leading behaviors for both one-point functions.

\begin{thm}\label{20vasyonethm}
The large $n=N$ scaling limit of the refined one-point functions $H_{n,k}^{20V -}[u,v]$ and $H_{n,k}^{20V \backslash}[u,v]$
reads:
\begin{eqnarray*}
H_{N,2\kappa N}^{20V -}[u,v]&\simeq& H_{N,2\kappa N}^{20V \backslash}[u,v]\simeq \oint \frac{dt }{2i\pi t} e^{-N\, S_0^{20V}(\kappa,t)} \\
S_0^{20V}(\kappa,t)&:=& \varphi^{20V}[u,v;\xi]+2\kappa\,{\rm Log}(t) 
\end{eqnarray*}
where $\varphi^{20V}[u,v;\xi]$ is as in \eqref{asymptoh20v}, and 
in which the variables $t$ and $\xi$ are related via $t=t_{20V}[\xi]$ \eqref{t20v}. 
\end{thm}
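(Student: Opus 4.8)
The plan is to derive Theorem \ref{20vasyonethm} by extracting the coefficient of $t^{k-1}$ from the summation relation \eqref{relaonept20v} and applying the saddle-point method, exactly paralleling the 6V and 6V$'$ treatments. First I would observe that \eqref{relaonept20v} expresses the one-point function $H_n^{20V}[u,v;\xi]/{\bar\omega}_0^{2n-2}$ as a generating polynomial in $t$ whose coefficients are $\omega_4[u,v;\xi]^{-1}\big(H_{n,k}^{20V-}[u,v]+s\,H_{n,k}^{20V\backslash}[u,v]\big)$ (absorbing the $k$-independent factor $\omega_4$). Hence by contour integration
\begin{equation}
\omega_4[u,v;\xi]^{-1}\big(H_{N,k}^{20V-}[u,v]+s\,H_{N,k}^{20V\backslash}[u,v]\big)=\oint\frac{dt}{2i\pi t}\,t^{-(k-1)}\,\frac{H_N^{20V}[u,v;\xi]}{{\bar\omega}_0^{2N-2}},
\end{equation}
where on the right $\xi$ is the implicit function of $t$ obtained by inverting $t=t_{20V}[\xi]$ of \eqref{t20v} (which, as noted, coincides with $t_{6V'}[\xi]$). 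Substituting the asymptotics \eqref{asymptoh20v}, namely $H_N^{20V}[u,v;\xi]/{\bar\omega}_0^{2N-1}\simeq e^{-N\varphi^{20V}[u,v;\xi]}$, and writing $k=2\kappa N$, the integrand becomes $e^{-N(\varphi^{20V}[u,v;\xi]+2\kappa\,{\rm Log}\,t)}=e^{-N S_0^{20V}(\kappa,t)}$, up to subexponential prefactors (the stray ${\bar\omega}_0$ and the bounded factor $\omega_4[u,v;\xi]$ are $O(1)$ and do not affect the leading exponential).

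The second point is that $H_{N,2\kappa N}^{20V-}$ and $H_{N,2\kappa N}^{20V\backslash}$ have the \emph{same} leading exponential behavior. This follows because they enter \eqref{relaonept20v} only through the fixed linear combination $H_{n,k}^{20V-}+s\,H_{n,k}^{20V\backslash}$ with $s$ a $k$-independent constant: the generating function determines only this combination, so each term individually has leading order $e^{-N S_0^{20V}(\kappa,t)}$ unless there is a cancellation, and since both $H^{20V-}_{N,k}$ and $H^{20V\backslash}_{N,k}$ are manifestly positive (sums of products of positive Boltzmann weights in the Disordered regime \eqref{domain20}), no cancellation can occur. One may also phrase this more carefully: $0\le H^{20V-}_{N,k},\,s\,H^{20V\backslash}_{N,k}\le H^{20V-}_{N,k}+s\,H^{20V\backslash}_{N,k}$, so both are squeezed to have the same exponential rate as the sum, which is read off from \eqref{relaonept20v}. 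This is precisely the same reasoning as in the 6V$'$ case after Theorem \ref{asympto6vponept}, so I would simply invoke it.

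The main obstacle — or rather the one genuinely nontrivial input — is justifying that the contour integral is dominated by a genuine saddle point and that the interchange of the large-$N$ limit with the coefficient-extraction is legitimate, i.e. that $H_N^{20V}[u,v;\xi]\simeq e^{-N\varphi^{20V}}$ holds \emph{uniformly} enough on the relevant contour. In the excerpt this is handled at the level of rigor of the Tangent Method: the asymptotics \eqref{asymptoh20v} is itself derived (via \eqref{relutfin}) from the 6V$'$ asymptotics of Theorem \ref{asympto6vponept}, which in turn rests on Conjecture/Theorem \ref{freeconj}. So I would not attempt a rigorous steepest-descent estimate; instead I would state that, substituting \eqref{asymptoh20v} into the contour-integral representation above and deforming the contour through the critical point of $S_0^{20V}(\kappa,t)$ (equivalently the solution of $\partial_\xi S_0^{20V}(\kappa,t_{20V}[\xi])=0$), one obtains the claimed leading behavior, with the $\oint dt/(2i\pi t)$ kept in the statement as a compact bookkeeping device for "extract the $t^{k-1}$ coefficient, then take the $N\to\infty$ saddle." The remaining content is then purely the algebraic identifications $t=t_{20V}[\xi]$ from \eqref{t20v} and $\varphi^{20V}$ from \eqref{asymptoh20v}, both already established, so the proof reduces to assembling \eqref{relaonept20v}, \eqref{relutfin}, and \eqref{asymptoh20v}.

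\begin{proof}
Dividing \eqref{relaonept20v} by ${\bar\omega}_0^{2n-2}$ and by the $k$-independent factor $\omega_4[u,v;\xi]$, we obtain a polynomial identity in the variable $t$:
$$\frac{H_n^{20V}[u,v;\xi]}{\omega_4[u,v;\xi]\,{\bar\omega}_0^{2n-2}}=\sum_{k=1}^{2n-1}t^{k-1}\,\big(H_{n,k}^{20V-}[u,v]+s\,H_{n,k}^{20V\backslash}[u,v]\big),$$
where, by \eqref{t20v}, $t=t_{20V}[\xi]$ and $s$ is a function of $\xi$ independent of $k$. Extracting the coefficient of $t^{k-1}$ by a contour integral around $t=0$ gives
$$H_{n,k}^{20V-}[u,v]+s\,H_{n,k}^{20V\backslash}[u,v]=\oint\frac{dt}{2i\pi t}\,t^{-(k-1)}\,\frac{H_n^{20V}[u,v;\xi]}{\omega_4[u,v;\xi]\,{\bar\omega}_0^{2n-2}},$$
with $\xi$ understood as the implicit function of $t$ defined by inverting $t=t_{20V}[\xi]$. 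Now set $n=N$, $k=2\kappa N$ and substitute the asymptotics \eqref{asymptoh20v}, namely $H_N^{20V}[u,v;\xi]\,{\bar\omega}_0^{-(2N-1)}\simeq e^{-N\varphi^{20V}[u,v;\xi]}$. The subexponential factors ${\bar\omega}_0$ and $\omega_4[u,v;\xi]^{-1}$ do not affect the leading exponential order, and $t^{-(k-1)}=e^{-2\kappa N\,{\rm Log}(t)+{\rm Log}(t)}$, so the integrand is $e^{-N\,S_0^{20V}(\kappa,t)}$ up to $O(1)$ prefactors, with $S_0^{20V}(\kappa,t)=\varphi^{20V}[u,v;\xi]+2\kappa\,{\rm Log}(t)$ as claimed. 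Deforming the contour through the critical point of $t\mapsto S_0^{20V}(\kappa,t)$ (equivalently, the solution of $\partial_\xi S_0^{20V}(\kappa,t_{20V}[\xi])=0$) yields
$$H_{N,2\kappa N}^{20V-}[u,v]+s\,H_{N,2\kappa N}^{20V\backslash}[u,v]\simeq\oint\frac{dt}{2i\pi t}\,e^{-N\,S_0^{20V}(\kappa,t)}.$$
Finally, in the Disordered regime \eqref{domain20} all Boltzmann weights \eqref{weights20V} are positive, hence $H_{N,k}^{20V-}[u,v]$ and $s\,H_{N,k}^{20V\backslash}[u,v]$ are both nonnegative and each is bounded above by their sum; therefore each individually has the same leading exponential behavior as the sum, namely $e^{-N\,S_0^{20V}(\kappa,t)}$ at the saddle. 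This gives $H_{N,2\kappa N}^{20V-}[u,v]\simeq H_{N,2\kappa N}^{20V\backslash}[u,v]\simeq\oint\frac{dt}{2i\pi t}\,e^{-N\,S_0^{20V}(\kappa,t)}$, as stated.
\end{proof}
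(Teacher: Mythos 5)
Your proposal is correct and follows essentially the same route as the paper: extract the coefficient of $t^{k-1}$ from the generating relation \eqref{relaonept20v} by contour integral, substitute the asymptotics \eqref{asymptoh20v} (itself obtained from the 6V$'$ result via \eqref{relutfin}), discard the $k$-independent subexponential prefactors, and use the fact that $s$ is finite and $k$-independent together with positivity to conclude that both refined one-point functions share the leading exponential behavior. Your positivity/no-cancellation remark is in fact slightly more explicit than the paper's one-line justification, and is at the same (semi-rigorous) level of rigor as the rest of the Tangent Method analysis.
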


As before, the integral is dominated at large $N$ by the solution of the saddle-point equation
$\partial_t S_0^{20V}(\kappa,t)=0$, or equivalently, changing integration variables to $\xi$: $\partial_\xi S_0^{20V}(\kappa,t_{20V}[\xi])=0$. Using the identification $t_{20V}[\xi]=t_{6V'}[\xi]$, this is easily solved as
\begin{eqnarray}
&&\!\!\!\!\!\!\!\!\!\!\! \kappa= \kappa_{20V}[\xi]:=-\frac{1}{2}\,\frac{t_{6V'}[\xi]}{\partial_\xi t_{6V'}[\xi]}\, 
\partial_\xi \varphi^{20V}[u,v;\xi]\nonumber \\
&&\!\!\!\!\!\!\!\!\!\!\! =\frac{\kappa_{6V'}[\xi]}{2}+\frac{\cot(u-v-\xi+\eta)-\cot(u-v-\xi-\eta)}{2\,\sin(2\eta)}\,\frac{\sin^2(u-v-\xi+\eta)\sin^2(u-v-\xi-\eta)}{\cos(2u)\cos(2\xi+2v)-\cos(2\eta)}\nonumber \\
\label{kappa20v}
\end{eqnarray}
with $\kappa_{6V'}[\xi]$ as in (\ref{sapo6v}-\ref{rsol}).

\subsection{Paths}
$\ $

\subsubsection{Partition function}

With the setting of Fig.~\ref{fig:alltgt} (top right, light blue domain), we wish to compute the partition function $Y_{k,\ell}(\beta_1,\beta_2)$ of a single (Schr\"oder) path of the 20V model in the first quadrant $\Z_+^2$,
with starting point $(0,k)$ and endpoint $(\ell,0)$. We include a weight $\beta_1,\beta_2$ according to the configuration of the step taken before entering the path domain (last step in the pink domain, respectively horizontal or diagonal).

The paths receive homogeneous 20V weights \eqref{weights20V}, with 
horizontal, vertical, diagonal uniform spectral parameters $u+\eta,\ v,\ -u$ respectively, while
all vertices not visited by the path receive the weight $\omega_0$. As in the previous cases, we may factor out an 
unimportant overall factor $\omega_0^{k\ell}$ (where $k\ell$ is the area of the light blue rectangle $[0,\ell]\times [0,k]$ in
Fig.~\ref{fig:alltgt} top right), and weight 
the vertices visited by the path by and extra factor $\frac{1}{\omega_0}$. 

The partition function $Y_{k,\ell}(\beta_1,\beta_2)$ is computed by use of a transfer matrix technique (see \cite{BDFG} Appendix B for details with slightly different definitions). Each path is travelled from N,W to S,E, and the transfer matrix
is a $3\times 3$ matrix whose entries correspond to the vertex weight for the transition from the entering step at each visited vertex to the outgoing step. 
The three states are $(-,\backslash,\vert)$ for respectively a horizontal, diagonal, vertical step ending at the transition vertex.
Moreover we include an extra weight $z,zw,w$ per horizontal, diagonal, vertical outgoing step respectively. Note that the step prior to entering the quadrant (exit from the rectangular domain) may be
either horizontal (with an extra weight $\beta_1$) or diagonal (with an extra weight $\beta_2$), while the last step is vertical.
The transfer matrix $T_{20V}$ reads:
$$T_{20V}=\frac{1}{\omega_0} \begin{pmatrix}
\omega_6 z & \omega_5 z & \omega_4 z\\
\omega_5 z w & \omega_3 z w & \omega_2 z w \\
\omega_4 w & \omega_2 w & \omega_1 w
\end{pmatrix}$$
The generating function for the $Y_{k,\ell}$ reads
$$\sum_{k,\ell\geq 0} Y_{k,\ell}(\beta_1,\beta_2) \,z^k w^{\ell+1} = (0,0,1) ({\mathbb I}- T_{20V})^{-1} \begin{pmatrix} \beta_1\\ \beta_2\\ 0\end{pmatrix} $$
This is a rational fraction with denominator $\det({\mathbb I}- T_{20V})=1-\al_1 w-\al_2 z -\al_3 z w-\al_4 zw^2-\al_5 z^2w-\al_6 z^2w^2$, where
\begin{eqnarray}
&&\al_1=\frac{\omega_1}{\omega_0},\quad \al_2=\frac{\omega_6}{\omega_0},\quad \al_3=\frac{\omega_0\omega_3+\omega_4^2-\omega_1\omega_6}{\omega_0^2}\nonumber\\
&& \label{denom}\\
&&\al_4=\frac{\omega_2^2-\omega_1\omega_3}{\omega_0^2} ,\quad \al_5=\frac{\omega_5^2-\omega_6\omega_3}{\omega_0^2}, \quad \al_6=\frac{2\omega_2\omega_4\omega_5+\omega_1\omega_6\omega_3-\omega_3\omega_4^2-\omega_1\omega_5^2-\omega_6\omega_2^2}{\omega_0^3} \nonumber
\end{eqnarray}

\subsubsection{Asymptotics}

We now consider the large $n=N,k,\ell$ limit, with $\kappa= k/(2N)$ and $\lambda=\ell/N$ fixed.  Like in Sect. \ref{secasym6v} above, the asymptotics of $Y_{k,\ell}$ are determined by the denominator \eqref{denom}, and read (see also Ref. \cite{BDFG} appendix B for details):
\begin{eqnarray}
&&\qquad \qquad\quad \  Y_{2\kappa N,\lambda N}\simeq \int_0^1 dp_3 dp_4 dp_5 dp_6 e^{-NS_1^{20V}(\kappa,p_3,p_4,p_5,p_6)}\nonumber \\
&&S_1^{20V}(\kappa,p_3,p_4,p_5,p_6)=-(2\kappa+\lambda-p_3-2p_4-2p_5-3p_6){\rm Log}(2\kappa+\lambda-p_3-2p_4-2p_5-3p_6)\nonumber \\
&&\qquad\qquad\qquad\qquad +(2\kappa-p_3-2p_4-p_5-2p_6){\rm Log}\left(\frac{2\kappa-p_3-2p_4-p_5-2p_6}{\al_1}\right)\nonumber \\
&&\qquad\qquad\qquad\qquad+(\lambda-p_3-p_4-2p_5-2p_6){\rm Log}\left(\frac{\lambda-p_3-p_4-2p_5-2p_6}{\al_2}\right)\nonumber \\
&&\qquad\qquad\qquad\qquad+\sum_{i=3}^6 p_i{\rm Log}\left(\frac{p_i}{\al_i}\right)
\label{asymptopath20v}
\end{eqnarray}
As before this also covers the case of vanishing weights $\al_i$ by taking the limit $p_i\to 0$ at finite $\al_i$ in the above.

\subsection{Arctic curves}

\begin{thm}\label{20VNEthm}
The NE branch of the arctic curve for the 20V-DWBC3 model on the quadrangle $\cQ_n$ is predicted by the Tangent Method to be:
$$x=X_{NE}^{20V}[\xi]=\frac{B'[\xi]}{A'[\xi]}, \qquad y=Y_{NE}^{20V}[\xi]= B[\xi]-\frac{A[\xi]}{A'[\xi]}B'[\xi]
$$
where $B[\xi]=2 \kappa_{20V}[\xi]$ with $\kappa_{20V}[\xi]$ as in \eqref{kappa20v}, and where $A[\xi]$ is given by
\begin{eqnarray}
A[\xi]
&=& \frac{\cos(2\eta)-\cos(u+v+\eta)\cos(u+v-\eta+2\xi)}{\cos(2\eta)-\cos(2u)\cos(2v+2\xi)}\, \frac{\sin(u-v-\eta-\xi)\sin(u-v+\eta-\xi)}{\sin(\xi)\sin(\xi-2\eta)}\nonumber \\
\label{A20v}
\end{eqnarray}
and with the parameter range:
$$ 
\xi\in \big[\eta+u-v-\pi,0\big] 
$$
\end{thm}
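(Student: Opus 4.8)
The plan is to feed the two asymptotic inputs already assembled in the excerpt — Theorem~\ref{20vasyonethm} for the refined one-point function and the path asymptotics \eqref{asymptopath20v} — into the Tangent Method of Section~\ref{sectanbis}, following verbatim the template of the $6$V$'$ proof (Theorem~\ref{6VpNEthm}). First I would write the full partition function as a sum over the exit position $k$ of the topmost path, $\Sigma_{n,\ell}=\sum_{k=1}^{2n-1}\bigl(H_{n,k}^{20V-}[u,v]\,Y_{k,\ell}(\beta_1,0)+H_{n,k}^{20V\backslash}[u,v]\,Y_{k,\ell}(0,\beta_2)\bigr)$, where the splitting records whether the topmost path reaches $k$ by a horizontal or diagonal step, matched to the first step of the detached single path through the weights $\beta_1,\beta_2$ of $Y_{k,\ell}(\beta_1,\beta_2)$; the constants $\beta_1,\beta_2$ are $O(1)$ ratios of removed turning weights. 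Since $H_{n,k}^{20V-}$ and $H_{n,k}^{20V\backslash}$ share the leading behavior $e^{-N S_0^{20V}(\kappa,t)}$ (Theorem~\ref{20vasyonethm}), replacing the sum over $k$ by an integral over $\kappa=k/(2N)$ gives
\begin{align*}
\Sigma_{N,\lambda N}&\simeq\int d\kappa\,\frac{dt}{2i\pi t}\,dp_3\,dp_4\,dp_5\,dp_6\;e^{-N\,S^{20V}(\kappa,\xi,p_3,p_4,p_5,p_6)},\\
S^{20V}&:=S_0^{20V}(\kappa,t_{20V}[\xi])+S_1^{20V}(\kappa,p_3,p_4,p_5,p_6),
\end{align*}
with $t=t_{20V}[\xi]$ as in \eqref{t20v} and $S_1^{20V}$ as in \eqref{asymptopath20v}.

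Next I would extract the saddle point. Because $S_1^{20V}$ does not depend on $\xi$, the equation $\partial_\xi S^{20V}=0$ reduces to $\partial_\xi S_0^{20V}(\kappa,t_{20V}[\xi])=0$, whose solution is precisely $\kappa=\kappa_{20V}[\xi]$ of \eqref{kappa20v} — a step inherited for free through \eqref{relutfin}--\eqref{asymptoh20v} from the $6$V$'$ computation \eqref{rsol}, and which already pins the intercept $B[\xi]=2\kappa_{20V}[\xi]$. The remaining equations $\partial_\kappa S^{20V}=\partial_{p_3}S^{20V}=\partial_{p_4}S^{20V}=\partial_{p_5}S^{20V}=\partial_{p_6}S^{20V}=0$ form a system of five rational equations in $(\kappa,\lambda,p_3,p_4,p_5,p_6)$ with coefficients $\al_1,\dots,\al_6$ from \eqref{denom} and $t=t_{20V}[\xi]$. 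Substituting the $\omega_i$ of \eqref{weights20V} into \eqref{denom} and solving — expecting, as in the $6$V$'$ case, a unique branch with $\kappa,\lambda>0$ — yields the ratios $p_i/\kappa$ and, crucially, $\kappa/\lambda$ as an explicit function of $\xi$. This is the parametric relation $\lambda=\lambda_{20V}[\xi]$, hence the slope $A[\xi]=2\kappa_{20V}[\xi]/\lambda_{20V}[\xi]$. The arctic curve is then the envelope of the one-parameter family of lines $F_\xi(x,y)=y+A[\xi]x-B[\xi]=0$, read off from \eqref{acurve} as $x=B'[\xi]/A'[\xi]$, $y=B[\xi]-A[\xi]B'[\xi]/A'[\xi]$; the range $\xi\in[\eta+u-v-\pi,0]$ comes from imposing $A[\xi]\in[0,\infty)$, exactly as in Section~\ref{sectanbis}.

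The main obstacle is the explicit algebra. The weights $\omega_0,\dots,\omega_6$ in \eqref{weights20V} are triple products of sines, with $\omega_3$ even a sum of two such products, so the combinations $\al_3,\dots,\al_6$ in \eqref{denom} are unwieldy; solving the five saddle equations and then collapsing $\kappa/\lambda$ into the claimed compact form
\[
A[\xi]=\frac{\cos(2\eta)-\cos(u+v+\eta)\cos(u+v-\eta+2\xi)}{\cos(2\eta)-\cos(2u)\cos(2v+2\xi)}\cdot\frac{\sin(u-v-\eta-\xi)\sin(u-v+\eta-\xi)}{\sin(\xi)\sin(\xi-2\eta)}
\]
will require a long chain of product-to-sum identities and cancellations. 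The structural framework on the path side — the denominator \eqref{denom}, the shape of $S_1^{20V}$, and the form of its saddle equations — can however be imported with only cosmetic changes from the $20$V-DWBC$1,2$ treatment of \cite{BDFG} (appendix~B), so the genuinely new content is checking that, for the present weights \eqref{weights20V} and the exit-point function $t_{20V}[\xi]=t_{6V'}[\xi]$, the solution simplifies as stated; once that simplification is in hand, substitution into \eqref{acurve} completes the proof.
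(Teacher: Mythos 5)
Your proposal is correct and follows essentially the same route as the paper's proof: the same splitting of the total partition function into refined one-point function times single-path factor, the same saddle-point system in $(\kappa,\xi,p_3,\dots,p_6)$ with the $\xi$-equation yielding $B[\xi]=2\kappa_{20V}[\xi]$ and the remaining equations yielding $\kappa/\lambda$ and hence $A[\xi]$, followed by the envelope formula \eqref{acurve} and the restriction $A[\xi]\in[0,\infty)$ for the parameter range. The only content you defer — the explicit resolution of the five rational saddle equations and the trigonometric simplification of $\kappa/\lambda$ to the stated form of $A[\xi]$ — is exactly the computation the paper records in \eqref{valal}--\eqref{lam20vofxi}, so nothing in your plan diverges from the published argument.
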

\begin{proof}
We may now bring together the ingredients of the Tangent Method. We determine the family of tangents $F_\xi(x,y)=y+A[\xi]x-B[\xi]$ defined in Sect. \ref{sectan}.
We already identified the intercept $B[\xi]=2\kappa_{20V}[\xi]$ with $\kappa_{20V}[\xi]$ given by \eqref{kappa20v}. 
To determine the slope $A[\xi]=2\kappa/\lambda$, we must find the leading contribution to the total partition function
\begin{eqnarray*} 
&&\sum_{k=1}^{2n-1} H_{N,k}[u,v] \, Y_{k,\ell} \simeq \int_0^1 d\kappa H_{N,2\kappa N}[u,v]\, Y_{2\kappa N,\lambda N}\simeq 
\int_0^1 d\kappa dp_3 dp_4,dp_5 dp_6 e^{-N S^{20V}(\kappa,p_3,p_4,p_5,p_6,t)} \\
&&S^{6V'}(\kappa,p_2,p_4,p_5,t):=S_0^{20V}(\kappa,t)+S_1^{20V}(\kappa,p_3,p_4,p_5,p_6)
\end{eqnarray*}
with $S_0^{20V}(\kappa,t)$ as in \eqref{action6v} and $S_1^{20V}(\kappa,p_3,p_4,p_5,p_6)$ as in \eqref{asymptopath20v}. As in the 6V case, the saddle-point equation 
$\partial_\xi S^{20V}=0$ is solved by \eqref{kappa20v}, and amounts to parameterizing $\kappa=\kappa_{20V}[\xi]$ in terms of the parameter $\xi$.
The saddle-point equations
$\partial_\kappa S^{20V}=\partial_{p_3}S^{20V}=\partial_{p_4}S^{20V}=\partial_{p_5}S^{20V}=\partial_{p_6}S^{20V}=0$ give rise to the system of algebraic equations:
\begin{eqnarray*}
\frac{t}{\al_1}&=&\frac{2\kappa+\lambda-p_3-2p_4-2p_5-3p_6}{2\kappa-p_3-2p_4-p_5-2p_6} \\
\frac{\al_1\al_2\,p_3}{\al_3}&=&\frac{(2\kappa-p_3-2p_4-p_5-2p_6)(\lambda-p_3-p_4-2p_5-2p_6)}{2\kappa+\lambda-p_3-2p_4-2p_5-3p_6}\\
\frac{\al_1^2\al_2\,p_4}{\al_4}&=&\frac{(2\kappa-p_3-2p_4-p_5-2p_6)^2(\lambda-p_3-p_4-2p_5-2p_6)}{(2\kappa+\lambda-p_3-2p_4-2p_5-3p_6)^2}\\
\frac{\al_1\al_2^2\,p_5}{\al_5}&=&\frac{(2\kappa-p_3-2p_4-p_5-2p_6)(\lambda-p_3-p_4-2p_5-2p_6)^2}{(2\kappa+\lambda-p_3-2p_4-2p_5-3p_6)^2}\\
\frac{\al_1^2\al_2^2\,p_6}{\al_6}&=&\frac{(2\kappa-p_3-2p_4-p_5-2p_6)^2(\lambda-p_3-p_4-2p_5-2p_6)^2}{(2\kappa+\lambda-p_3-2p_4-2p_5-3p_6)^3}
\end{eqnarray*}
Substituting the values of $t=t_{20V}[\xi]$ \eqref{t20v} and of the weights $\al_i$ \eqref{denom} expressed using \eqref{weights20V}:
\begin{eqnarray*}
&& \al_1=\frac{\sin(u-v-\eta)\sin(u+v+\eta)}{\sin(u-v+\eta)\sin(u+v-\eta)}, \quad \al_2=\frac{\sin(2u)\sin(u-v-\eta)}{\sin(2u+2\eta)\sin(u-v+\eta)}\\
&& \al_3= \frac{2\sin(2\eta)\sin(2u)\big(\sin(u-v+{\scriptstyle \frac{\pi}{4}})\sin(u+v+{\scriptstyle \frac{\pi}{4}})-\sin^2(2\eta)\big)}{\sin(2u+2\eta)\sin(u-v+\eta)\sin(u+v-\eta)}\\
&& \al_4=\frac{\sin(2u)\sin(u-v-\eta)\sin(u+v+3\eta)}{\sin(2u+2\eta)\sin(u-v+\eta)\sin(\eta-u-v)}\\
&&\al_5= \frac{\sin(2u-2\eta)\sin(u-v-\eta)\sin(u+v+\eta)}{\sin(2u+2\eta)\sin(u-v+\eta)\sin(\eta-u-v)}\\
&&\al_6=\frac{\sin(u-v-3\eta)\sin(u+v+3\eta)\sin(2u-2\eta)}{\sin(u-v+\eta)\sin(u+v-\eta)\sin(2u+2\eta)}
\end{eqnarray*} 
we find the unique solution such that $\lambda,\kappa>0$:
\begin{eqnarray}
\frac{p_3}{\kappa}&=& \frac{2\sin(\xi-2\eta)\sin(\xi)\sin(u+v+\xi-\eta)\sin(u+v+\xi+\eta)}{\sin(2\eta)\sin(u-v-\eta)\big(\cos(2u)\cos(u+v+\eta)-\cos(2\eta)\cos(u-v-2\xi+\eta)\big)}\nonumber \\
&&\qquad \times\, \frac{\cos^2(2u)-\cos(4\eta)-\sin(2u)\sin(2v+2\eta)}{\cos(2\eta)-\cos(u+v+\eta)\cos(u+v+2\xi-\eta)}\nonumber \\
\frac{p_4}{\kappa}&=& \frac{\sin(2u)\sin(u+v+3\eta)\sin(u-v-\xi+\eta)}{\sin^2(2\eta)\sin(u-v-\xi-\eta)\big(\cos(2u)\cos(u+v+\eta)-\cos(2\eta)\cos(u-v-2\xi+\eta)\big)} \nonumber \\
&&\qquad \times \ \frac{\sin(\xi-2\eta)\sin(\xi)\sin^2(u+v+\xi-\eta)}{\cos(2\eta)-\cos(u+v+\eta)\cos(u+v+2\xi-\eta)} \nonumber \\
\frac{p_5}{\kappa}&=& \frac{2\sin(2u-2\eta)\sin(u+v+\eta)}{\sin^2(2\eta)\big(\cos(2u)\cos(u+v+\eta)-\cos(2\eta)\cos(u-v-2\xi+\eta)\big)}\nonumber \\
&&\qquad \times \ \frac{\sin^2(\xi)\sin^2(u+v+\xi+\eta)}{\cos(2\eta)-\cos(u+v+\eta)\cos(u+v+2\xi-\eta)} \nonumber \\
\frac{p_6}{\kappa}&=& \frac{2\sin(2u-2\eta)\sin(u-v-3\eta)\sin(u+v+3\eta)\sin^2(\xi)}{\sin^2(2\eta)\sin(u-v-\eta)\big(\cos(2u)\cos(u+v+\eta)-\cos(2\eta)\cos(u-v-2\xi+\eta)\big)}\nonumber \\
&&\qquad \times \ \frac{\sin(u-v-\xi+\eta)\sin(u+v+\xi-\eta)\sin(u+v+\xi+\eta)}{\sin(u-v-\xi-\eta)\big(\cos(2\eta)-\cos(u+v+\eta)\cos(u+v+2\xi-\eta)\big)}\nonumber \\
\frac{\kappa}{\lambda}&=&
\frac{\sin(u-v-\xi-\eta)\sin(u-v-\xi+\eta)\big(\cos(2\eta)-\cos(u+v+\eta)\cos(u+v+2\xi-\eta)\big)}{2\sin(\xi-2\eta)\sin(\xi)\big(\cos(2\eta)- \cos(2u)\cos(2v+2\xi)\big)}\nonumber \\
&&\label{valal}
\end{eqnarray}
Using the parametrization $\kappa=\kappa_{20V}[\xi]$, we may interpret the last equation as determining $\lambda$ as a function $\lambda_{20V}[\xi]$ of the parameter $\xi$, where:
\begin{eqnarray}
\lambda_{20V}[\xi]&:=&\kappa_{20V}[\xi]\, \frac{2 \sin(\xi)\sin(\xi-2\eta)}{\sin(u-v-\eta-\xi)\sin(u-v+\eta-\xi)} \nonumber \\
&&\qquad \times\ \frac{\cos(2\eta)-\cos(u+v+\eta)\cos(u+v-\eta+2\xi)}{\cos(2\eta)-\cos(2u)\cos(2v+2\xi)}\label{lam20vofxi}
\end{eqnarray}
To summarize, we have found the most likely exit point $\kappa$
as an implicit function of the arbitrary parameter $\lambda$, via the parametric equations $(\kappa,\lambda)=(\kappa_{20V}[\xi],\lambda_{20V}[\xi])$, which 
results in the family of tangent lines $F_\xi(x,y)=0$.
The theorem follows from the expressions \eqref{acurve}, by identifying the slope $A[\xi]=2\kappa_{20V}[\xi]/\lambda_{20V}[\xi]$, while the range of the parameter $\xi$ corresponds to imposing $A[\xi] \in [0,\infty)$.
\end{proof}

As explained in Section \ref{obsec}, the SE branch of the arctic curve is easily obtained by applying the transformation
$(u,v)\mapsto (u^*,v^*)=(u,-v-\pi)$ and the change of coordinates $(x,y)\mapsto (x,2-x-y)$.

\begin{thm}\label{20VSEthm}
The SE branch of the arctic curve for the 20V-DWBC3 model is given by the parametric equations
$$ x=X_{SE}^{20V}[\xi]={X_{NE}^{20V}}[\xi]^* \qquad 
y=Y_{SE}^{20V}[\xi]=2-X_{NE}^{20V}[\xi]^*-{Y_{NE}^{20V}}[\xi]^*\qquad (\xi\in \big[\eta+u+v,0\big]  )$$
with $X_{NE}^{20V},Y_{NE}^{20V}$ as in Theorem \ref{20VNEthm}, and where the superscript $*$ stands for the transformation $(u,v)\mapsto (u^*,v^*)=(u,-v-\pi)$,
which we have also applied to the range of $\xi$.
\end{thm}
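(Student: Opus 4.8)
The plan is to obtain the SE branch of Theorem \ref{20VSEthm} from the already-established NE branch of Theorem \ref{20VNEthm} by running the Tangent Method a second time on a reinterpreted family of paths, exactly as outlined for the 20V-DWBC3 model in Section \ref{obsec}. Concretely, I would invoke the sequence of transformations $\textbf{VF}$ (Vertical Flip), $\textbf{R}$ (Reflection) and $\textbf{S}$ (Shear) and then transcribe its effect onto the explicit parametric formulas of Theorem \ref{20VNEthm}.

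First I would recall that $\textbf{VF}$, which exchanges occupied and empty vertical edges, reinterprets a 20V-DWBC3 configuration on $\cQ_n$ as a family of $n$ osculating Schr\"oder paths going right, diagonally down and vertically up, now ending on the $n$ vertical edges of the N boundary. Composing with the horizontal reflection $\textbf{R}$ and the shear $\textbf{S}$ restores the quadrangular domain $\cQ_n$ (up to the harmless replacement of the horizontal starting steps along the W boundary by diagonal ones, which changes only one boundary column of weights). The net effect on the Boltzmann weights is the permutation $(\omega_0,\omega_1,\omega_2,\omega_3,\omega_4,\omega_5,\omega_6)\mapsto(\omega_1,\omega_0,\omega_4,\omega_3,\omega_2,\omega_5,\omega_6)$, which, by \eqref{20vstar}, is realized by the involution $*:(u,v)\mapsto(u^*,v^*)=(u,-v-\pi)$ on the spectral parameters.

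Next I would carry over the two Tangent-Method ingredients. The refined one-point function of the original model equals that of the $*$-transformed model read upside down, $H_{n,k}=H_{n,2n-k}^*$, and similarly $Y_{k,\ell}=Y_{2n-k,\ell}^*$; hence the saddle-point analysis performed in the proof of Theorem \ref{20VNEthm}, applied verbatim with the $*$-transformed weights, yields the most likely exit point $2N\bigl(1-\kappa^{*}(\lambda)\bigr)$ in rescaled variables and an associated family of tangent lines whose envelope is the pair $\bigl(X_{NE}^{20V}[\xi]^{*},Y_{NE}^{20V}[\xi]^{*}\bigr)$, with the parameter range obtained by substituting $v\mapsto -v-\pi$ in $\xi\in[\eta+u-v-\pi,0]$, i.e. $\xi\in[\eta+u+v,0]$. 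Finally I would compose the three rescaled operations that leave $\cQ_N/N$ invariant — reflection $(x,y)\mapsto(x,-y)$, shear $(x,y)\mapsto(x,y-x)$, translation by $(0,2)$ — whose product is $(x,y)\mapsto(x,2-x-y)$, to transport this envelope back to the SE corner, producing $X_{SE}^{20V}[\xi]=X_{NE}^{20V}[\xi]^{*}$ and $Y_{SE}^{20V}[\xi]=2-X_{NE}^{20V}[\xi]^{*}-Y_{NE}^{20V}[\xi]^{*}$.

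The one genuinely delicate point, and the place where the argument remains heuristic rather than rigorous, is justifying that the boundary modification produced by $\textbf{S}$ — all W-boundary starting steps becoming diagonal instead of horizontal — does not affect the arctic curve. This reduces to the standard claim that altering a single column of weights along the boundary changes neither the thermodynamic free energy nor the leading large-$N$ asymptotics of the one-point function, so that the same function $\varphi^{20V}[u,v;\xi]$ (now with $v\to -v-\pi$) and hence the same saddle point governs both models. Everything else is a routine substitution into the formulas of Theorem \ref{20VNEthm} and the coordinate change already recorded in Section \ref{obsec}.
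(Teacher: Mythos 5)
Your proposal is correct and follows essentially the same route as the paper: the paper's justification of Theorem \ref{20VSEthm} is precisely the $\bf VF,\ R,\ S$ argument of Section \ref{obsec} (weight permutation realized by $*:(u,v)\mapsto(u,-v-\pi)$, the reinterpretations $H_{n,k}=H_{n,2n-k}^*$ and $Y_{k,\ell}=Y_{2n-k,\ell}^*$, and the composite coordinate change $(x,y)\mapsto(x,2-x-y)$), applied to the NE branch of Theorem \ref{20VNEthm} with the parameter range transported by $*$. Your explicit acknowledgment that the diagonal-versus-horizontal W-boundary starting steps is a heuristic point matches the paper's own caveat.
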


\subsection{Examples}

$\ $
\begin{figure}
\begin{center}
\begin{minipage}{0.5\textwidth}
        \centering
        \includegraphics[width=5cm]{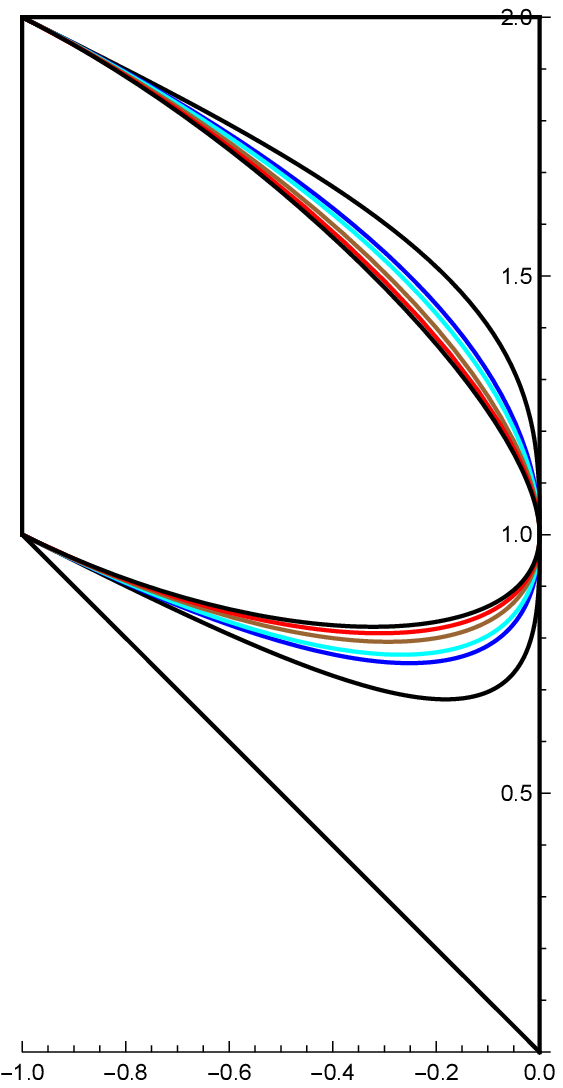} 
    \end{minipage}\hfill
    \begin{minipage}{0.5\textwidth}
        \centering
        \includegraphics[width=5cm]{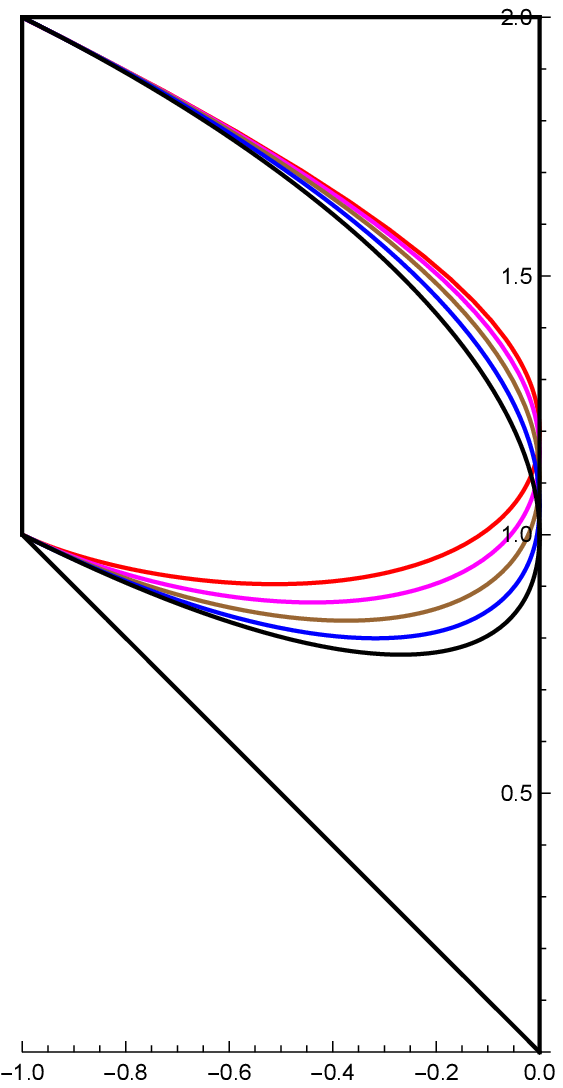}  
    \end{minipage}
\end{center}
\caption{\small Left: Arctic curve of the 20V-DWBC3 in the cases $u=0$, $v=-\frac{\pi}{2}$ and $\eta$ varying from $0+$ (outermost curve) to $\frac{\pi}{2}^-$ (innermost curve). Right: Arctic curve of the 20V-DWBC3 in the cases $u=0$, $\eta=\frac{\pi}{6}$ and $v$ varying between $-\frac{\pi}{2}-\frac{\pi}{6}$ (topmost NE curve) and $-\frac{\pi}{2}$ (bottommost).}
\label{fig:arctic5}
\end{figure}

We now illustrate the results of Theorems \ref{20VNEthm} and \ref{20VSEthm} in a few examples.

\subsubsection{\bf Case $u=0$.}
In this case the arctic curve is entirely made of its NE and SE portions, as it touches the W boundary at points $(-1,1)$
and $(-1,2)$, both with a tangent of slope $-1/2$ (corresponding to $A=1/2$) for all values of $\eta,v$.
We have represented in Fig.~\ref{fig:arctic5} (left) the arctic curves for the self-dual value $v=-\frac{\pi}{2}$ and for $\eta$ ranging from $0^+$ to $\frac{\pi}{2}^-$.  The arctic curve for $\eta=0$ reads:
$$ (X_{NE}[\xi],Y_{NE}[\xi])=\left(\frac{2\xi-\sin(2\xi)}{\pi},1-2\frac{\xi}{\pi} \right) \qquad (\xi\in [-\frac{\pi}{2},0])$$
The limit $\eta\to\frac{\pi}{2}^-$ is singular, however we find a finite result by setting $\eta=\frac{\pi}{2}-\epsilon$
and $\xi=\epsilon \zeta$, and then sending $\epsilon\to 0$, with the result:
\begin{eqnarray*}X_{NE}&=&
\frac{(2+\zeta)^2(\cos(2\pi\zeta)-1+2\pi\zeta^2(\pi(1-\zeta^2)\cos(\pi\zeta)+2\zeta \sin(\pi\zeta))}{4(1+\zeta+\zeta^2)\sin^2(\pi \zeta)}\\
Y_{NE}&=&1+\frac{1}{\zeta}-\frac{\pi(1-\zeta^2)}{2\sin(\pi\zeta)}\\
&&+\frac{(2+\zeta)(2\zeta^2+2\zeta-1)(\cos(2\pi\zeta)-1+2\pi\zeta^2(\pi(1-\zeta^2)\cos(\pi\zeta)+2\zeta \sin(\pi\zeta))}{8(1+\zeta+\zeta^2)\sin^2(\pi \zeta)}  
\end{eqnarray*}
In all these cases, the SE branch is given by $(X_{SE},Y_{SE})=(X_{NE},2-X_{NE}-Y_{NE})$ as $v=v^*$.

We also represent non-selfdual cases in  Fig.~\ref{fig:arctic5} (right), for $u=0$, $\eta=\frac{\pi}{6}$ and $v$ varying between $-\frac{\pi}{2}-\frac{\pi}{6}$ and $-\frac{\pi}{2}$. We see that the tangency point on the vertical $x=0$ moves away from the self-dual point $(0,1)$, and that the curves are no longer nested as in the $v=-\frac{\pi}{2}$ case.

\subsubsection{\bf Uniform case.}

\begin{figure}
\begin{center}
\begin{minipage}{0.4\textwidth}
        \centering
        \includegraphics[width=4.cm]{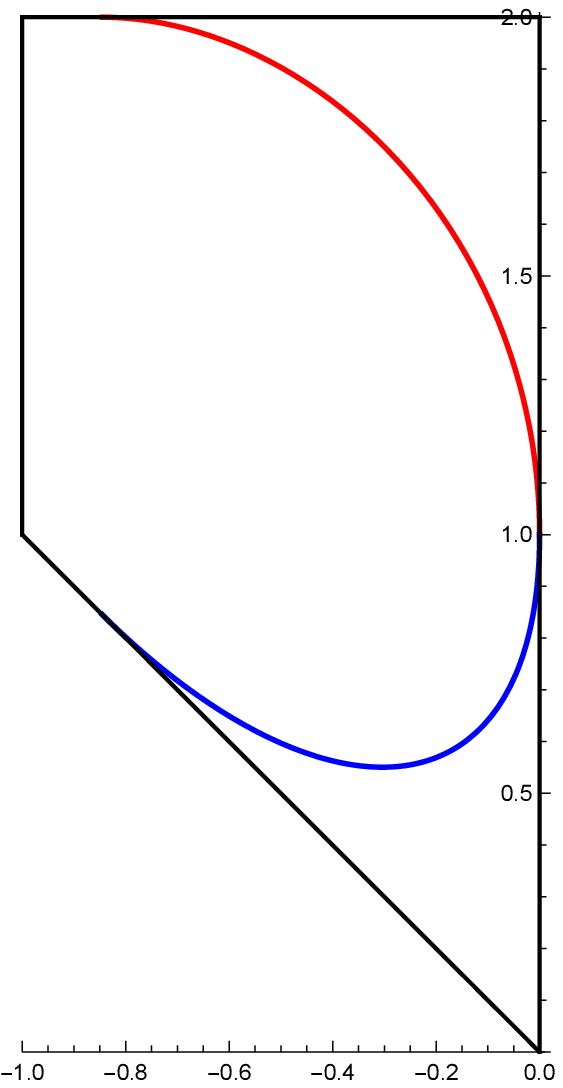} 
    \end{minipage}\hfill
    \begin{minipage}{0.59\textwidth}
        \centering
        \includegraphics[width=6.cm]{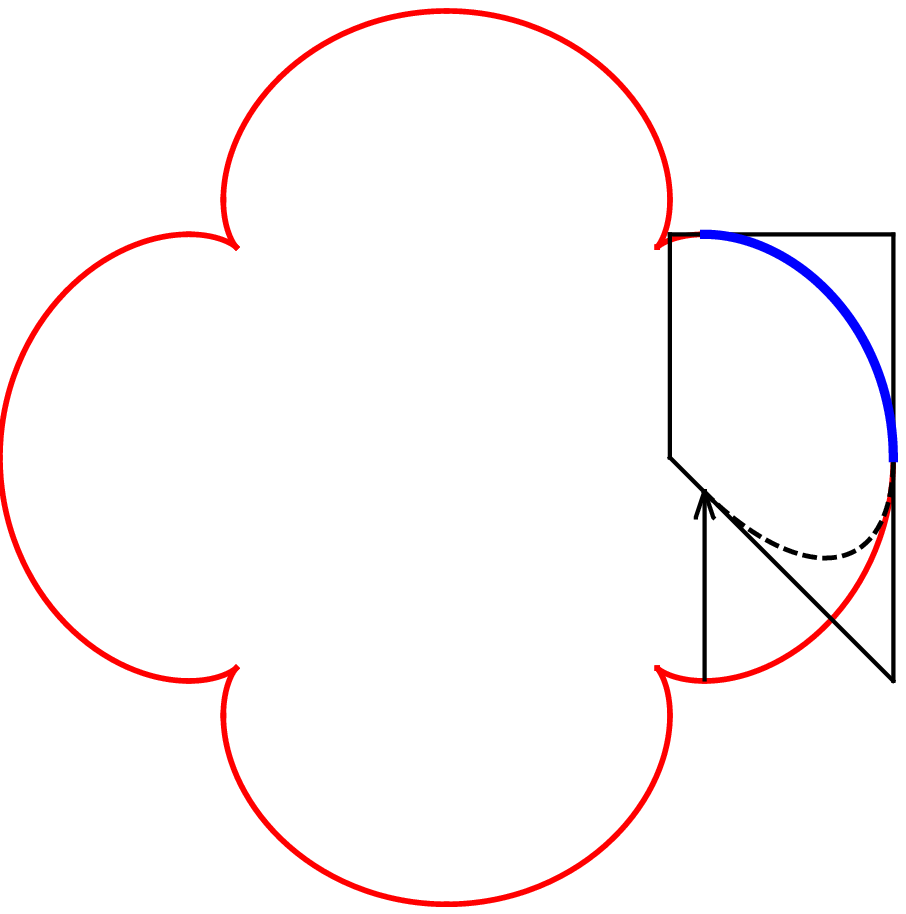}  
    \end{minipage}
\end{center}
\caption{\small Left: Arctic curve (NE=red and SE=blue portions) of the uniform 20V-DWBC3 model on its rescaled domain (black), 
corresponding to $\eta=\frac{\pi}{8}$, $u=\frac{\pi}{8}$ and $v=-\frac{\pi}{2}$. Right: Arctic curve of the uniform 20V-DWBC3 model (NE branch in thicker blue line, SE branch in dashed black), together with the analytic continuation of its NE portion (in red). The arrow indicates the shear transformation from the latter to the SE branch.}
\label{fig:arctic5b}
\end{figure}

As is easily checked on the weights \eqref{weights20V}, the uniform case corresponds to $\eta=\frac{\pi}{8}=u$, $v=-\frac{\pi}{2}$ and $\nu=\sqrt{2}$ \eqref{combipoint20v}. The NE and SE portions of the arctic curve predicted by Theorems \ref{20VNEthm} and \ref{20VSEthm}
have a vertical tangent at $(0,1)$, a horizontal tangent at $\big( \frac{2}{3}(\sqrt{3}-3),2\big)\simeq (-.845,2)$ and a diagonal tangent of slope $-1$ at $\big( \frac{2}{3}(\sqrt{3}-3),\frac{2}{3}(3-\sqrt{3})\big)\simeq (-.845,.845)$. We have represented in Fig.~\ref{fig:arctic5} (left)  the NE and SE portions of the arctic  curve together with the rescaled quadrangular domain $\lim_{n\to \infty} \cQ_n/n$.

As pointed out before, the Tangent Method does not allow to predict the NW and SW portions of the arctic curve. It is interesting however to notice that the NE portion of the curve is algebraic.
With a suitable shift of the origin to the point $(-2,1)$, namely substituting $(x,y)\to (x-2,y+1)$, we obtain the following algebraic equation:
\begin{equation}\label{algebraic}
3^6 (x^2+y^2-{\scriptstyle \frac{2}{3}})^5 -5^3\, 3^3 (x^2+y^2-{\scriptstyle \frac{2}{3}})^3-2\, 3^2\,5^4 (x^2+y^2-{\scriptstyle \frac{2}{3}})^2-2^2\, 5^5 (x^2+y^2-4 x^2 y^2)=0 
\end{equation}
We have represented this algebraic curve in Fig.~\ref{fig:arctic5b} (right) together with the NE portion of the uniform 20V-DWBC3 curve, and the scaled quadrangular domain (in black). We see that the SE portion of the arctic curve (dashed black curve) is obtained as the shear of the analytic continuation of the NE portion (red curve).

\subsubsection{\bf Free fermion case}

\begin{figure}
\begin{center}
\begin{minipage}{0.5\textwidth}
        \centering
        \includegraphics[width=5cm]{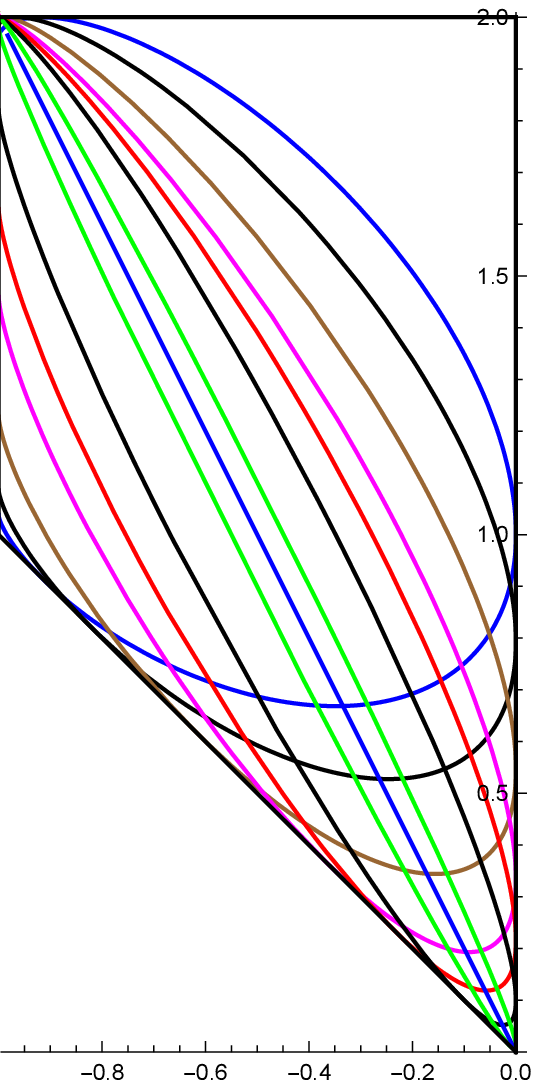} 
    \end{minipage}\hfill
    \begin{minipage}{0.5\textwidth}
        \centering
        \includegraphics[width=5cm]{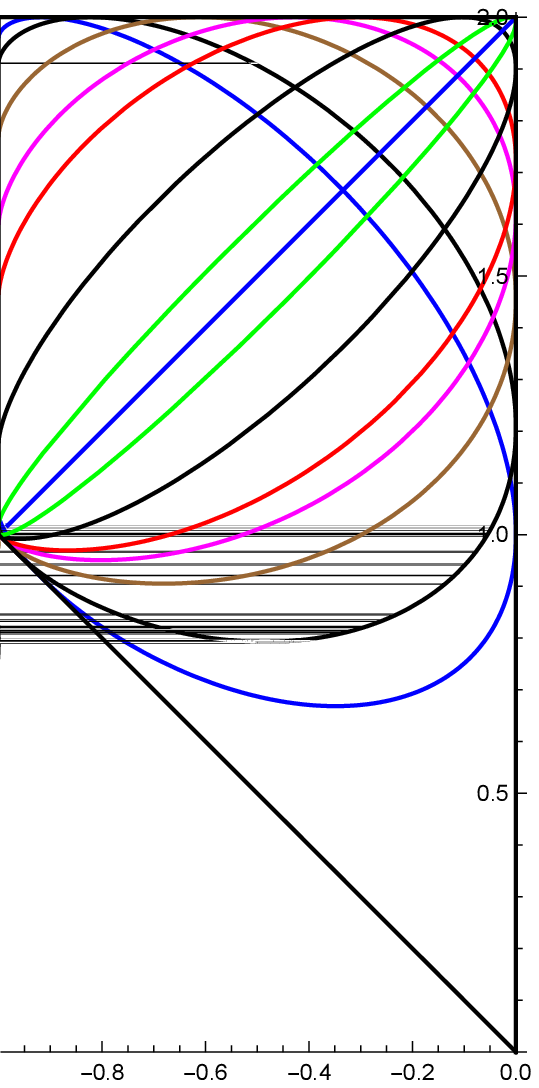}  
    \end{minipage}
\end{center}
\caption{\small Left: Arctic curve of the free fermion 20V-DWBC3 model for $\eta=\frac{\pi}{4}$, $u=\frac{\pi}{16}$ and $v$ varying from $-\frac{\pi}{2}$ (topmost) to $-\frac{\pi}{2}+3\frac{\pi}{16}$ (bottommost). Right: same, but with $v$ varying from $-\frac{\pi}{2}-3\frac{\pi}{16}$(topmost) to $-\frac{\pi}{2}$ (bottommost).}
\label{fig:arctic6}
\end{figure}

In view of the connection to the 6V' model (with same values of $\eta,u,v$) it is clear that 
$\eta=\frac{\pi}{4}$ plays the role of free fermion point. In particular, we expect the arctic curve to be analytic. As a highly non-trivial check, we have verified that at $\eta=\frac{\pi}{4}$ and for all allowed values of $u,v$ the SE branch is the analytic continuation of the NE branch. Like in the 6V' case, we also get access to the NW and SW branches via analytic continuation. We have represented in Fig.~\ref{fig:arctic6} a sequence of cases with $\eta=\frac{\pi}{4}$, $u=\frac{\pi}{16}$
and  $v$ varying (1) from $-\frac{\pi}{2}$ to $-\frac{\pi}{2}+\frac{3\pi}{16}$ (left) and (2) from $-\frac{\pi}{2}-\frac{3\pi}{16}$ to $-\frac{\pi}{2}$ (right). We see that in the case (1) the curve is anchored at the point $(-1,2)$ while the other end along the vertical $x=-1$ varies along the W boundary. The reverse phenomenon is observed in the case (2), where the curve is anchored at the point $(-1,1)$ and its other end varies along the W boundary.

%

\subsubsection{\bf Generic case}

\begin{figure}
\begin{center}
\begin{minipage}{0.5\textwidth}
        \centering
        \includegraphics[width=5cm]{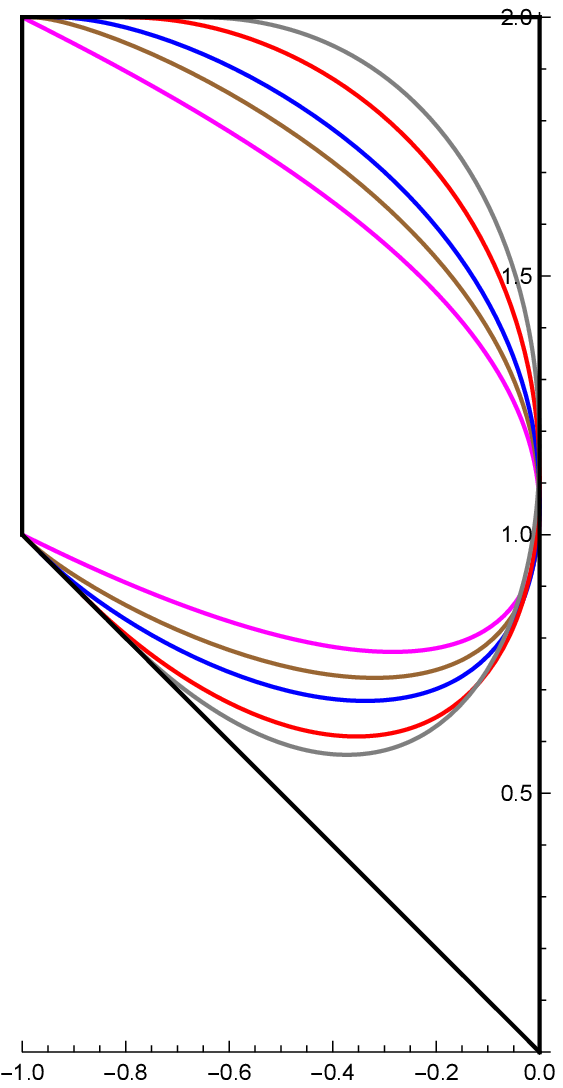} 
    \end{minipage}\hfill
    \begin{minipage}{0.5\textwidth}
        \centering
        \includegraphics[width=5cm]{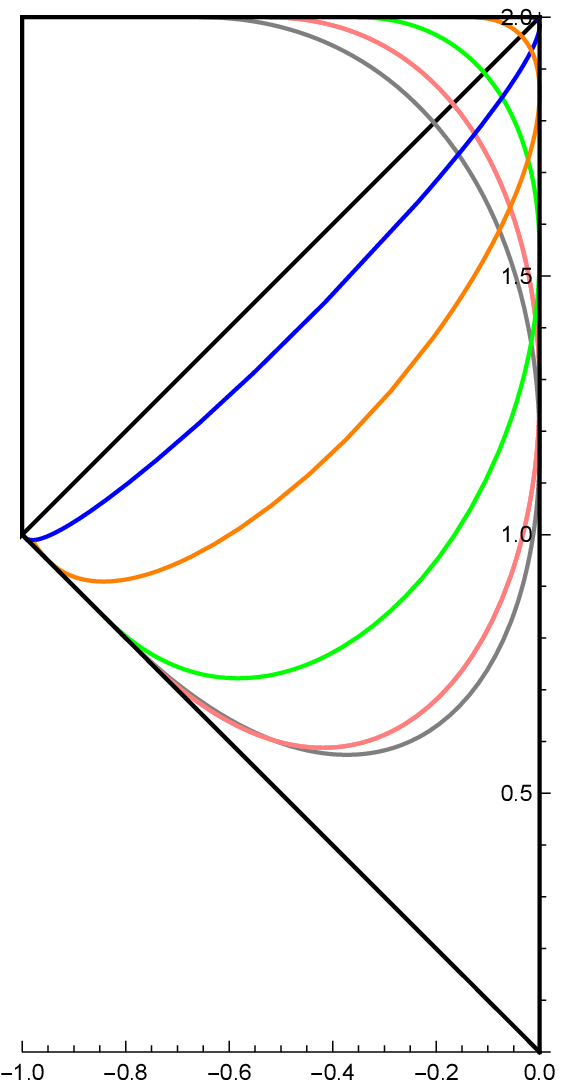}  
    \end{minipage}
\end{center}
\caption{\small Left: Arctic curve of the 20V-DWBC3 model for $\eta=\frac{\pi}{8}$, $v=-\frac{\pi}{2}-\frac{\pi}{32}$ and $u$ varying from $0$ (innermost) to $\frac{3\pi}{16}$. Right: same, but with $u$ varying from $\frac{3\pi}{16}$(bottommost) to $\frac{11\pi}{32}$ (segment).}
\label{fig:arctic7}
\end{figure}

We finally present in Fig.~\ref{fig:arctic7} a ``generic" case with no special symmetry: $\eta=\frac{\pi}{8}$, and $v=-\frac{\pi}{2}-\frac{\pi}{32}\neq v^*=-\frac{\pi}{2}+\frac{\pi}{32}$, for $u$ varying from $0$ to $\frac{11\pi}{32}=\pi+v-\eta$. The last value is singular,
and must be approached as $u=\frac{11\pi}{32}-\epsilon$, $\xi=\epsilon^{1/2}\,\zeta$ with $\epsilon\to 0$. The result
is a line segment joining the points $(-1,1)$ and $(0,2)$.

\section{Aztec triangle domino tilings}\label{DTsec}

\subsection{Partition function and one-point functions}

In Ref.~ \cite{DF20V}, a correspondence was established between the 20V-DWBC3 model on $\cQ_n$ and the Domino Tiling problem of the Aztec triangle $\cT_n$. First it was shown that the models share the same uniformly weighted partition function (total number of configurations):
$$ Z_n^{20V}=Z_n^{DT} $$
Next this correspondence was refined by considering the (uniformly weighted) 20V-DWBC3 refined partition functions
$Z_{n,k}^{20V}$, $k=1,2,...,2n-1$, equal to the refined partition function of Sect.~\ref{ref20vsec} with the parameters \eqref{combipoint20v}.
Their counterparts are the refined partition functions of the Domino tiling problem $Z_{n,k}^{DT}$ defined in a silimar manner, using the non-intersecting Schr\"oder path formulation of Sect.~\ref{dtrefsec}, as the number of configurations in which the topmost path is conditioned to first enter the last vertical at position $k=0,1,...,n-1$, 
before ending with $k$ vertical steps
(see the pink domain in Fig.~\ref{fig:alltgt} (bottom right) for an illustration).
In Ref.~\cite{DF20V}, it was shown that
\begin{equation}\label{usefuldt} Z_{n,k}^{DT}= Z_{n,n+k+1}^{20V}+Z_{n,n+k}^{20V}
\end{equation}
This implies the following relation between the corresponding refined one-point functions
$H_{n,k}^{20V}=Z_{n,k}^{20V}/Z_n^{20V}$ and $H_{n,k}^{DT}=Z_{n,k}^{DT}/Z_n^{DT}$:
\begin{equation}\label{relaonept20vdt}H_{n,k}^{DT}= H_{n,n+k+1}^{20V}+H_{n,n+k}^{20V}
\end{equation}

\subsection{Arctic curves}

\subsubsection{Asymptotics of the one-pont function} As usual, we explore the asymptotics of the refined one-point function
$H_{n,k}^{DT}$ for the Domino Tiling model in the scaling limit of large $n=N$ and $\kappa=k/N$ finite.
The relation \eqref{relaonept20vdt} allows immediately to express:

\begin{thm}\label{DTasyonethm}
The large $n=N$ asymptotics of the refined one-point function
$H_{n,k}^{DT}$ for the Domino Tiling model reads:
\begin{equation}H_{N,\kappa N}^{DT}= 2 \, H_{N,2 \kappa' N}^{20V}, \qquad \kappa'=\frac{1+\kappa}{2} 
\end{equation}
and
\begin{eqnarray*}H_{N,\kappa N}^{DT}&\simeq& \oint \frac{dt}{2i\pi t} e^{-NS_0^{DT}(\kappa,t)} \\
S_0^{DT}(\kappa,t)&=& S_0^{20V}({\scriptstyle \frac{1+\kappa}{2}},t)=\varphi^{20V}[u,v;\xi] +(1+\kappa)\, {\rm Log}(t)
\end{eqnarray*}
where the variables $\xi$ and $t$ are dependent through the relation $t=t_{20V}[\xi]$ \eqref{t20v}.
\end{thm}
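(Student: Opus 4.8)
The plan is to deduce the statement directly from the exact combinatorial identity \eqref{relaonept20vdt}, $H_{n,k}^{DT}=H_{n,n+k+1}^{20V}+H_{n,n+k}^{20V}$ (itself a consequence of \eqref{usefuldt} proved in \cite{DF20V}), by feeding it into the large $N$ asymptotics of the $20V$ refined one‑point function already obtained above. First I would set $n=N$, $k=\kappa N$ with $\kappa$ ranging over $[0,1]$ (the range of the DT exit point, $k=0,1,\dots,n-1$), so that the two $20V$ indices on the right‑hand side are $n+k=(1+\kappa)N$ and $n+k+1=(1+\kappa)N+1$. Expressed in the scaled variable used for the $20V$ model, $\kappa_m:=m/(2N)$, these become $\kappa_{n+k}=\tfrac{1+\kappa}{2}=:\kappa'$ and $\kappa_{n+k+1}=\kappa'+\tfrac1{2N}$. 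Since $\kappa\in[0,1]$ we have $\kappa'\in[\tfrac12,1]$, which lies in the regime where the saddle‑point analysis behind Theorem \ref{20vasyonethm} applies; note that all weights here are taken at the uniform point \eqref{combipoint20v}, as demanded by the DT correspondence, so $\varphi^{20V}[u,v;\xi]$ and $t_{20V}[\xi]$ are evaluated at those specific parameter values.

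Next I would invoke Theorem \ref{20vasyonethm} (equivalently the contour representation coming from \eqref{relaonept20v}): for any index $m$ scaling like $2\kappa_m N$ one has $H_{N,m}^{20V}\simeq\oint\frac{dt}{2i\pi t}\,e^{-N S_0^{20V}(\kappa_m,t)}$ with $S_0^{20V}(\kappa,t)=\varphi^{20V}[u,v;\xi]+2\kappa\,{\rm Log}(t)$, $\varphi^{20V}$ as in \eqref{asymptoh20v}, and $t=t_{20V}[\xi]$ from \eqref{t20v}. Applying this to $m=n+k$ and to $m=n+k+1$, the unit shift in the second index changes the scaled variable by only $O(1/N)$; by the envelope property of the saddle this multiplies $e^{-N S_0^{20V}}$ by just a bounded, subexponential factor (concretely, it multiplies the integrand by $1/t$ evaluated near the saddle point). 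Hence both terms on the right of \eqref{relaonept20vdt} share the leading behavior $\simeq\oint\frac{dt}{2i\pi t}\,e^{-N S_0^{20V}(\kappa',t)}$, and summing them yields the stated relation $H_{N,\kappa N}^{DT}=2\,H_{N,2\kappa'N}^{20V}$ at leading exponential order, with $\kappa'=\tfrac{1+\kappa}{2}$.

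Finally I would simply record the action: $S_0^{20V}(\kappa',t)=\varphi^{20V}[u,v;\xi]+2\kappa'\,{\rm Log}(t)=\varphi^{20V}[u,v;\xi]+(1+\kappa)\,{\rm Log}(t)$, which is exactly $S_0^{DT}(\kappa,t)$, giving $H_{N,\kappa N}^{DT}\simeq\oint\frac{dt}{2i\pi t}\,e^{-N S_0^{DT}(\kappa,t)}$ with $t$ and $\xi$ related by $t=t_{20V}[\xi]$. The only point that genuinely needs care — the ``main obstacle,'' such as it is — is the bookkeeping of the symbol $\simeq$: it denotes equality of leading exponential asymptotics only, so the unit index shift, the prefactor $2$ from the two terms (and the further prefactors hidden in $H_{N,2\kappa'N}^{20V}=H_{N,2\kappa'N}^{20V-}+H_{N,2\kappa'N}^{20V\backslash}$ and in each steepest‑descent evaluation) all wash out and leave $S_0^{DT}$ untouched, which is all that the subsequent Tangent Method computation uses.
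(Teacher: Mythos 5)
Your argument is correct and is exactly the one the paper intends: the paper derives the theorem "immediately" from the identity \eqref{relaonept20vdt} combined with the asymptotics of Theorem \ref{20vasyonethm}, which is precisely your route (substituting $k\mapsto n+k$, noting the unit index shift is subexponential, summing the two terms for the factor $2$, and rewriting $2\kappa'\,{\rm Log}(t)=(1+\kappa)\,{\rm Log}(t)$). Your added care about the uniform point \eqref{combipoint20v} and the harmlessness of the $O(1/N)$ shift in the scaled variable is consistent with, and slightly more explicit than, the paper's treatment.
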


Similarly to the 6V' and 20V cases, the saddle-point equation in the variable $\xi$
reads $\partial_\xi S_0^{DT}(\kappa,t_{20V}[\xi])=0$, with  the solution:
\begin{equation}\label{kappadt}
\kappa=\kappa_{DT}[\xi]:= 2 \kappa_{20V}[\xi]-1 \qquad (\xi\in [-\frac{\pi}{4},0])\end{equation}
with $\kappa_{20V}[\xi]$ as in \eqref{kappa20v}, and
where the range of $\xi$ ensures that $\kappa_{20V}\in [1,2]$ hence $\kappa_{DT}\in [0,1]$.

\subsubsection{Asymptotics of Path partition function}

By definition, and comparing Fig.~\ref{fig:alltgt} top right and bottom right (light blue domains), 
we have in the uniform case: $Y_{k,\ell}^{DT}= Y_{k,\ell}^{20V}$. We deduce the asymptotics
$$Y_{\kappa N ,\lambda N}^{DT} \simeq \int_0^1 dp_3 e^{-NS_1^{DT}(\kappa,p_3)}, \qquad S_1^{DT}(\kappa,p_3)=S_1^{20V}(\kappa,p_3) $$
with $S_1^{20V}$ the uniform weight version of \eqref{asymptopath20v}:
$$ S_1^{20V}(\kappa,p_3)=-(\kappa+\lambda-p_3){\rm Log}(\kappa+\lambda-p_3)+(\kappa-p_3){\rm Log}(\kappa-p_3)+(\lambda-p_3){\rm Log}(\lambda-p_3)+p_3{\rm Log}(p_3)$$

\subsubsection{Arctic curves via the Tangent Method}

Strictly speaking, the Tangent Method only predicts the NE portion of the arctic curve. However, the domino tiling problem is of the ``free fermion" class, as it involves only non-intersecting lattice paths (or alternatively the dual is just a dimer model, for which the general results of \cite{KOS} apply). As such, it has an analytic arctic curve, hence we may safely use the analytic continuation of the NE portion predicted by the Tangent Method.

\begin{figure}
\begin{center}
\begin{minipage}{0.66\textwidth}
        \centering
        \includegraphics[width=7cm]{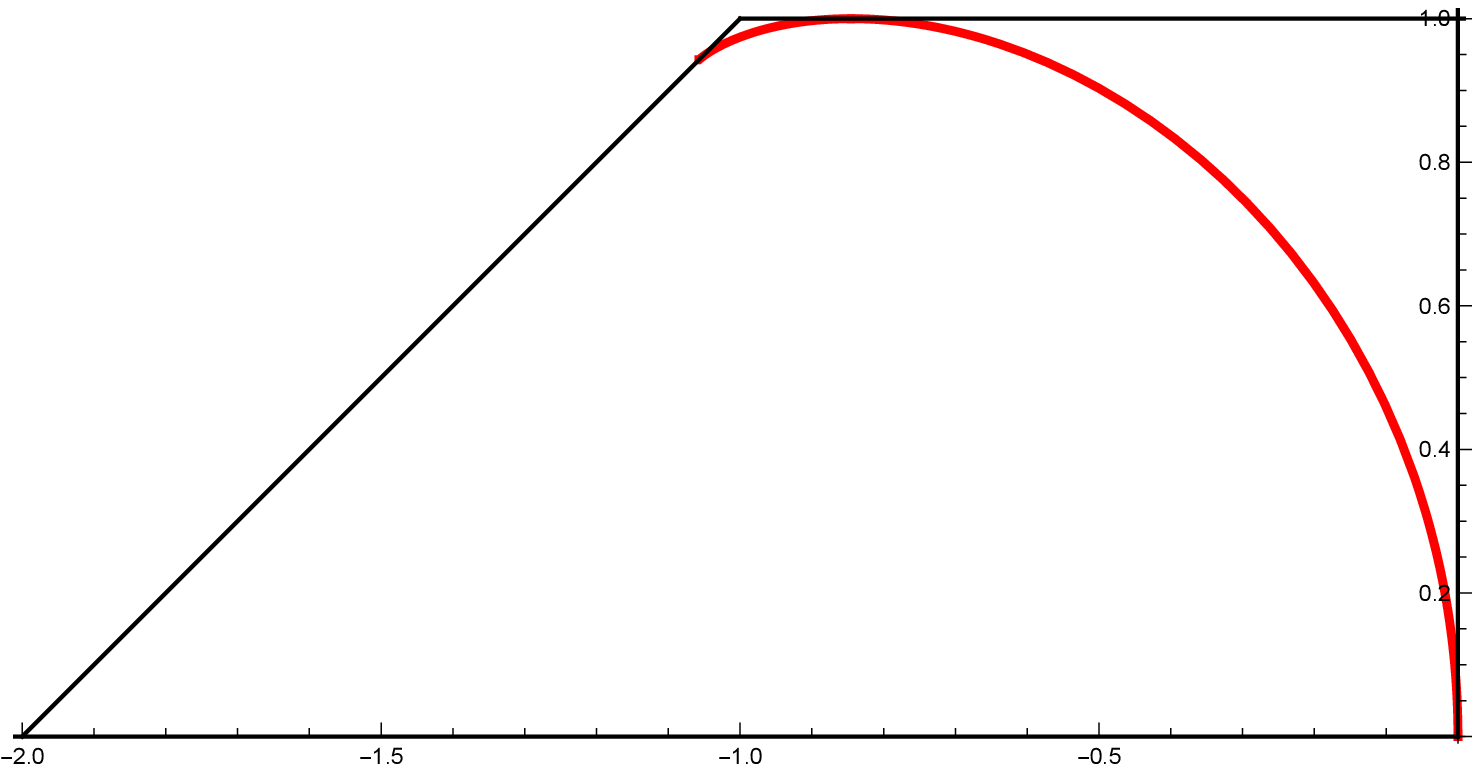} 
    \end{minipage}
    \begin{minipage}{0.33\textwidth}
        \includegraphics[width=4cm]{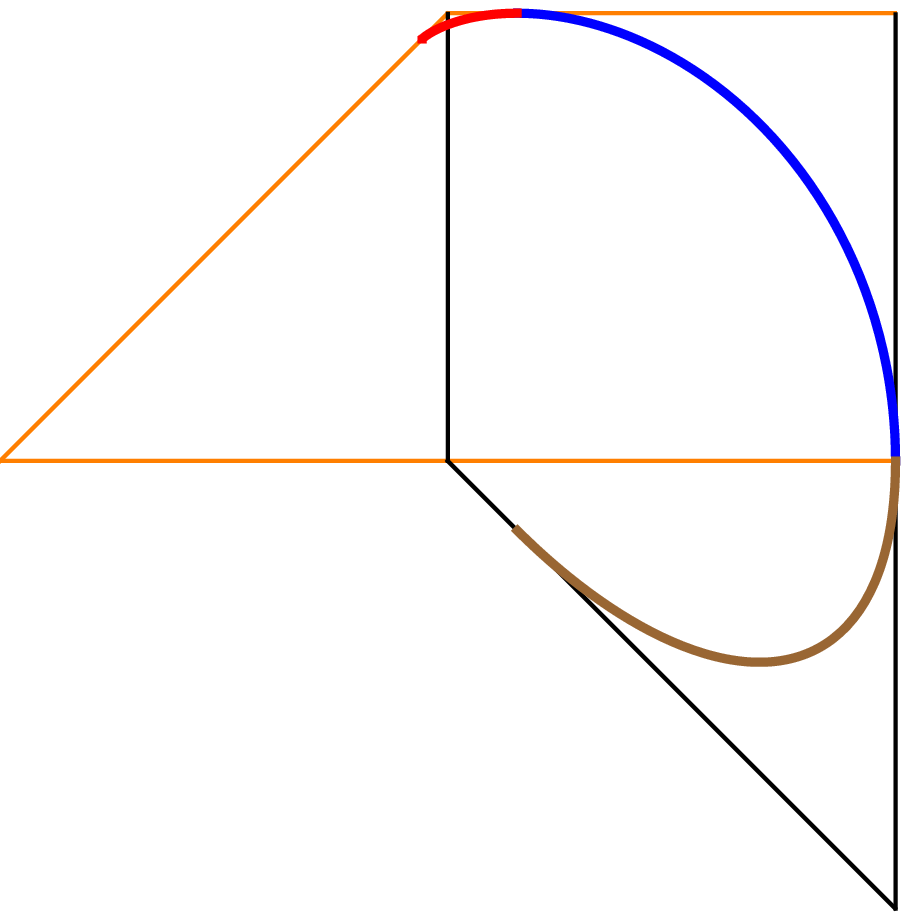}  
    \end{minipage}
\end{center}

\caption{\small Left: Arctic curve of the uniformly weighted Domino Tiling problem of the Aztec triangle, tangent to the NW, N and E boundaries. Right: comparison with the arctic curve of the 20V-DWBC3 model: the blue portion is the common NE branch of the two curves, represented with their respective rescaled domains.}
\label{fig:arcticdt}
\end{figure}

\begin{thm}\label{DTthm}
The arctic curve for the uniform Domino Tilings of the Aztec triangle, as predicted via the Tangent Method,  reads
$$ x=X^{DT}[\xi]= \frac{B'[\xi]}{A'[\xi]} \qquad y=Y^{DT}[\xi]=B[\xi]-\frac{A[\xi]}{A'[\xi]}B'[\xi]\qquad (\xi\in [-\frac{3\pi}{8},0])$$
where
$$B[\xi]:=\kappa_{DT}[\xi]\qquad {\rm and} \qquad A[\xi]:= -\cot(2 \xi) $$
with $\kappa_{DT}[\xi]$ as in \eqref{kappadt}.
\end{thm}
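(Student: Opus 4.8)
The plan is to run the Tangent Method exactly as in Sections \ref{6vsec}, \ref{6vpsec} and \ref{20vsec}, using the two ingredients already assembled above: the asymptotics of the refined one-point function $H_{N,\kappa N}^{DT}$ from Theorem \ref{DTasyonethm}, and the asymptotics of the single-path partition function $Y_{\kappa N,\lambda N}^{DT}\simeq\int_0^1 dp_3\, e^{-N S_1^{DT}(\kappa,p_3)}$. First I would write the total partition function of the Tangent-Method configuration as $\Sigma_{N,\lambda N}\simeq N\int_0^1 d\kappa\,H_{N,\kappa N}^{DT}\,Y_{\kappa N,\lambda N}^{DT}$ (with $\mu=1$ for the Domino Tiling model), substitute the integral representations, and collect everything into a single action $S^{DT}(\kappa,p_3,\xi)=S_0^{DT}(\kappa,t_{20V}[\xi])+S_1^{DT}(\kappa,p_3)$, using $\xi$ as integration variable in place of $t$ via $t=t_{20V}[\xi]$ of \eqref{t20v}.

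Next I would impose the three saddle-point equations $\partial_\xi S^{DT}=\partial_\kappa S^{DT}=\partial_{p_3}S^{DT}=0$. Since $S_1^{DT}$ is independent of $\xi$, the $\xi$-equation coincides with the one already solved in Theorem \ref{DTasyonethm}, namely $\kappa=\kappa_{DT}[\xi]=2\kappa_{20V}[\xi]-1$ of \eqref{kappadt}; this immediately fixes the intercept $B[\xi]=\kappa_{DT}[\xi]$ of the tangent line. The remaining two equations reduce to the algebraic system $t_{20V}[\xi]=(\kappa+\lambda-p_3)/(\kappa-p_3)$ and $p_3(\kappa+\lambda-p_3)=(\kappa-p_3)(\lambda-p_3)$, whose unique solution with $\kappa,\lambda>0$ is $p_3=\lambda/(t_{20V}[\xi]+1)$ and $\kappa/\lambda=2t_{20V}[\xi]/(t_{20V}[\xi]^2-1)$. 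This determines $\lambda=\lambda_{DT}[\xi]$ and hence the slope $A[\xi]=\kappa_{DT}[\xi]/\lambda_{DT}[\xi]=2t_{20V}[\xi]/(t_{20V}[\xi]^2-1)$.

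I would then specialize to the uniform point \eqref{combipoint20v}, $\eta=u=\frac{\pi}{8}$, $v=-\frac{\pi}{2}$, where \eqref{t20v} collapses to $t_{20V}[\xi]=\sin(\frac{\pi}{4}-\xi)/\sin(\frac{\pi}{4}+\xi)$. Using $\sin^2(\frac{\pi}{4}-\xi)-\sin^2(\frac{\pi}{4}+\xi)=-\sin(2\xi)$ and $2\sin(\frac{\pi}{4}-\xi)\sin(\frac{\pi}{4}+\xi)=\cos(2\xi)$, the slope simplifies to $A[\xi]=-\cot(2\xi)$, as claimed. The family of tangents is $F_\xi(x,y)=y+A[\xi]x-B[\xi]=0$, and the arctic curve is its envelope, given by \eqref{acurve}: $x=B'[\xi]/A'[\xi]$, $y=B[\xi]-A[\xi]B'[\xi]/A'[\xi]$. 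For the parameter range, imposing $A[\xi]\in[0,\infty)$ yields the proper NE branch $\xi\in[-\frac{\pi}{4},0]$ (consistent with $\kappa_{DT}\in[0,1]$); since the Domino Tiling model is a free-fermion/dimer model its arctic curve is analytic \cite{KOS}, so the full curve is the analytic continuation of this branch, realized for $\xi\in[-\frac{3\pi}{8},0]$, with $\xi=0,-\frac{\pi}{4},-\frac{3\pi}{8}$ producing the tangency points with the E, N and NW boundaries respectively (at $\xi=-\frac{3\pi}{8}$ one has $A=-1$, parallel to the NW diagonal edge of the rescaled triangle).

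The routine parts are the $2\times2$ algebraic elimination and the trigonometric collapse at the uniform point; both are short. The genuinely delicate step is the determination of the parameter range: the variational argument of the Tangent Method produces only the arc with $A[\xi]\ge 0$, and extending to $\xi\in[-\frac{3\pi}{8},0]$ rests on the a priori analyticity of the limit shape for this free-fermion model rather than on the saddle-point estimate itself, so this is the point to state carefully — and it is also where the result is, strictly speaking, a prediction rather than a theorem.
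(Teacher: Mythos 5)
Your proposal is correct and follows essentially the same route as the paper: the same total action $S_0^{DT}+S_1^{DT}$, the same saddle-point system in $(\kappa,p_3)$ yielding $\kappa/\lambda=2t/(t^2-1)$, the same trigonometric collapse to $A[\xi]=-\cot(2\xi)$ at the uniform point, and the same appeal to free-fermion analyticity to extend the parameter range from $[-\frac{\pi}{4},0]$ to $[-\frac{3\pi}{8},0]$. The only difference is cosmetic: your form of the first saddle-point equation, $t_{20V}[\xi]=(\kappa+\lambda-p_3)/(\kappa-p_3)$, is the one actually consistent with the paper's own solution $p_3/\kappa=(t-1)/(2t)$ (the paper's displayed equation has $t$ and $1/t$ interchanged), and your emphasis that the extension of the range is a prediction resting on analyticity rather than on the saddle-point estimate is exactly the caveat the paper itself makes.
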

\begin{proof}
The rescaled tangent lines are now through the points $(0,\kappa)$ and $(\lambda, 0)$, governed by the equation $y+A x-B=0$ with $A=\kappa/\lambda$, $B=\kappa$.
We have already determined the most likely exit point $\kappa=\kappa_{DT}[\xi]$ \eqref{kappadt}, leading to $B[\xi]=\kappa_{DT}[\xi]$. To determine $A[\xi]$
we solve the saddle-point equations $\partial_{\kappa} S^{DT}(\kappa,t,p_3)=\partial_{p_3} S^{DT}(\kappa,t,p_3)=0$, in terms of the total action $S^{DT}(\kappa,t,p_3):=S_0^{DT}(\kappa,t)+S_1^{DT}(\kappa,p_3)$.
These read
$$ 
t=\frac{\kappa-p_3}{\kappa+\lambda-p_3},\qquad \frac{p_3(\kappa+\lambda-p_3)}{(\kappa-p_3)(\lambda-p_3)}=1 
$$
and are easily solved into
$$ \frac{p_3}{\kappa_{DT}[\xi]}=\frac{t[\xi]-1}{2t[\xi]}=\frac{\sin(\xi)}{\sqrt{2}\, \sin(\xi-\frac{\pi}{4})},\qquad \frac{\kappa_{DT}[\xi]}{\lambda}=\frac{2t[\xi]}{t[\xi]^2-1}=-\cot(2 \xi)=A[\xi]$$
The range of parameter $\xi$ for the NE portion of arctic curve is $\xi\in [-\frac{\pi}{4},0]$, ensuring that $\kappa_{DT}[\xi]\in [0,1]$, however as noted above we may extend the range
to cover the entire domain, which corresponds to $\kappa_{DT}[\xi]\in [0,2]$, namely $\xi\in [-\frac{3\pi}{8},0]$, and
the Theorem follows.
\end{proof}


We illustrate the result of Theorem \ref{DTthm} in Fig.~\ref{fig:arcticdt} (left). Note that the curve has a vertical tangent at the origin, and a horizontal tangent at the point
$\big(\frac{2}{3}(\sqrt{3}-3),1\big)$, while it ends tangencially on the diagonal NW boundary at the point 
$\big(2\frac{\sqrt{2}}{3}-2,2\frac{\sqrt{2}}{3}\big)$.

We note that the curve of Theorem \ref{DTthm} is a portion of an algebraic curve. In fact, changing the origin to $(-2,0)$ by applying the substitution $(x,y)\to (x-2,y)$, we find that this curve is given by  the {\it same} equation \eqref{algebraic} as in the uniform 20V-DWBC3 case. This is illustrated in Fig.~\ref{fig:arcticdt} (right) where we have represented both rescaled domains, and their common NE branch of the arctic curve (in blue).


\section{Conclusion}\label{seconc}

In this paper, we have presented the Tangent Method derivation of arctic curves for the disordered phase of the 6V-DWBC, 6V', 20V-DWBC3 models, as well as for the Domino Tilings of the Aztec Triangle. The main ingredient used is the large size asymptotics of refined one-point functions, which we derived from the form of the thermodynamic free energy of the 6V' model in the disordered phase (Theorem \ref{freeconj}), and then deducing all relevant asymptotics from there. 
Our method however only predicts the NE and SE branches of the relevant arctic curves.
It would be desirable to find the remaining NW and SW branches of the arctic curves when applicable. 

\begin{figure}
\begin{center}
\begin{minipage}{0.49\textwidth}
        \centering
        \includegraphics[width=5.5cm]{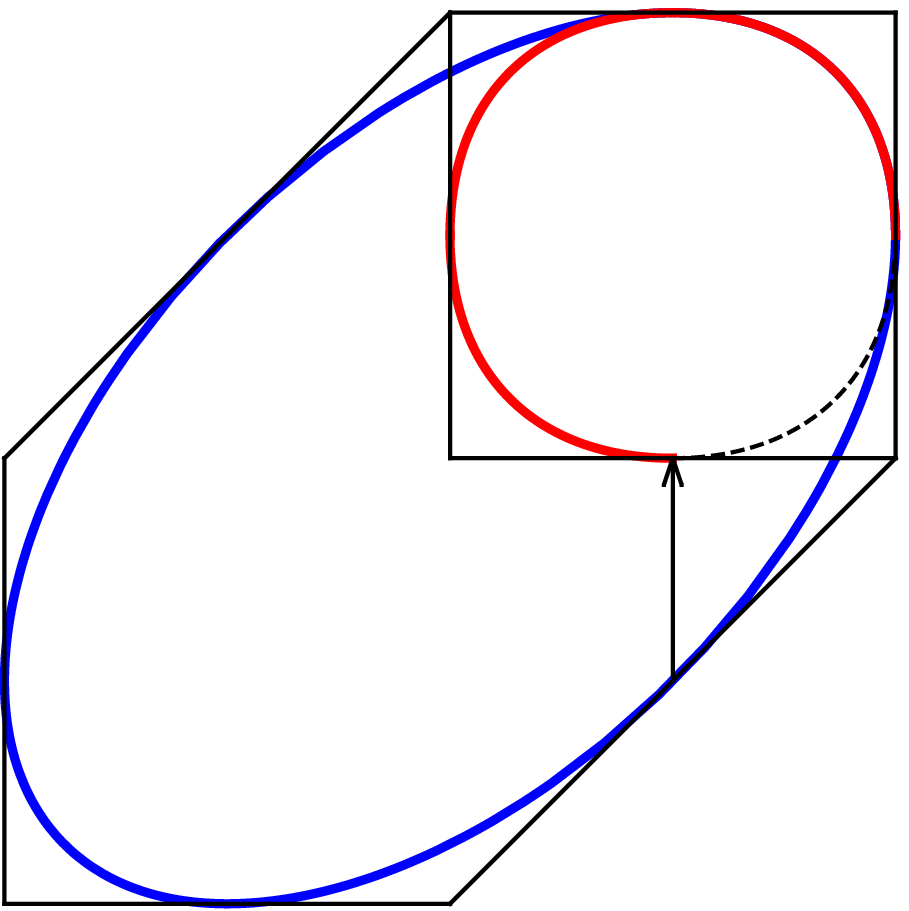} 
    \end{minipage}
    \begin{minipage}{0.5\textwidth}
        \includegraphics[width=5cm]{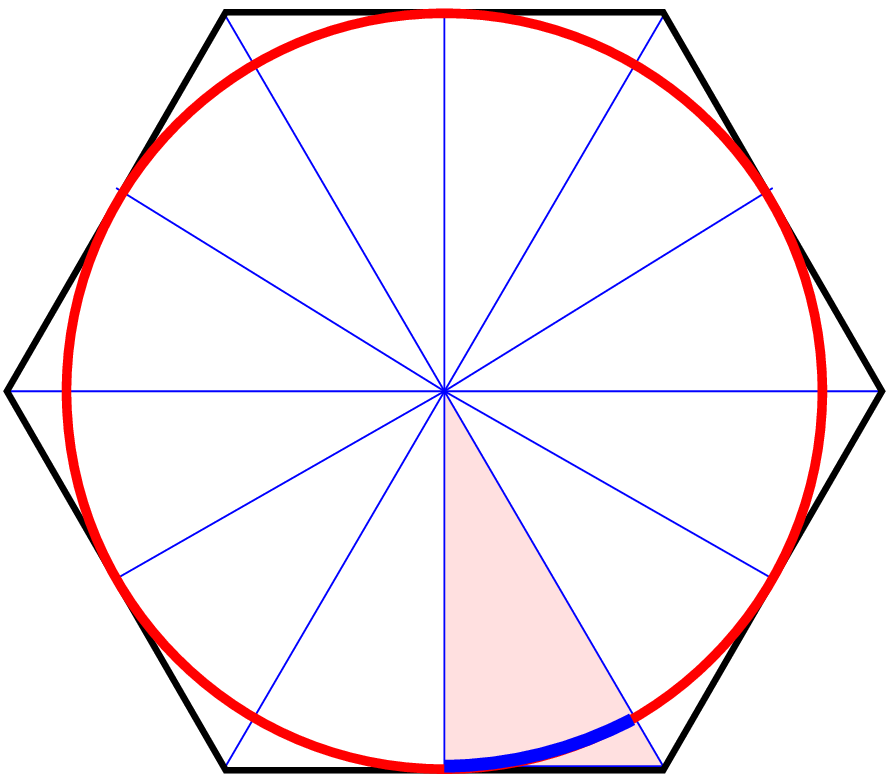}  
    \end{minipage}
\end{center}
\caption{\small Left: the arctic curve of the uniform 6V-DWBC/ASM model (red and dashed black curves inside the square) and the analytic continuation of the NE branch (blue ellipse inscribed in a hexagon); the arrow indicates the shear mapping the latter to the SE branch (dashed black curve). Right: the arctic curve for TSSCPP (blue curve inside the pink triangular domain), and that for the lozenge tiling of the regular hexagon obtained by multiple reflections (red circle).}
\label{fig:asmgen}
\end{figure}

The results for the 6V' and 20V-DWBC3 models of the present paper complement earlier results on the 6V-DWBC \cite{COSPO,DFLAP} and 20V-DWBC1,2 models \cite{BDFG}, which display non-analytic arctic curves as well. The key to the non-analyticity can be traced back to the symmetries of the systems, allowing for determining their SE branch in terms of the NE branch of another system obtained by applying an involution $*$ to its weights together with a geometric transformation of the plane involving a reflection and possibly a shear. Note also that our results for the 6V' model also apply to the more general case of U-turn boundary 6V model, which is expected to share the {\it same} arctic curves.

Finally, let us compare the situation of the 20V-DWBC3 model to that of ASMs, with the known enumeration formula:
$$ ASM_n= \prod_{j=0}^{n-1} \frac{(3j+1)!}{(n+j)!} \ ,$$
a formula strikingly reminiscent of \eqref{20vpf}. The analogy goes further: we have found that the NE/SE portion of arctic curve for large uniform 20V-DWBC3 configurations is piecewise algebraic, the SE portion being equal to a shear transformation of the analytic continuation of the NE portion (see Fig.~\ref{fig:arctic5b} right). The same holds for ASMs, whose NE/SE portion of arctic curve is piecewise elliptic, the SE portion being obtained by a shear transformation of the ellipse containing the NE portion (see Fig.~\ref{fig:asmgen} left). The algebraic curve \eqref{algebraic} clearly plays a role similar to this ellipse.

Finally, recall that ASMs of size $n$ are also in same number as TSSCPP \cite{tsscpp}, which can be viewed as rhombus tilings of a regular hexagon with edges of length $2n$, which satisfy all the symmetries of the hexagon. The triangular fundamental domain under these symmetries occupies $\frac{1}{12}$-th of the hexagon, which is recovered by successive reflections (see Fig.~\ref{fig:asmgen} right for an illustration). As such, the arctic curve for TSSCPP was argued in \cite{DFR} to be identical to that of the full hexagon without any symmetry constraint, i.e. the inscribed circle in the uniform case. There is a clear analogy between TSSCPP and the Domino Tilings of the Aztec triangle, in which the algebraic curve \eqref{algebraic} plays the role of this circle.

\begin{figure}
\begin{center}
\includegraphics[width=7cm]{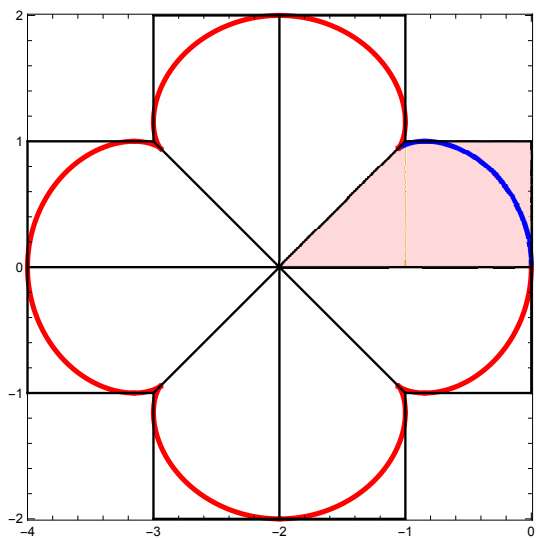}
\end{center}
\caption{\small The expected arctic curve of the uniformly weighted Domino Tiling problem of Ciucu's cruciform region (in red), is obtained as the multiple reflection of the arctic curve of the Aztec triangle (in blue). The resulting clover-shaped curve is the analytic continuation to the whole plane of the blue portion.}
\label{fig:cloverdt}
\end{figure}

Recently Ciucu \cite{CIUconj} noticed a relation between the number of domino tilings of the Aztec triangle $\cT_n$ and that of a cruciform domain $C^{n-1,n,n,n-2}_{2n-1,2n-1}$, obtained by ``symmetrization", namely a
succession of ``reflections" of the original Aztec triangle. We believe that the curve \eqref{algebraic}, which is the analytic continuation of the arctic curve for the triangle, is in fact the complete arctic curve for the rescaled large $n$ cruciform domain. As visual evidence, we have displayed both curves in Fig.~\ref{fig:cloverdt}, together with the original asymptotic Aztec triangle (shaded in pink) and its 7 reflected copies.
Fig.~\ref{fig:cloverdt} suggests that, similarly to the TSSCPP case, the Aztec triangle could be the fundamental domain for symmetric tilings of a crosslike shaped domain probably similar to that considered by Ciucu.

\bibliographystyle{amsalpha} 

\bibliography{ArcticNew}
\end{document}